\documentclass[11pt,a4paper]{article}

\newcommand{\Sect}{{Sect.~}}

\usepackage{amsmath, amsfonts, amssymb, amsthm}
\usepackage{mathrsfs}
\usepackage{physics}
\usepackage{braket}
\usepackage{enumitem}
\usepackage[mode=buildnew]{standalone}
\usepackage{tikz} 
\usetikzlibrary{matrix,arrows,patterns,cd}
\usepackage{caption}
\usepackage{subcaption}
\usepackage{float}
\usepackage{graphicx}
\usepackage{accents}
\usepackage[margin=2.2cm]{geometry}

\newtheorem{mydef}{Definition}[section]

\newtheorem{theo}[mydef]{Theorem}
\newtheorem{lem}[mydef]{Lemma}

\newtheorem{corol}[mydef]{Corollary}

\usepackage{scalerel}
\usetikzlibrary{svg.path}

\definecolor{orcidlogocol}{HTML}{A6CE39}
\tikzset{
  orcidlogo/.pic={
    \fill[orcidlogocol] svg{M256,128c0,70.7-57.3,128-128,128C57.3,256,0,198.7,0,128C0,57.3,57.3,0,128,0C198.7,0,256,57.3,256,128z};
    \fill[white] svg{M86.3,186.2H70.9V79.1h15.4v48.4V186.2z}
                 svg{M108.9,79.1h41.6c39.6,0,57,28.3,57,53.6c0,27.5-21.5,53.6-56.8,53.6h-41.8V79.1z M124.3,172.4h24.5c34.9,0,42.9-26.5,42.9-39.7c0-21.5-13.7-39.7-43.7-39.7h-23.7V172.4z}
                 svg{M88.7,56.8c0,5.5-4.5,10.1-10.1,10.1c-5.6,0-10.1-4.6-10.1-10.1c0-5.6,4.5-10.1,10.1-10.1C84.2,46.7,88.7,51.3,88.7,56.8z};
  }
}

\newcommand\orcidicon[1]{\href{https://orcid.org/#1}{\mbox{\scalerel*{
\begin{tikzpicture}[yscale=-1,transform shape]
\pic{orcidlogo};
\end{tikzpicture}
}{|}}}}
\usepackage{hyperref}

\DeclareMathOperator{\Prob}{Prob}

\DeclareMathOperator{\openone}{1\!\!\!\!1}
\DeclareMathOperator{\Ran}{Ran}
\DeclareMathOperator{\supp}{supp}
\DeclareMathOperator{\Weyl}{Weyl}

\usepackage[numbers]{natbib}

\title{Asymptotic measurement schemes for every observable of a quantum field theory\hspace{2mm}}
\author{Christopher J. Fewster$^{1,a}$ \orcidicon{0000-0001-8915-5321}, Ian Jubb$^{2,b}$ \orcidicon{0000-0001-7339-2058}, Maximilian~H.~Ruep$^{1,c}$ \orcidicon{0000-0001-6866-4506}\vspace{4mm}\\
{\small $^1$Department of Mathematics,}\\
{\small University of York, Heslington, York YO10 5DD, United Kingdom}\vspace{2mm}\\
{\small $^2$School of Theoretical Physics,}\\
{\small Dublin Institute for Advanced Studies, 10 Burlington Road, Dublin 4, Ireland}\vspace{4mm}\\
{\footnotesize ~$^a$~\href{mailto:chris.fewster@york.ac.uk}{chris.fewster@york.ac.uk}~,~$^b$~\href{mailto:ijubb@stp.dias.ie}{ijubb@stp.dias.ie}~,~$^c$~\href{mailto:maximilian.ruep@york.ac.uk}{maximilian.ruep@york.ac.uk}}}
\date{\today}

\begin{document}

\maketitle

\begin{abstract}
In quantum measurement theory, a measurement scheme describes how an observable of a given system can be measured indirectly using a probe. 
The measurement scheme involves the specification of a probe theory, an initial probe state, a probe observable and a coupling between the system and the probe, so that a measurement of the probe observable after the coupling has ceased reproduces (in expectation) the result of measuring the system observable in the system state. Recent work has shown how local and causal measurement schemes may be described in the context of model-independent quantum field theory (QFT), but has not addressed the question of whether such measurement schemes exist for all system observables. Here, we present two treatments of this question. The first is a proof of principle which provides a measurement scheme for every local observable
of the quantized real linear scalar field if one relaxes one of the conditions on a QFT measurement scheme by allowing a non-compact coupling region. Secondly, restricting to compact coupling regions, we explicitly construct \emph{asymptotic} measurement schemes for every local observable of the quantized theory. More precisely, we show that for every local system observable $A$ there is an associated collection of measurement schemes for system observables that converge to $A$. All the measurement schemes in this collection have the same fixed compact coupling zone and the same processing region. The convergence of the system observables holds, in particular, in GNS representations of suitable states on the field algebra or the Weyl algebra. In this way, we show that every observable can be asymptotically measured using locally coupled probe theories.
\end{abstract}

\vspace{2pc}
\noindent{\it Keywords}: induced observables, impossible measurements, local probes, local measurement schemes, asymptotic measurement schemes

\section{Introduction}

The theoretical model of a system in quantum theory has two main ingredients: observables and states. Together, they enable the calculation of predictions for the results of potential measurements of specified observables in specified states -- at least in the statistical sense of predicting expectation values and higher moments for the results from an ensemble of identical measurement runs. What is purposefully left out by this minimal description is an account of the actual measurement process in which (individual) results are read out to sufficient accuracy using apparatus coupled to the system. This is the task of quantum measurement theory (QMT)~\cite{Busch2016}. 

While currently there seems to be no full explanation of the measurement process in reach, QMT has achieved an understanding of individual steps along the \emph{measurement chain}, i.e., the process by which information about the quantum system may be transferred to a \emph{quantum} probe and in principle be extracted by a measurement of probe observables. Concretely, for a given system $\mathcal{S}$ one considers an auxiliary quantum structure called the \emph{probe} $\mathcal{P}$ together with an initial probe state $\sigma$, a ``pointer'' or probe observable $B$ and $\mathcal{C}$, a ``measurement coupling'' to the system, see Chapter 10 in~\cite{Busch2016}. The idea is that a measurement of $B$ after the coupling has been removed will yield information about a system observable\footnote{In the following we will distinguish between a general \emph{element} of a system or probe algebra and an \emph{observable}, which is a \emph{Hermitian} element.} 
which we say has been ``induced'' by $B$ and denote by $\varepsilon_{\sigma}^\mathcal{C}(B)$. More technically,
one requires that the expectation value of $\varepsilon_{\sigma}^\mathcal{C}(B)$ in any initial system state $\omega$ should equal the expectation value of the observable $\openone \otimes B$ of the combined system-probe structure $\mathcal{S}\otimes \mathcal{P}$ after the coupling, if its state beforehand was $\omega\otimes\sigma$. The collection of probe $\mathcal{P}$, probe observable $B$, initial probe state $\sigma$ and coupling $\mathcal{C}$ forms a (Hermitian)\footnote{In the formal definition of a measurement scheme we will allow general, i.e., not necessarily Hermitian elements $B$ of the probe algebra, which may induce non-Hermitian elements of the system algebra. A measurement scheme for a Hermitian element using a Hermitian $B$ will be called \emph{Hermitian}.} \emph{measurement scheme} for the system observable $\varepsilon_{\sigma}^\mathcal{C}(B)$.

It is important to emphasize that this description neither contains nor requires an explanation of \emph{how} to extract information from the probe. What needs to be put in is the standard working assumption that information (i.e., the expectation value of $\openone \otimes B$) can be observed \emph{somehow}. 

A central issue is to determine what observables of a quantum system can be measured in such a way. One can of course scan through all possible probe observables and states to classify the system observables that are induced by a given coupled probe. However, it is certainly \emph{more} interesting to ask whether \emph{every} system observable can be measured by \emph{some} probe, which is a somewhat inverse problem. In the quantum mechanical setting it turns out that by allowing arbitrary unitary interactions between system and probe, the Stinespring dilation theorem implies ``the first fundamental theorem of the quantum theory of measurement''~\cite{Busch2016}, i.e., that for every system observable there exists a measurement scheme, see Theorem 10.1 in~\cite{Busch2016}. We emphasize that the Stinespring theorem does not address the question of whether the required unitary interaction is physically reasonable.

While well-understood in non-relativistic quantum mechanics, measurement schemes have only recently been discussed in the context of relativistic quantum fields on a possibly curved globally hyperbolic spacetime $M$ in~\cite{fewster2018quantum}, which we call the FV framework (after the authors of~\cite{fewster2018quantum}). Crucially, the fundamental principle of the FV framework that interactions between system and probe quantum \emph{fields} should be \emph{local} puts restrictions on the possible couplings and yields fully local and covariant measurement schemes for local system observables.
The physically reasonable constraints on the possible couplings in the FV framework and the consequent lack of suitable dilation-type results raises the important question whether it still holds true that every system observable of a quantum field can be measured by a \emph{local} measurement scheme in the FV framework.

This paper will address and answer that question for a system consisting of a quantized linear real scalar field, described either in terms of a $*$-algebra generated by smeared fields, or using the Weyl $C^*$-algebra quantization. We first discuss an explicit coupling to a probe theory that provides a measurement scheme for every local algebra element. However, this example is not fully satisfactory, because the coupling region is non-compact (in spacetimes with non-compact Cauchy surfaces) and therefore does not fully comply with the FV framework; nonetheless, it may be viewed as a proof of principle. In order to address the general situation with compact coupling regions, we then introduce the notion of a (Hermitian) \emph{asymptotic} measurement scheme for an observable $A$, i.e., a collection of measurement schemes for the system observables $\varepsilon_{\sigma_\alpha}^{\mathcal{C}_\alpha}(B_\alpha)$ such that $\varepsilon_{\sigma_\alpha}^{\mathcal{C}_\alpha}(B_\alpha)$ converges to $A$ with respect to a topology on $\mathcal{S}$ and thus provides a way of measuring $A$ to arbitrary precision. We will prove that there is an asymptotic measurement scheme for every local system element. In fact, we show that there is a \emph{Hermitian} asymptotic measurement scheme for every local system \emph{observable}, which equivalently shows that a dense set of local system observables have Hermitian measurement schemes. More concretely for every precompact\footnote{Here, a precompact subset is one that has a compact closure.} region $N \subseteq M$, every system observable localizable in $N$ and every $L \subseteq M\setminus J^-(\overline{N})$ whose domain of dependence contains $N$, there exists a Hermitian asymptotic measurement scheme for $A$ with coupling in $N$ and processing region $L$ (see Fig.~\ref{fig:measurement_scheme_setup} for an illustration of this spacetime setup). Here a processing region is a region in which an experimenter needs to have control over their probe in order to read out the desired information\footnote{This concept of a processing region was introduced in~\cite{Ruep2021} in the context of entanglement harvesting.}. The convergence holds in particular in the strong$^*$ operator topology of the GNS representation of any quasi-free state with distributional two-point function.

\begin{figure}[ht]
    \centering
    \includegraphics[width=0.75\textwidth]{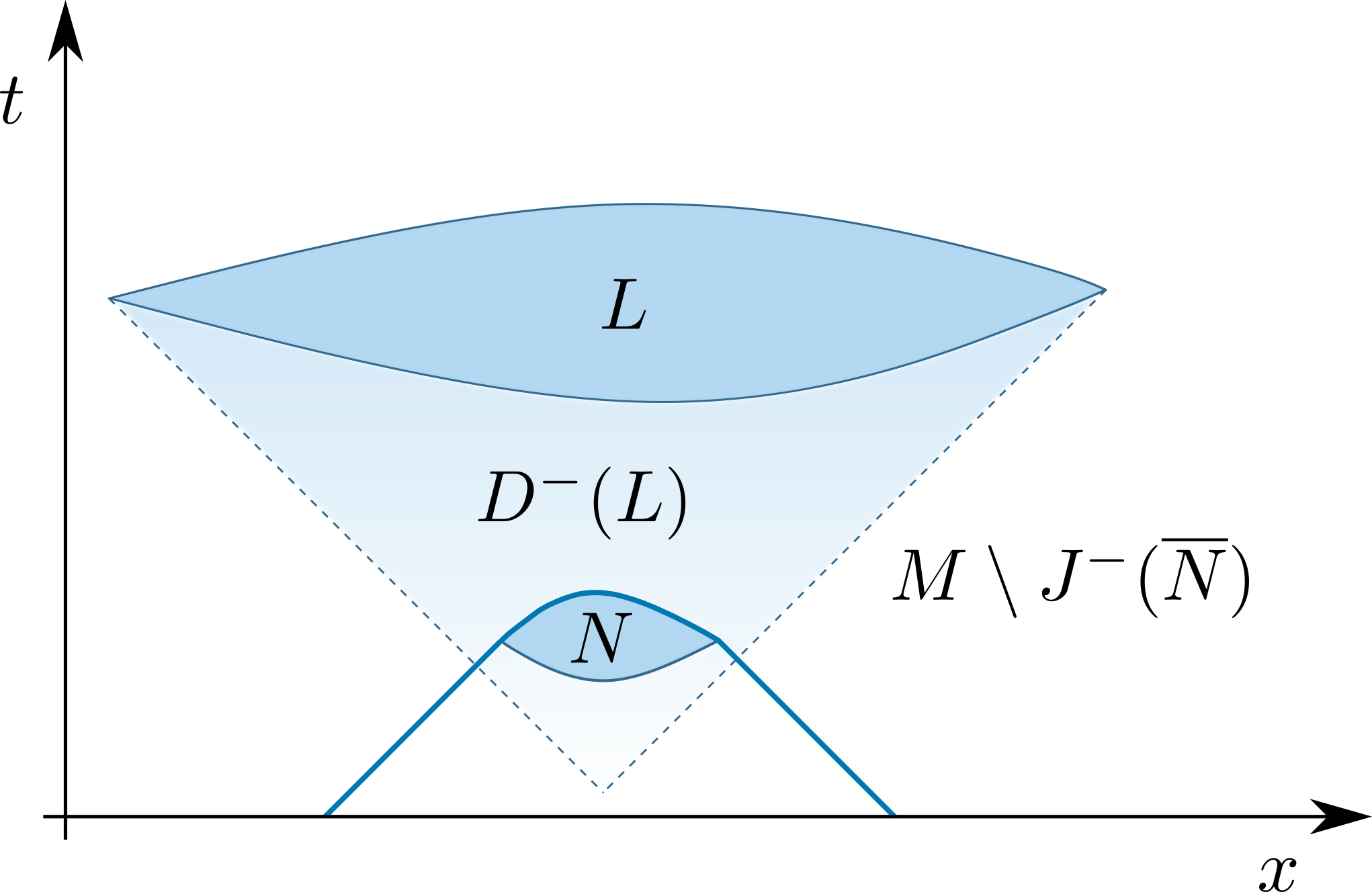}
    \caption{$1+1$-dimensional illustration of our spacetime setup. To measure a system observable in $N$ we couple the probe to the system in that region, and measure some probe observable in the processing region $L$. The dotted line illustrates the past domain of dependence of $L$, denoted as $D^-(L)$, which must contain $N$. $L$ must also not be to the past of the closure of $N$ (that is, $L\subseteq M\setminus J^-(\overline{N})$). In the figure, $M\setminus J^-(\overline{N})$ consists of those points that are above the solid line.}
    \label{fig:measurement_scheme_setup}
\end{figure}\

Let us now turn to the intuition behind our result, which is most easily exemplified by considering the field algebra. For every local observable of a linear real scalar system field $\varphi_\mathcal{S}$ we will construct a sequence of inducible observables that converges (in a specified rigorous sense) to the desired system observable. For definiteness, let us consider a system observable given as a smeared field operator $\varphi_\mathcal{S}(f)$ for a real-valued smooth function $f$ compactly supported in $N$. Let $\varphi_\mathcal{P}$ be a linear real scalar probe field, with initial state $\sigma$, and consider the coupled theory given by the following quadratic Lagrangian density
\begin{equation}
    \begin{aligned}
    \mathcal{L}_\mathcal{S} + \mathcal{L}_\mathcal{P} - {\lambda}\rho \varphi_\mathcal{S} \varphi_{\mathcal P},
    \end{aligned}
    \label{eq_lagrangians_intro}
\end{equation}
where $\rho$ is a real-valued smooth coupling function compactly supported in $N$ and $\lambda$ is a coupling constant; $\mathcal{L}_\mathcal{S}$ and $\mathcal{L}_\mathcal{P}$ being the Lagrangian densities of the system and probe fields individually. Let $h$ be a smooth function compactly supported in $L$, and so outside the causal past of the support of $\rho$. We will show that for a $\lambda$-dependent probe observable $ B_\lambda=\varphi_\mathcal{P}(h/\lambda) +  c_{\sigma,\lambda}\openone$, where $c_{\sigma,\lambda}$ are explicitly known real numbers, the induced system observable is given by
\begin{equation}
    \varepsilon_{\sigma}^{ \mathcal{C}_\lambda }\qty(B_\lambda) = \varphi_\mathcal{S}(-\rho E_P^- h) + \mathcal{O}(\lambda^2) \; \overset{\lambda \to 0}{\longrightarrow} \;  \varphi_\mathcal{S}(-\rho E_P^- h),
\end{equation}
where $E_P^-$ is the advanced Green operator of the probe theory. Given any test function $f$, we will 
reverse-engineer $\rho$ and $h$ so that $\varphi_\mathcal{S}(f)=\varphi_\mathcal{S}(-\rho E_P^- h)$,  whereupon the limit $\varepsilon_{\sigma}^{ \mathcal{C}_\lambda}\qty(B_\lambda)\to \varphi_S(f)$ exists in a specific rigorous sense, thus showing that any smeared field may be approximated by a sequence of inducible observables. 

To describe this mathematical procedure in physical terms: one first tunes the apparatus and pointer observables (i.e., a probe theory and a probe observable respectively) to the desired system observable of interest, i.e., one chooses the coupling function $\lambda \rho$ and the probe test function $h/\lambda$ dependent on the observable $\varphi_\mathcal{S}(f)$ such that $\varphi_\mathcal{S}(-\rho E_P^- h) = \varphi_\mathcal{S}(f)$, then one lets $\lambda$ decrease, i.e., one reduces the coupling strength while simultaneously increasing the ``sensitivity'' of the apparatus.

As it turns out, there is a precise mathematical way of quantifying the efficiency of an asymptotic measurement scheme relative to the effort of increasing the apparatus' sensitivity. The effort of the described asymptotic measurement scheme for $\varphi_\mathcal{S}(f)$ can be quantified by the square root of the variance of the probe observables $B_\lambda$ in the state $\sigma$, which \emph{diverges} as $\lambda \to 0$. Concretely, it diverges as $\lambda^{-1}$, while $\varphi_\mathcal{S}(f) - \varepsilon_{\sigma}^{ \mathcal{C}_\lambda}\qty(B_\lambda) = \mathcal{O}(\lambda^2)$. This motivates us to introduce the \emph{order} of an asymptotic measurement scheme given by the rate of convergence of $\varphi_\mathcal{S}(f) - \varepsilon_{\sigma}^{ \mathcal{C}_\lambda}\qty(B_\lambda)$ relative to the rate of divergence of $B_\lambda$. The present Hermitian asymptotic measurement scheme using a single probe fields is hence of order $2$, and we will show that every $\varphi_\mathcal{S}(f)$ admits an asymptotic measurement scheme of any arbitrary even order $2k$. We achieve this by utilizing $k$ probe fields.

Moving on, we show that the coupling in Eq.~\eqref{eq_lagrangians_intro} and an appropriate choice of probe elements allows one to construct asymptotic measurement schemes for any power $\varphi_\mathcal{S}(f)^n$ as well. Furthermore, by using $k$ probe fields, these measurement schemes can be combined to form an asymptotic measurement scheme for an arbitrary $k$-fold complex linear combination of $\varphi_\mathcal{S}(f_j)^{n_j}$. Allowing arbitrary $k$ and using multilinear polarization, we hence get asymptotic measurement schemes for every element in the field algebra. In fact, we construct Hermitian asymptotic measurement schemes for every observable. Similar arguments apply to ``exponentiated fields'', i.e., Weyl generators, complex linear combinations thereof and ultimately also their $C^*$-closure, i.e., the full Weyl algebra. 

We therefore show the existence of (Hermitian) asymptotic measurement schemes for every (Hermitian) element of the field algebra and the Weyl algebra, supporting the interpretation of Hermitian elements as local \emph{observables}.\footnote{Note that non-Hermitian elements of either the field or the Weyl algebra are certainly not induced by \emph{Hermitian} probe elements. However, a Hermitian element could be induced by a non-Hermitian element.} Let us emphasize that it was recently shown in~\cite{bostelmann2020impossible} that measurement schemes in the FV framework, and hence also the measurement schemes in the present manuscript, are \emph{causal}. In particular, the state update rules following non-selective\footnote{Selective measurements require joint post-selection of the experimental results of all experimenters and are hence causal \emph{by fiat}.} measurements do not result in superluminal signalling. Moreover, they yield consistent results if some or all of the individual measurements are combined and
also do not depend on a specific causal ordering of the measurements if more than one is possible.
The compatibility of measurement with causality is a non-trivial result, as arguments put forward long ago by Sorkin~\cite{sorkin1993impossible} suggest that many conceivable local quantum channels on quantum fields are in conflict with causality, in particular state-update rules associated to ideal measurements~\cite{Beckman2002}. Sorkin's argument has since been refined in~\cite{Borsten2021,jubb2021causal}, where general conditions were given for a local quantum channel to be causally consistent. In~\cite{jubb2021causal} several standard examples of local quantum channels were considered in real scalar QFT, including action on the state by a local unitary, and it was explicitly shown that `most' of these standard quantum channels do indeed violate causality. Consequently, they cannot be the update rules for local measurement schemes of the FV type.\footnote{In general, exactly repeatable updates for continuous observables cannot be realized by measurement schemes in any case, see~\cite{OkamuraOzawa2016} (in particular the penultimate paragraph of \Sect~V therein).}

These results provide a sharp perspective on the interpretation of elements of local algebras in algebraic quantum field theory. According to the original interpretation of Haag and Kastler, elements of a local algebra $\mathcal{A}(R)$ are interpreted ``as representing physical operations performed in the region'' $R$~\cite{HaagKastler1964}. Here, the operation associated with element $A\in\mathcal{A}(R)$ and $\omega(A^*A)>0$ 
is the quantum channel
\begin{equation}
    \omega\mapsto \omega_A, \qquad \omega_A(B)= \frac{\omega(A^*BA)}{\omega(A^*A)}.
\end{equation}
However, for many local algebra elements, $A$, the above channel is acausal and cannot be obtained by physically realizable processes. See for instance the explicit example in \Sect III.~C. in~\cite{jubb2021causal}.
Such quantum channels, then, cannot be interpreted as the mathematical description of any physical process. Going further, for many local Hermitian elements, $A$, the `standard' quantum channels associated to them, (such as the appropriate action of the unitary $e^{iA}$, or projective or Gaussian measurements) are acausal as well, see \Sect IV. E. in~\cite{jubb2021causal}.

By contrast the results of this paper show that every Hermitian local algebra element is observable via a Hermitian \emph{asymptotic} FV measurement scheme in a causally consistent manner\footnote{This viewpoint is also reinforced by~\cite{jubb2021causal}, where Gaussian/weak measurements of smeared fields are used to extract expectation values of many Hermitian elements of $\mathcal{A}(R)$ in a causal way.}. Therefore, it seems to us more appropriate to regard $\mathcal{A}(R)$ as an algebra of (or rather generated by) local observables than an algebra of local operations. This is by no means to deny that the operations associated with \emph{physical}, i.e., non-signalling, processes can have an important role in the theory -- for instance, 
the recent $C^*$-approach to interacting QFTs of Buchholz and Fredenhagen~\cite{BuchholzFredenhagen2020} is based exactly on algebras generated by causal local unitary operators associated with local interaction Lagrangians. Rather, we are concerned with what the interpretation of a \emph{general} element of a local algebra should be.

The detailed structure of our paper is as follows.
In \Sect\ref{sec_asympt_ms} we give more detail about the FV framework and introduce asymptotic measurement schemes. As a proof of principle, measurement schemes (with possibly non-compact coupling zone) for every observable of a massless free scalar field are constructed in \Sect\ref{sec:proof_of_p}. The main ideas for the construction of asymptotic measurement schemes for the field and the Weyl algebra are introduced in \Sect\ref{sec:classical} for \emph{classical fields}, and then applied to the respective quantized versions in \Sect\ref{sec:asymp_m_schemes_field_algebra} and \Sect\ref{sec_qams_Weyl}. In \Sect\ref{sec_variance} we discuss the physical interpretation of the approximation procedure, linking our notion of effort to the resources required to make a measurement to specified accuracy before we conclude in \Sect\ref{sec:discussion}.

\section{Asymptotic measurement schemes in the FV framework}\label{sec_asympt_ms}

As already mentioned, the FV framework implements measurement schemes in relativistic QFT, where both the system $\mathcal{S}$ as well as the probe $\mathcal{P}$ are QFTs, possibly generated by several quantum fields. The coupling between the system and probe is constrained in two ways. First, it is confined to a compact \emph{coupling zone} $K$ of spacetime, acknowledging that
measurements are bounded both in time and space. Second, the coupled theory  $\mathcal{C}$ of system and probe should itself be a \emph{bona fide} QFT. Under these circumstances, we say that the \emph{coupled combination} $\mathcal{C}$ is a coupled variant of the \emph{uncoupled combination} $\mathcal{S} \otimes \mathcal{P}$. (In the case of classical theories -- which we will also discuss below -- the uncoupled combination is described by a direct sum.)

Since the interaction is only active in the compact coupling zone $K$, there are natural `in' and `out' regions -- given by the complements of the causal future and past of $K$ respectively -- in which
the coupled theory $\mathcal{C}$ may be identified with the uncoupled combination $\mathcal{S} \otimes \mathcal{P}$. These identifications can be used to define a scattering automorphism $\Theta$ of $\mathcal{S}\otimes \mathcal{P}$ with the following action: using the future identification (associated with the `out' region), an observable $X$ of the uncoupled combination $\mathcal{S}\otimes \mathcal{P}$ is mapped to an observable of the coupled theory $\mathcal{C}$, which is then mapped back to $\mathcal{S}\otimes\mathcal{P}$ using the past identification (associated with the `in' region) to give $\Theta X$. Put differently, the action of $\Theta$ is  morally speaking the adjoint action of a putative unitary scattering matrix arising from M\o ller operators. 

Let us now sketch the construction of the induced observables from~\cite{fewster2018quantum}. Consider an experiment in which the system and probe are prepared independently in states $\omega$ and $\sigma$ respectively, in the `in' region before the coupling takes effect, and a probe observable $B$ is measured in the `out region', i.e., after the coupling has ceased. To calculate the expected outcome, we first express
$B\in \mathcal{P}$ as the observable $\openone\otimes B\in \mathcal{S}\otimes\mathcal{P}$, which in turn may be identified with an observable  $\widetilde{B}\in\mathcal{C}$ using the future identification. The expected experimental outcome is the expectation of $\widetilde{B}$ in the state $\underaccent{\tilde}{\omega}_\sigma$ obtained from $\omega\otimes\sigma$ using the past identification, and is given by the formula 
\begin{equation}
\underaccent{\tilde}{\omega}_\sigma(\widetilde{B})=(\omega\otimes\sigma)(\Theta\openone\otimes B).
\end{equation}
We may interpret this experiment as a measurement scheme for a system observable
$\varepsilon_{\sigma}^{\mathcal{C}}(B)$ with the property
\begin{equation}
\omega(\varepsilon_{\sigma}^{\mathcal{C}}(B)) = \underaccent{\tilde}{\omega}_\sigma(\widetilde{B})
\end{equation}
and which is given explicitly by 
\begin{equation}
\varepsilon_{\sigma}^{\mathcal{C}}(B) = \eta_\sigma(\Theta \; {\openone} \otimes B),
\label{eq_induced_obs}
\end{equation}
where $\eta_\sigma$, is determined by the formula $\eta_\sigma(A\otimes B) = \sigma(B)A$, linearity  and (where relevant) continuity,
so that $\omega(\varepsilon_{\sigma}^{\mathcal{C}}(B))=
(\omega \otimes \sigma) (\Theta \openone \otimes B) =\underaccent{\tilde}{\omega}_\sigma(\widetilde{B})$ for every system state $\omega$. 
The map $\varepsilon_{\sigma}^{\mathcal{C}}$ describing the observables induced by this coupling may be extended to all probe elements as a completely positive, unit-preserving linear map from $\mathcal{P}$ to $\mathcal{S}$. Although this scheme is very general, it is amenable to concrete calculations in simple cases. 

The central issue explored in this paper is to determine the set of system observables that can be measured by FV measurement schemes, allowing for different probe preparation states and indeed different probe theories and couplings. The ideal situation would be if all of the system observables\footnote{It might be the case that one wishes to regard as observables only those Hermitian elements that are also invariant under a global gauge group. It has been shown in~\cite{fewster2018quantum} that (as one would hope) if the coupling and probe preparation state are gauge-invariant then all elements induced by observables, i.e., Hermitian and gauge invariant elements of the probe, are themselves observables. One would not necessarily expect a theory of measurement to extend to unobservable quantities!} admit a Hermitian measurement scheme, i.e., are \emph{inducible} by observables. As mentioned above, we cannot invoke the Stinespring dilation theorem, because it is crucial that the probe and coupled theories are also local QFTs. Therefore we will content ourselves with the situation in which the system, probe, uncoupled and coupled theories are all linear quantum field theories described by quadratic Lagrangians. In particular, the coupling between system and probe is mediated by quadratic interaction terms in the Lagrangian. Nevertheless such simple interactions turn out to be sufficient for our purposes.

In the following we will show that for any local element $A$ of the system theory, there is a family of measurement schemes whose induced elements converge to $A$.
In particular, the set of inducible elements is dense in the system theory. To make this precise we introduce the concept of an \emph{asymptotic measurement scheme}.\footnote{The term ``asymptotic measurement scheme'' also appears in~\cite{KiukasLahti2008}. Their notion however is different from the asymptotic measurement schemes here.} To minimise notation, we will represent a measurement scheme for an element $A\in \mathcal{S}$ by a triple $H=(\mathcal{P},\varepsilon_\sigma,B)$ where $\mathcal{P}$ is a probe theory, $\varepsilon_{\sigma}$ is the induced observable map\footnote{Note the slight abuse of terminology here: Not every element in the range of the ``induced observable map'' is Hermitian!} emerging from a coupled combination of $\mathcal{S} \otimes \mathcal{P}$ and some probe state $\sigma$, and $B\in \mathcal{P}$ is a probe element, so that $A=\varepsilon_\sigma(B)$; we thus suppress the coupling zone $K$, the coupled combination $\mathcal{C}$ of $\mathcal{S} \otimes \mathcal{P}$, the identification maps and the resulting scattering map $\Theta$, which defines $\varepsilon_\sigma$. The formal definition of an asymptotic measurement scheme is now as follows.

\begin{mydef}[$\tau$-asymptotic measurement scheme]
Let $\mathcal{S}$ be a system theory equipped with a topology $\tau$. 
An \emph{asymptotic measurement scheme} for $A \in  \mathcal{S}$ with respect to $\tau$ 
(also called a $\tau$-asymptotic measurement scheme) is a collection $(H_\alpha)_{\alpha \in J}$, where $J$ is a directed set and each 
$H_\alpha=\qty(\mathcal{P}_\alpha, \varepsilon_{\alpha, \sigma_\alpha}, B_\alpha)$
is a measurement scheme for $\varepsilon_{\alpha, \sigma_\alpha}(B_\alpha) \in \mathcal{S}$, such that
\begin{equation}
    \varepsilon_{\alpha, \sigma_\alpha}(B_\alpha) \to A
\end{equation}
with respect to $\tau$. We say that an asymptotic measurement scheme $(H_\alpha)_{\alpha \in J}$ has \emph{coupling in $N$} and \emph{processing region $L$}, if for every $\alpha \in J$ the coupling zone of $H_\alpha$ is contained in $N$ and $B_\alpha \in \mathcal{P}_\alpha(L)$. We say that $(H_\alpha)_{\alpha \in J}$ is \emph{Hermitian} if for every $\alpha:$ $B_\alpha^* = B_\alpha$.
\label{def_ams}
\end{mydef}

Notice that the convergence above is in general understood in terms of nets, but of course includes the simpler situation of sequential convergence. 

A few remarks are in order: \begin{enumerate} 
    \item An element admiting an [asymptotic] measurement scheme may be called [asymptotically] inducible.
    \item The subscript of $\varepsilon_{\alpha, \sigma_\alpha}$ emphasises that it is a map from $\mathcal{P}_\alpha$ to $\mathcal{S}$ that depends on the state $\sigma_\alpha$ on $\mathcal{P}_\alpha$.
    \item It is clear that the definition of asymptotic measurement schemes in terms of limits of (appropriate) measurement schemes is not restricted to algebraic quantum field theory but equally applies to algebraic quantum theory more generally.
    \item If the topology $\tau$ is a vector space topology, i.e., if addition and scalar multiplication are continuous, then, by linearity of $\varepsilon_{\alpha, \sigma_\alpha}$, we see that whenever $\qty(\mathcal{P}_\alpha, \varepsilon_{\alpha, \sigma_\alpha}, B_\alpha)$ is an asymptotic measurement scheme for $A$, and $\qty(\mathcal{P}_\alpha, \varepsilon_{\alpha, \sigma_\alpha}, B'_\alpha)$ is an asymptotic measurement scheme for $A'$, 
    then for every $c \in \mathbb{C}$, $\qty(\mathcal{P}_\alpha, \varepsilon_{\alpha, \sigma_\alpha}, B_\alpha + c B'_\alpha)$ is an asymptotic measurement scheme for $A+c A'$. Notice that we require that both asymptotic schemes here share the same probes, couplings and preparation states, so this observation does not by itself imply that the set of \emph{all} inducible elements is a subspace.
    \item\label{real_part_ams} Similarly, if the topology $\tau$ is $*$-compatible, i.e., the $*$-operation is continuous with respect to $\tau$, then  whenever $\qty(\mathcal{P}_\alpha, \varepsilon_{\alpha, \sigma_\alpha}, B_\alpha)$ is an asymptotic measurement scheme for $A$,
    $\qty(\mathcal{P}_\alpha, \varepsilon_{\alpha, \sigma_\alpha}, B_\alpha^*)$ is an asymptotic measurement scheme for $A^*$, because $\varepsilon_{\alpha, \sigma_\alpha}(B_\alpha^*)=\varepsilon_{\alpha, \sigma_\alpha}(B_\alpha)^*$ by Theorem~3.2 in~\cite{fewster2018quantum}. In particular, if $A=A^*$ admits an asymptotic measurement scheme $\qty(\mathcal{P}_\alpha, \varepsilon_{\alpha, \sigma_\alpha}, B_\alpha)$, then
    $\qty(\mathcal{P}_\alpha, \varepsilon_{\alpha, \sigma_\alpha}, \tfrac{1}{2}(B_\alpha+B_\alpha^*))$ 
    is a Hermitian asymptotic measurement scheme for $A$: \emph{every asymptotically measurable observable is measurable by a net of observables}.
    \item Finally, if $\tau$ is a $*$-algebra topology, i.e., linear combinations, the $*$-operation and product are (jointly) continuous, and in which the cone of positive elements is $\tau$-closed,
    then, if $\qty(\mathcal{P}_\alpha, \varepsilon_{\alpha, \sigma_\alpha}, B_\alpha)$ is an asymptotic measurement scheme for $A$,  
     $\qty(\mathcal{P}_\alpha, \varepsilon_{\alpha, \sigma_\alpha}, B_\alpha^*B_\alpha)$ is an asymptotic measurement scheme for some $C$ with $A^*A\le C$, 
     again using Theorem~3.2 in~\cite{fewster2018quantum}. Here, we remind the reader that induced observable maps are generally not homomorphisms.   
\end{enumerate}

An elementary observation is:
\begin{lem}
Let $\mathcal{S}_m$ (respectively, $\mathcal{S}_a$) be the set of $A\in\mathcal{S}$ such that there is a
measurement scheme (resp., a $\tau$-asymptotic measurement scheme) for $A$. Then $\mathcal{S}_a$ is the closure of $\mathcal{S}_m$ in $\mathcal{S}$. Consequently $\mathcal{S}_a=\mathcal{S}$ if and only if $\mathcal{S}_m$ is dense in $\mathcal{S}$.
\label{lem_closed}
\end{lem}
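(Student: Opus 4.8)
The plan is to establish the set equality $\mathcal{S}_a = \overline{\mathcal{S}_m}$ by proving the two inclusions separately, relying on the standard fact from general topology that, for a subset $S$ of a topological space, the closure $\overline{S}$ consists exactly of those points to which some net in $S$ converges. The observation that makes the lemma almost immediate is that a $\tau$-asymptotic measurement scheme for $A$ is, by Definition~\ref{def_ams}, nothing other than a net $(\varepsilon_{\alpha,\sigma_\alpha}(B_\alpha))_{\alpha\in J}$ of elements of $\mathcal{S}_m$ that $\tau$-converges to $A$, together with a choice of realising measurement schemes $H_\alpha$.

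For the inclusion $\mathcal{S}_a \subseteq \overline{\mathcal{S}_m}$, I would take $A \in \mathcal{S}_a$ together with an asymptotic measurement scheme $(H_\alpha)_{\alpha\in J}$ for it. By definition each $H_\alpha=(\mathcal{P}_\alpha, \varepsilon_{\alpha,\sigma_\alpha}, B_\alpha)$ is a genuine measurement scheme for the element $\varepsilon_{\alpha,\sigma_\alpha}(B_\alpha)$, so every induced element lies in $\mathcal{S}_m$. Since $\varepsilon_{\alpha,\sigma_\alpha}(B_\alpha)\to A$ with respect to $\tau$, the point $A$ is the limit of a net in $\mathcal{S}_m$ and therefore lies in $\overline{\mathcal{S}_m}$.

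For the reverse inclusion $\overline{\mathcal{S}_m}\subseteq \mathcal{S}_a$, I would take $A\in\overline{\mathcal{S}_m}$. By the net characterisation of the closure there is a net $(A_\beta)_{\beta\in I}$ in $\mathcal{S}_m$ with $A_\beta\to A$; concretely one may take $I$ to be the neighbourhood filter of $A$ directed by reverse inclusion and pick $A_U\in U\cap\mathcal{S}_m$ for each open neighbourhood $U$ of $A$. For each $\beta$ the membership $A_\beta\in\mathcal{S}_m$ supplies a measurement scheme $H_\beta=(\mathcal{P}_\beta, \varepsilon_{\beta,\sigma_\beta}, B_\beta)$ with $\varepsilon_{\beta,\sigma_\beta}(B_\beta)=A_\beta$. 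The resulting family $(H_\beta)_{\beta\in I}$ then meets every requirement of Definition~\ref{def_ams}: it is indexed by a directed set, each $H_\beta$ is a measurement scheme for its own induced element, and $\varepsilon_{\beta,\sigma_\beta}(B_\beta)=A_\beta\to A$ in $\tau$. Hence it is a $\tau$-asymptotic measurement scheme for $A$, so $A\in\mathcal{S}_a$.

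Together the two inclusions give $\mathcal{S}_a=\overline{\mathcal{S}_m}$, and the final claim follows at once: by definition $\mathcal{S}_m$ is $\tau$-dense in $\mathcal{S}$ precisely when $\overline{\mathcal{S}_m}=\mathcal{S}$, that is, when $\mathcal{S}_a=\mathcal{S}$. I do not expect any genuine obstacle here, since the whole content is the net-theoretic description of the closure matched against Definition~\ref{def_ams}; the only points requiring mild care are the (harmless) appeals to the axiom of choice when selecting realising measurement schemes and when extracting the approximating net, and the verification that the neighbourhood filter indeed furnishes a legitimate directed index set as demanded by the definition.
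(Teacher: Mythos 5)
Your proof is correct and follows essentially the same route as the paper's: both directions rest on the observation that a $\tau$-asymptotic measurement scheme is precisely a net of measurement schemes whose induced elements converge, combined with the net characterisation of closure and a choice of realising schemes for the approximating net. Your additional remarks on the neighbourhood-filter construction and the use of choice are harmless elaborations of the same argument.
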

\begin{proof}
Suppose $A \in \mathcal{S}_a$. Then, by definition, there exists a net of elements $A_\alpha$ in $\mathcal{S}_m$ that converges to $A$, i.e., $\mathcal{S}_a \subseteq \overline{\mathcal{S}_m}$. Conversely, if $A \in \overline{\mathcal{S}_m}$ then $A=\lim_\alpha A_\alpha$ where $(A_\alpha)_\alpha$ is a net of elements in $\mathcal{S}_m$. Accordingly, we may find a measurement scheme $H_\alpha$ for each $A_\alpha$, whereupon $(H_\alpha)_\alpha$ is a $\tau$-asymptotic measurement scheme for $A$. Hence $\overline{\mathcal{S}_m} \subseteq \mathcal{S}_a$.
\end{proof}

Before we set out to construct Hermitian asymptotic measurement schemes for \emph{every} observable of a linear real scalar field, we first give an explicit example in which every observable admits a \emph{bona fide} Hermitian measurement scheme, albeit by relaxing the condition that the coupling zone be compact.

\section{A proof of principle}
\label{sec:proof_of_p}

To start, we give a simple construction that -- although in general not fully respecting the demands of the FV framework -- nonetheless gives strong reason to believe that inducible observables should form a large subset of the system observables.

Consider a situation in which the system and probe are linear scalar fields of equal mass $m$, obeying the Klein--Gordon equation on a fixed globally hyperbolic spacetime. For the purposes of argument, we assume that there is some way of physically distinguishing the two species of scalar field. The uncoupled combination may be described conveniently as the theory of a complex scalar field
\begin{equation}
    \Phi = \frac{\varphi_\mathcal{S} + i\varphi_\mathcal{P}}{\sqrt{2}}
\end{equation}
also obeying the Klein--Gordon equation
\begin{equation}
   P\Phi:= (\Box +m^2)\Phi = 0
\end{equation}
Let $\chi\in C^\infty(M;\mathbb{R})$ and set $\Psi = e^{-i\chi}\Phi$. It is a standard exercise to show that $\Psi$ satisfies 
\begin{equation}\label{eq:ppQ}
    Q\Psi:= (D^\mu D_\mu + m^2)\Psi = 0
\end{equation}
where $D_\mu = \nabla_\mu + i A_\mu$ and $A_\mu=\nabla_\mu\chi$ is regarded as an external gauge potential. 
The advanced ($-$) and retarded ($+$) Green operators of $P$ and $Q$ are related by
\begin{equation}
    E_Q^\mp F= e^{-i\chi}E_P^\mp e^{i\chi}F
\end{equation}
where $F\in C_c^\infty(M;\mathbb{C})$. 

Suppose, more specifically, that $\chi$ vanishes identically to the future of Cauchy surface $\Sigma^+$, and takes the constant value $\pi/2$ to the past of Cauchy surface $\Sigma^-$. Then the gauge potential $A_\mu$ vanishes to the past of $\Sigma^-$ and the future of $\Sigma^+$, so $P$ and $Q$ agree except in the region between the Cauchy surfaces, $J^+(\Sigma^-)\cap J^-(\Sigma^+)$. With this in mind, we adopt $Q\Psi=0$ as the field equation defining a coupled variant of $P$  with `in' and `out' regions given by $M^\pm = I^\pm(\Sigma^\pm)$, where $I^+(N)$ and $I^-(N)$ are the chronological future and past respectively of a set $N$. Clearly the coupling region is only timelike compact, rather than compact (unless $M$ has compact Cauchy surfaces) -- for this 
reason this model does not fully conform to the FV framework as originally formulated. Nonetheless, it is instructive to pursue it here, because it provides particularly simple results.

It may easily be shown that the scattering operator is given by $\Theta\Phi(F)=- \mathrm{i}\Phi(F)$, where $\Phi(F)$ is a smearing of the complex field against
$F\in C_c^\infty(M;\mathbb{C})$, see Appendix~\ref{sec_appendix_proof_of_p}. From the perspective of the complex field, $\Theta$ is nothing but a particular global $U(1)$ gauge transformation. However, the complex field was introduced only as a convenient technical device to combine two observable real scalar fields, and at this level, the $U(1)$ action is not a gauge symmetry but acts nontrivially on observables; the $U(1)$ symmetry is broken. Indeed, the real scalar fields of the system and probe are transformed by
\begin{equation}
    \Theta \varphi_\mathcal{S}(f) = -\varphi_\mathcal{P}(f), \qquad \Theta\varphi_\mathcal{P}(f) =\varphi_\mathcal{S}(f) 
\end{equation}
for all $f\in C_c^\infty(M;\mathbb{R})$, and, at the level of exponentiated fields $W_\mathcal{S}(f)= e^{\mathrm{i} \varphi_\mathcal{S}(f)}$ and $W_\mathcal{P}(f)= e^{\mathrm{i} \varphi_\mathcal{P}(f)}$, one has
\begin{equation}
    \Theta W_\mathcal{S}(f)\otimes W_\mathcal{P}(h) = W_S(h)\otimes W_\mathcal{P}(-f).
\end{equation}
It is now easy to read off the induced observable map in this model. Specifically, Eq.~\eqref{eq_induced_obs} gives
\begin{equation}\label{eq:epspp}
  \varepsilon_{\sigma}^{\chi}(W_\mathcal{P}(h)) = W_\mathcal{S}(h)
\end{equation}
(the superscript $\chi$ serves to indicate the coupling model chosen) for all $h\in C_c^\infty(M;\mathbb{R})$, and also that
\begin{equation}
    \varepsilon_{\sigma}^{\chi}(
\varphi_\mathcal{P}(f_1)\cdots \varphi_\mathcal{P}(f_n)
    ) =\varphi_\mathcal{S}(f_1)\cdots \varphi_\mathcal{S}(f_n).
\end{equation}
for all $f_1,\ldots,f_n\in C_c^\infty(M;\mathbb{R})$. From these
formulae it follows easily that $\varepsilon_{\sigma}^{\chi}$ induces isomorphisms between the probe and system theories, whether these
are quantized as Weyl algebras (\Sect\ref{sec:Weyl_algebra}) or in terms of smeared fields (\Sect\ref{sec:field_algebra}).

Summarising, we have given an explicit model in which every local observable can be measured via a probe. However, the model is not entirely satisfactory, because it requires (in general) a noncompact coupling zone. Nonetheless, as a proof of concept, it may be taken as an indication that the set of inducible observables forms a large subset of the local system observables. 

In the following we will show that this is indeed the case, even reimposing the compactness of the coupling zone.

\section{Classical asymptotic measurement schemes}\label{sec:classical}

The proofs of our main results for measurement schemes in QFT make essential use
of analogous results for classical scalar fields, which will be set out in this section.

\subsection{Concepts from Lorentzian geometry}

For the convenience of the reader and in order to fix notation, we collect some standard properties of globally hyperbolic spacetimes. Our signature convention is mostly minus, i.e., $(+,-,\dots,-)$. A $1+d$-dimensional spacetime $M$, i.e., a smooth Lorentzian oriented and time-oriented manifold with finitely many connected components, is globally hyperbolic if and only if it contains a Cauchy hypersurface. In this case, it is isometric to $\mathbb{R}\times \Sigma$ with metric $\Omega^2(\mathrm{d}\tau^2 \oplus- h_\tau)$, where $\Omega$ is smooth, nowhere vanishing and $\tau \mapsto h_\tau$ is a smooth family of Riemannian metrics of the $d$-dimensional manifold $\Sigma$, and each $\{\tau\} \times \Sigma$ is a smooth Cauchy surface, see~\cite{bernal2003smooth}. For a subset $N \subseteq M$, we denote by $J^+(N)$ and $J^-(N)$ its causal future and past respectively and by $D(N)$ its domain of dependence or Cauchy development. $N$ is called causally convex if and only if $N = J^+(N) \cap J^-(N)$. Non-empty subsets of a globally hyperbolic spacetime $M$ that are open, causally convex and have finitely many connected components will be called \emph{regions}; when equipped with the inherited metric, orientation and causal structure from $M$, they become globally hyperbolic spacetimes in their own right. 
The causal complement of a subset $K$ is $M\setminus \qty(J^+(K) \cup J^-(K))$, and two regions are described as spacelike separated if one lies in the causal complement of the other. For compact $K$, the sets $M\setminus J^\pm(K)$ are regions; see, for instance, the Appendix of~\cite{Fewster_2012} for details and proofs.\\

\subsection{Systems of linear scalar fields}

Consider a collection of $k$ real scalar classical fields $\Phi:= (\varphi_1, ..., \varphi_k)^T$ on a globally hyperbolic spacetime $M$ satisfying a \emph{linear} second order normally hyperbolic partial differential equation of motion $P \Phi=0$~\cite{bar2007wave, Baer:2015},
which is formally self-adjoint, i.e., 
\begin{equation}
    \int_M (P\Phi)\cdot \Psi\,\mathrm{d}V_M =  \int_M  \Phi\cdot (P\Psi)\,\mathrm{d}V_M  
\end{equation}
for any $\Phi,\Psi\in C^\infty(M;\mathbb{R}^k)$ with compactly intersecting supports, and where the dot denotes the standard inner product in $\mathbb{R}^k$. The general form of such an operator is
\begin{equation}
    P\Phi = \Box \Phi + V^\alpha\nabla_\alpha\Phi + W\Phi
\end{equation}
where $V^\alpha$ and $W$ are smooth matrix-valued coefficients,
with $V^\alpha$ antisymmetric and $W-W^T=\nabla_\alpha V^\alpha$, and the equation $P\Phi=0$ is the Euler--Lagrange equation of the Lagrangian density
\begin{equation}
    \mathcal{L}= \tfrac{1}{2}\sqrt{-g} (\nabla^\alpha \Phi\cdot \nabla_\alpha \Phi - \Phi \cdot V^\alpha\nabla_\alpha\Phi - \Phi \cdot W\Phi).
\end{equation}

A simple example is a pair of independent Klein-Gordon fields with masses $m_1, m_2  \geq 0$, for which $P=(\Box + m_1^2) \oplus (\Box + m_2^2)$; another example is the operator $Q$ defined in~\eqref{eq:ppQ}, understood as an operator on $C^\infty(M;\mathbb{R}^2)$.
 Associated to a normally hyperbolic equation on $M$ are unique advanced ($-$) and retarded ($+$) Green operators $E_P^\pm$, whose difference defines $E_P:=E_P^- - E_P^+$. It holds in particular that $\mathrm{supp}\; E^\pm_P f \subseteq J^\pm (\mathrm{supp} f)$ for all $f\in C_c^\infty(M;\mathbb{R}^k)$.

The real vector space of real-valued solutions to the equations of motion with spatially compact ($sc$) support,
\begin{equation}
    \mathrm{Sol}_{sc}(P):=\{\Phi\in C_{sc}^\infty(M;\mathbb{R}^k): P\Phi =0\}
\end{equation}
is then isomorphic to the quotient $C_c^\infty(M;\mathbb{R}^k) / PC_c^\infty(M;\mathbb{R}^k)$ via
\begin{equation}\label{eq:Sol_iso}
    \begin{aligned}
    E_P :\;   C_c^\infty(M;\mathbb{R}^k) / PC_c^\infty(M;\mathbb{R}^k) &\to \mathrm{Sol}_{sc}(P)\\
     [f]_P &\mapsto E_P f,
    \end{aligned}
\end{equation}
where $[f]_P= f + PC_c^\infty(M;\mathbb{R}^k)$ denotes the equivalence class of $f$ in $C_c^\infty(M;\mathbb{R}^k) / PC_c^\infty(M;\mathbb{R}^k)$. We will describe test functions as being equivalent if they belong to a common equivalence class in this sense. Note that the map in~\eqref{eq:Sol_iso} is well-defined because $ E_P P f =0$. A fact of fundamental importance (the classical timeslice property) is that if a region $N$ contains a Cauchy surface for $M$ then every equivalence class $[f]_P$
has a representative supported in $N$: $C_c^\infty(M;\mathbb{R})=
C_c^\infty(N;\mathbb{R})+ P C_c^\infty(M;\mathbb{R})$.
The space $C_c^\infty(M;\mathbb{R}^k) / PC_c^\infty(M;\mathbb{R}^k)$ becomes a symplectic space once equipped with the symplectic form
\begin{equation}\sigma_P([f],[g])=E_P(f,g):= \int_M f \cdot (E_P g) \; \mathrm{d}V_M.
\end{equation}
It also carries a natural quotient topology ${ \tau^{\mathrm{cl}}}$ obtained from the standard test function topology on $C_c^\infty(M;\mathbb{R}^k)$, which is the only
topology we will consider on $C_c^\infty(M;\mathbb{R}^k)$.

To maintain the analogy with~\cite{fewster2018quantum}, we define the \emph{classical} theory as a net of $\mathbb{R}$-vector spaces $\mathcal{C}_\mathcal{P}$ by setting
\begin{equation}
        \mathcal{C}_\mathcal{P}(N):= C_c^\infty(N;\mathbb{R}^k)/P C_c^\infty(N;\mathbb{R}^k)
\end{equation}
for each region $N$ of $M$, regarding $\mathcal{C}_\mathcal{P}(N)$ as a subspace of
$\mathcal{C}_\mathcal{P} :=\mathcal{C}_\mathcal{P}(M)=C_c^\infty(M;\mathbb{R}^k)/P C_c^\infty(M;\mathbb{R}^k)$, with the inherited symplectic form and topology.\footnote{Our choice to suppress the dependence of $\mathcal{C}_\mathcal{P}$ on the fixed spacetime $M$ causes an overload of notation, since $\mathcal{C}_\mathcal{P}(N)$ could either be the global space on the fixed spacetime $N$, or a subspace associated to the region $N$ inside a bigger fixed spacetime. However, it is easy to see that these two structures are in fact isomorphic.}

States of this theory are given by real-valued distributional solutions to the equations of motion, i.e., $\kappa \in \mathscr{D}'(M;\mathbb{R}):= \qty(C_c^\infty(M;\mathbb{R}))'$ such that $\kappa$ vanishes on $P C_c^\infty(N;\mathbb{R})$. Note that there is a distinguished state given by the zero solution, and that all states are continuous with respect to the quotient topology. Moreover, 
there are enough states to separate the observables:
if $\kappa([f]_{ P}) = \kappa([g]_{ P})$ for all
$\kappa$ of the form ${\kappa([f]_{P})=} \int f (E_{ P} h) \mathrm{d}V_M$ for some $h \in C_c^\infty(M;\mathbb{R})$, then the fundamental lemma of variational calculus implies that $f-g \in \mathrm{ker}(E_{ P}) = \mathrm{im}({ P})$ and hence $\qty[f]_{P} = \qty[g]_{ P}$.

\subsection{Combinations of systems and probes}\label{sec:classical_combinations}

Let us now assume we have a single linear real system field $\varphi_\mathcal{S}$ with equation of motion operator $S$ on $C^\infty(M;\mathbb{R})$ and $k$ linear real probe fields $\qty(\varphi_\mathcal{P})_j$ with equation of motion operator $P$ on $C^\infty(M;\mathbb{R}^{k})$ both understood here (and throughout this paper) to be linear, normally hyperbolic and formally self-adjoint, with Lagrangian densities
$\mathcal{L_S}$ and $\mathcal{L_P}$ respectively.

The coupling between the system and probe will be a bilinear coupling that (in order to fit into the FV framework) is only active in a compact coupling zone $K\subset M$. An example for such a coupling can easily be written down in the language of Lagrangian densities:
\begin{equation}
    \begin{aligned}
    \mathcal{L}= \mathcal{L_S} + \mathcal{L_P} - {\lambda} \sum\limits_{j=1}^k  \rho_j \varphi_\mathcal{S} \qty(\varphi_\mathcal{P})_j,
    \end{aligned}
\end{equation}
where $\rho_1,...,\rho_j \in C_c^\infty(M;\mathbb{R})$ are coupling functions with support in $K$ and $\lambda$ is a common coupling constant. For every $\lambda \in \mathbb{R}$, this gives rise to the coupled equation of motion $T{_\lambda}$ on $C^\infty(M;\mathbb{R}^{k+1}) \simeq C^\infty(M;\mathbb{R}) \oplus C^\infty(M;\mathbb{R}^{k})$ conveniently defined in block matrix notation by
\begin{equation}
    \begin{aligned}
    T{_\lambda}:= \mqty({S} & {\lambda} R^T \\ {\lambda} R& {P}),
    \end{aligned}
    \label{eq_T_lambda}
\end{equation}
where $S:C^\infty(M;\mathbb{R}) \to C^\infty(M;\mathbb{R})$ is a $1 \times 1$ matrix, $P: C^\infty(M;\mathbb{R}^k) \to C^\infty(M;\mathbb{R}^k)$ is a $k \times k$ matrix, and $R$ and $R^T$ are $k\times 1$ and $1\times k$ matrices so that
\begin{equation}
    \begin{aligned}
    R= \mqty(R_1 \\ \vdots \\ R_k) : C^\infty(M;\mathbb{R}) \to C^\infty(M;\mathbb{R}^k); \qquad &f \mapsto \mqty(R_1 f \\ \vdots \\ R_k f),\\
    R^T= (R_1, \dots, R_k) : C^\infty(M;\mathbb{R}^k) \to C^\infty(M;\mathbb{R}); \qquad &\vec{f} \mapsto \sum\limits_{j=1}^k R_j f_j.
    \label{eq_def_R}
    \end{aligned}
\end{equation}
Here, $R_j$ is the operator of point-wise multiplication with $\rho_j$. For every $\lambda \in \mathbb{R}$, $T{_\lambda}$ is a formally self-adjoint normally hyperbolic operator and therefore gives rise to 
a well-defined classical theory; for $\lambda=0$ this is the uncoupled combination
$T_0=S\oplus P$, leading to classical theory $\mathcal{C}_{\mathcal{S}}\oplus {\mathcal{C}_\mathcal{P}}$, while it describes a coupled theory for any $\lambda\neq 0$.

The scattering map $\vartheta_\lambda:\mathcal{C}_{\mathcal{S}}(M)\oplus \mathcal{C}_{\mathcal{P}}(M)
\to\mathcal{C}_{\mathcal{S}}(M)\oplus \mathcal{C}_{\mathcal{P}}(M)$ describing the dynamics of $T_\lambda$ relative to that of $T_0$ may be read off from formulae in~\cite{fewster2018quantum}. 
Let us define the `out' region $M^+:= M \setminus J^-(K)$ and the `in' region $M^-:= M \setminus J^{+}(K)$. For any
$ \qty[F]_{S\oplus P}$ with representative $F \in C_c^\infty(M^+;\mathbb{R}^{k+1})$, we have
\begin{equation}\label{eq:classicalscattering}
    \vartheta_\lambda \qty[F]_{S\oplus P} = \qty[\tilde{F}]_{S\oplus P}
\end{equation}
where $\tilde{F}\in C_c^\infty(M^-;\mathbb{R}^{k+1})$ is any test function with the property that 
\begin{equation}\label{eq:scatteringmapeqn}
    E_{T{_\lambda}} \tilde{F} = E_{T{_\lambda}} F,
\end{equation}
i.e., $\tilde{F}$ generates the same solution to the classical homogeneous coupled field equation as $F$. Note that it follows from the equation of motion that for any region $U\subseteq M^-$ satisfying $\supp F\subseteq D(U)$, there exists such an $\tilde{F}$ supported in $U$.

Given the support of $\tilde{F}$, and the fact that $T_{\lambda}$ and $S \oplus P$ agree in $M^-$, condition~\eqref{eq:scatteringmapeqn}
fixes $\tilde{F}$ modulo the possible addition of terms of the form $(S \oplus P) H$ for $H\in C_c^\infty(M^-;\mathbb{R}^{k+1})$, which do not change the left-hand side of~\eqref{eq:scatteringmapeqn}. Furthermore, the right-hand side of~\eqref{eq:classicalscattering} is unchanged if $\tilde{F}$ is modified by terms of the form {$(S \oplus P) H$} with $H$ compactly supported anywhere in $M$, and this freedom can be exploited to find convenient formulae. For instance, it may be shown that
\begin{equation}\label{eq:convenientclassicaltheta}
\vartheta_\lambda \qty[F]_{S\oplus P} =\qty[F-(T_\lambda - S\oplus P)E_{T_\lambda}^-F]_{S \oplus P}.
\end{equation}
For a derivation, see Appendix D of \cite{fewster2018quantum}; note that while $F- (T{_\lambda}-{S\oplus P})E_{T{_\lambda}}^- F$ is not supported in $M^-$, it differs from a function that is by a term {$(S\oplus P)H$}, for some $H \in C_c^\infty(M;\mathbb{R}^{k+1})$. 
It will also be convenient to write
\begin{equation}\label{eq:unvartheta_def}
    \theta_\lambda F = F-(T_\lambda - S\oplus P)E_{T_\lambda}^-F
\end{equation}
so that $\vartheta_\lambda[F]_{S\oplus P}=[\theta_\lambda F]_{S\oplus P}$ for $F \in C_c^\infty(M^+;\mathbb{R}^{k+1})$, as in Eq.~(29) in~\cite{Ruep2021}.

\subsection{Induced observables and classical asymptotic measurement schemes}

Induced observables may be introduced by analogy with~\eqref{eq_induced_obs}. 
It is convenient to write a general test function $F\in C_c^\infty(M;\mathbb{R}^{k+1})$
as $F=f\oplus \vec{g}$ where $f\in C_c^\infty(M;\mathbb{R})$ and $\vec{g}\in C_c^\infty(M;\mathbb{R}^{k})$. Then we seek a map $\varepsilon^{\mathrm{cl}, \lambda R}:\mathcal{C}_{\mathcal{P}}\to \mathcal{C}_{\mathcal{S}}$
such that 
\begin{equation}
    \kappa (\varepsilon^{\mathrm{cl}, \lambda R}([\vec{h}]_{P}))= (\kappa \oplus 0) \qty(\vartheta_\lambda  [0 \oplus \vec{h}]_{S\oplus P}) 
\end{equation}
holds for all $[\vec{h}]_{P}\in \mathcal{C}_{\mathcal{P}}$ and all states $\kappa$ on $\mathcal{C}_{\mathcal{S}}(M)$, which is the analogue
of~\eqref{eq_induced_obs} using the distinguished zero solution as the probe preparation state. Because the states separate the observables, there is a unique solution, namely
\begin{equation}
    \varepsilon^{\mathrm{cl}, \lambda R}([\vec{h}]_{P}) := \text{pr}_1 (\vartheta_\lambda [0 \oplus \vec{h}]_{S\oplus P}),
\end{equation}
where $\text{pr}_1: \mathcal{C}_\mathcal{S} \oplus \mathcal{C}_\mathcal{P} \to \mathcal{C}_\mathcal{S}$ is the projection on the first component,  $\text{pr}_1(A \oplus B):= A$. 

Now allow $\vec{h}$ to vary with $\lambda$. Writing
\begin{equation}\label{eq:convenient}
\begin{aligned}
    \mqty(f_\lambda \\ \vec{g}_\lambda)&=\theta_\lambda \mqty(0\\ \vec{h}_\lambda)
    =\mqty(0\\ \vec{h}_\lambda) -(T_\lambda - S \oplus P)E^-_{T_\lambda} \mqty(0\\ \vec{h}_\lambda)\\
    &= \mqty(0\\ \vec{h}_\lambda) -\mqty(0 & \lambda R^T \\ \lambda R & 0)E^-_{T_\lambda} \mqty(0\\ \vec{h}_\lambda),
    \end{aligned}
\end{equation}
and comparing with~\eqref{eq:convenientclassicaltheta}, one has
\begin{equation}
    \varepsilon^{\mathrm{cl}, \lambda R}([\vec{h}_\lambda]_P) = \qty[f_\lambda]_S.
    \label{eq_classical_varepsilon}
\end{equation}
In particular, because $f_\lambda\in \Ran R^T$ is supported in $\supp R$, we immediately see that $ \varepsilon^{\mathrm{cl}, \lambda R}([\vec{h}_\lambda]_P)$ can be localised in any region containing the support of $R$.

In this way $(\mathcal{C}_{\mathcal{P}},\varepsilon^{\mathrm{cl}, \lambda R},[\vec{h}_\lambda]_P)$ forms a measurement scheme for $[f_\lambda]_S\in \mathcal{C}_{\mathcal{S}}$ for each $\lambda>0$. Turning to asymptotic measurement schemes, we will prove:

\begin{theo}
For every precompact region $N$, every $\qty[f]_S \in \mathcal{C}_\mathcal{S}(N)$ and every region $L \subseteq M \setminus J^-(\overline{N})$ such that $N \subseteq D(L)$, there exists a ${ \tau^{\mathrm{cl}}}$-\emph{asymptotic measurement scheme} for $f$ with coupling zone in $N$ and \emph{processing region} $L$. 
\label{theo_classical_asympt_m_scheme}
\end{theo}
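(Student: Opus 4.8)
The plan is to prove the theorem with a single probe field ($k=1$), since the existence of one asymptotic scheme suffices; I take the directed set $J=(0,\infty)$ with $\alpha=\lambda\to 0^+$, a probe equation $P$ to be chosen, a coupling $R$ given by multiplication with a single $\rho\in C_c^\infty(N)$, and probe elements $[\vec h_\lambda]_P=[h/\lambda]_P$ for a fixed $h\in C_c^\infty(L)$. By~\eqref{eq_classical_varepsilon} the induced observable of this scheme is $[f_\lambda]_S$, so there are two tasks: (i) identify the $\tau^{\mathrm{cl}}$-limit of $[f_\lambda]_S$ and arrange it to equal $[f]_S$ (``reverse-engineering'' $\rho$ and $h$ from $f$), and (ii) prove the convergence $[f_\lambda]_S\to[f]_S$. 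The point is that everything reduces, via~\eqref{eq:convenient}, to a statement about the \emph{probe} advanced Green operator together with support bookkeeping governed by the causal relations $N\subseteq D^-(L)$ and $L\subseteq M\setminus J^-(\overline N)$.

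To make the induced observable explicit, I would write $E^-_{T_\lambda}\qty(0\oplus\vec h_\lambda)=u_\lambda\oplus\vec v_\lambda$; the advanced problem $T_\lambda(u_\lambda\oplus\vec v_\lambda)=0\oplus\vec h_\lambda$ reads $Su_\lambda+\lambda R^T\vec v_\lambda=0$ and $\lambda Ru_\lambda+P\vec v_\lambda=\vec h_\lambda$, and eliminating $u_\lambda=-\lambda E_S^- R^T\vec v_\lambda$ gives $P_\lambda\vec v_\lambda=\vec h_\lambda$ with $P_\lambda:=P-\lambda^2 R E_S^- R^T$, while~\eqref{eq:convenient} yields $f_\lambda=-\lambda R^T\vec v_\lambda$. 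Since $P_\lambda$ has the same principal part as $P$, its advanced Green operator $E^-_{P_\lambda}$ exists with the same causal support properties, so for $\vec h_\lambda=\vec h/\lambda$ one obtains the closed form $f_\lambda=-R^T E^-_{P_\lambda}\vec h=-\rho\, E^-_{P_\lambda}h$, supported in $\supp\rho\subseteq N$. A Neumann-series (equivalently, smooth-dependence) argument then gives $E^-_{P_\lambda}=E_P^-+\lambda^2 E_P^- R E_S^- R^T E_P^-+O(\lambda^4)$ in the $C^\infty$-topology on the fixed compact set $\supp\rho$, so that $f_\lambda=-\rho E_P^- h+O(\lambda^2)$; the series is controlled because every term is supported in $\supp\rho$ and the relevant region is causally bounded. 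Thus the limit is $[-\rho E_P^- h]_S$, and task (i) becomes the requirement that $\rho$ and $h$ be chosen so that $-\rho E_P^- h=f$.

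The reverse-engineering is the heart of the matter, and where the geometry enters. I am free to choose the probe, so I take it massless, $P=\Box$, which admits constant solutions. Using that $\overline N$ is compact and fixing a Cauchy surface to its past, I construct a spatially compact solution $s\in\mathrm{Sol}_{sc}(P)$ with $s\equiv 1$ on a neighbourhood of $\overline N$ (put Cauchy data equal to $1$ on a large enough ball so that $\overline N$ lies in the future domain of dependence of $\{s=1\}$; uniqueness then forces $s\equiv 1$ there). Setting $\rho:=f$ gives $\rho s=f$ exactly. It remains to realise $s$ as the advanced field of a source in $L$, i.e.\ to find $h\in C_c^\infty(L)$ with $E_P^- h=-s$ on $\supp\rho$: taking $\chi\in C^\infty(M)$ that rises from $0$ in the past to $1$ in the future across a thin collar placed to the future of $\overline N$, the function $h:=P(\chi s)=[P,\chi]s$ is supported in that collar intersected with $\supp s$, and since the homogeneous solution $\chi s-E_P^- h$ agrees with $s$ in the far future (hence everywhere) one gets $E_P^- h=(\chi-1)s$, which equals $-s$ to the past of the collar and in particular on $\supp\rho$. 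Consequently $-\rho E_P^- h=\rho s=f$ exactly, so $[-\rho E_P^- h]_S=[f]_S$. The two causal hypotheses now play clear and complementary roles: $N\subseteq D^-(L)$ is what makes the collar reachable from $\overline N$ along causal curves, so that the relevant (compact, because $\overline N$ is compact) part of the collar can be squeezed into $L$; while $L\subseteq M\setminus J^-(\overline N)$ guarantees $L\subseteq M^+$, so that $[0\oplus\vec h_\lambda]$ genuinely lives in the out-region and the scattering formula~\eqref{eq:convenientclassicaltheta} underlying~\eqref{eq:convenient} applies. This simultaneously fixes the coupling zone $\supp\rho\subseteq N$ and the processing region $L$ demanded by the statement.

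I expect the main obstacle to be exactly the support control in this last step: ensuring at once that $s$ is spatially compact, that $\supp\rho\subseteq N$, and that $\supp h=\supp([P,\chi]s)\subseteq L$. This is where $N\subseteq D^-(L)$ must be used quantitatively---because $D^-(L)$ is open and $\overline N$ is compact, there is room to choose the ball defining $s$, the collar supporting $\chi$, and their causal ``funnel'' into $L$ so that all three conditions hold simultaneously. Granting this, task (ii) is routine: $f_\lambda-f=-\rho\,(E^-_{P_\lambda}-E_P^-)h=O(\lambda^2)$ with all supports inside the fixed compact set $\supp\rho$, so $f_\lambda\to f$ in the test-function topology of $C_c^\infty(M;\mathbb{R})$ and hence $[f_\lambda]_S\to[f]_S$ in the quotient topology $\tau^{\mathrm{cl}}$, the quotient map being continuous. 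Therefore $\bigl(\mathcal{C}_\mathcal{P},\varepsilon^{\mathrm{cl},\lambda R},[h/\lambda]_P\bigr)_{\lambda>0}$ is a $\tau^{\mathrm{cl}}$-asymptotic measurement scheme for $[f]_S$ with coupling zone $\supp\rho\subseteq N$ and processing region $L$, as required.
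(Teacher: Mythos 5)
Your overall architecture is the same as the paper's: split the problem into (i) the convergence $\varepsilon^{\mathrm{cl},\lambda R}([h/\lambda]_P)\to[-\rho E_P^-h]_S$ and (ii) the reverse-engineering of $\rho$ and $h$ so that $-\rho E_P^-h$ is (equivalent to) $f$. Your step (i) is sound and essentially the paper's Lemma~\ref{lem_approx_single_test_multiple_probe_classical}: eliminating $u_\lambda$ to get the non-local operator $P_\lambda=P-\lambda^2RE_S^-R^T$ and expanding $E^-_{P_\lambda}$ is a by-hand version of the paper's appeal to analyticity of $\lambda\mapsto E^-_{T_\lambda}$ (which the paper outsources to~\cite{fewster-non-local}), so it is fine at the same level of rigour. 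Your massless-probe trick in step (ii) --- choosing $P=\Box$ so that a solution $s$ with suitable Cauchy data is \emph{identically} $1$ on $\overline{N}$, whence $\rho=f$ works directly --- is a genuine simplification of the paper's Lemma~\ref{lem_reduction_to_slab_classical}, which for a general probe only gets a solution that is \emph{non-vanishing} on a smaller region $\tilde N$ and must replace $f$ by an equivalent $\tilde f\in C_c^\infty(\tilde N;\mathbb{R})$; since the probe theory is part of the scheme, you are entitled to this choice.

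The gap is exactly at the step you flagged, and the assertion that ``there is room to choose'' is false in general. Your $h=[P,\chi]s$ has $\supp h\subseteq\supp(\mathrm{d}\chi)\cap\supp s$, and the identification $E_P^-h=(\chi-1)s$ forces $\chi\equiv1$ to the future of a Cauchy surface of $M$; hence $\supp(\mathrm{d}\chi)$ must be a barrier crossed by every curve running from $\supp\rho$ to the far future, and the portion of that barrier meeting $\supp s$ must lie in $L$. But the theorem quantifies over \emph{every} admissible $L$, and admissibility ($N\subseteq D^-(L)$, $L\cap J^-(\overline N)=\emptyset$, causal convexity) does not supply such a barrier: $L$ need not contain any slab cross-section of $\supp s$, nor even of $J^+(\overline N)$. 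Concretely, in $1+1$ Minkowski space with null coordinates $u,v$ and $N=\{|u|<0.1\}\cap\{|v|<0.1\}$, take $L=\{(u,v):f(u)<v<f(u)+1\}$ where $f$ is continuous, non-increasing, equal to $0.1$ for $u\le 1.3$ and to $1.4-u$ thereafter: this $L$ is open, causally convex (causal curves are componentwise monotone in $(u,v)$), disjoint from $J^-(\overline N)$, and satisfies $N\subseteq D^-(L)$ (the function $v-f(u)$ is non-decreasing along any future causal curve and runs from a negative value to $+\infty$, so the curve must pass through the band); yet no cross-section $J^+(\overline N)\cap\{c<u+v<d\}$ lies in $L$, since its right null edge requires $c>1.4$ while its left null edge requires $c<1$. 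Since $\supp s$ hugs both null edges of the cone (and in a general spacetime, where Huygens' principle fails, fills the whole of $J(\supp(\text{data}))$), no slab-type collar can have $\supp(\mathrm{d}\chi)\cap\supp s\subseteq L$. One could try a $\chi$ adapted region-by-region, but that requires proving that $L$ separates $\supp s$ into a future-bounded component containing $\supp\rho$ and the rest --- a statement about arbitrary continuous paths, not causal curves, which does not follow from $N\subseteq D^-(L)$. The paper sidesteps all of this: it applies the timeslice property \emph{inside} the globally hyperbolic spacetime $D(L)$, where $L$ does contain Cauchy surfaces of $D(L)$, to obtain $h\in C_c^\infty(L;\mathbb{R})$ with $E_Ph=s$ \emph{on} $D(L)$; since the construction only ever uses $E_P h$ multiplied by $\rho$, with $\supp\rho\subseteq N\subseteq D(L)$ (and $RE_P^+h=0$ because $L\cap J^-(\overline N)=\emptyset$), the behaviour of $E_Ph$ outside $D(L)$ --- precisely where your construction gets stuck --- never enters. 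Repairing your proof amounts to replacing your global collar construction by this localized argument.
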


\paragraph*{Remark:} In fact, for an admissible processing region $L$, i.e., $L \subseteq M \setminus J^-(\overline{N})$ it holds that $N \subseteq D(L)$ if and only if $N \subseteq D^-(L)$.\footnote{If $p\in N \cap D^+(L)$, then every past inextendible causal curve through $p$
must intersect $L$, so $L \cap J^-(\overline{N}) \neq \emptyset$, which is a contradiction.}\\

In fact the asymptotic measurement scheme we will construct uses the $k=1$ probe theory only {-- theories with $k\ge 1$ will play a role later}. The construction proceeds in two steps.

The first step is to find a subset $\tilde{N}\subseteq N$ for which the following exist:
\begin{itemize}
    \item an equivalent test function $\tilde{f}\in[f]_S$  supported in $\tilde{N}$
    \item a real solution $\varphi$ of the probe field equation so that $\varphi$ is nonvanishing on $\tilde{N}$;
    \item a real-valued test function $h$ supported in $L$ so that $\varphi=E_P h$.
\end{itemize}
Consequently, there is a unique $\rho\in C_c^\infty(\tilde{N};\mathbb{R})$ defined by $\rho = - \frac{\tilde{f}}{\varphi}$, and associated pointwise multiplication operator $R$. Note that multiplication by $\varphi^{-1}$ is well defined here as $\varphi$ is non-zero on the support of $\tilde{f}$.

In the second step we show that for the coupling $\lambda R$, and the probe observables labelled by $h_\lambda=h/\lambda$, one has $\varepsilon^{\mathrm{cl}, \lambda R}(\qty[h_\lambda]_P) \to 
\qty[\tilde{f}]_S=
\qty[f]_S$ in $\qty(\mathcal{C}_\mathcal{S}, { \tau^{\mathrm{cl}}})$, from which the asymptotic measurement scheme is easily constructed.

The first step is accomplished by the following Lemma, which is a consequence of Lemma~\ref{lem_finding_slab} and proved in Appendix~\ref{sec_appendix_proof_theo_reduction_slab}.

\begin{lem}
Let $S, P$ be the system and probe equation of motion operators. For every region $N \subseteq M$ and every test function $f \in C_c^\infty(N;\mathbb{R})$ there exists a precompact region $\tilde{N} \subseteq N$ and $\tilde{f} \in C_c^\infty(\tilde{N};\mathbb{R})$ such that
\begin{enumerate}
    \item $ [f]_S= [\tilde{f}]_S$,
    \item $ \exists \, \varphi \in \mathrm{Sol}_{sc}(P)$ such that $\varphi \restriction \tilde{N}$ is nowhere vanishing.
\end{enumerate}
Moreover, for any region $L\subseteq M^+:=M\setminus J^-( \mathrm{supp} \tilde{f})$ whose domain of dependence contains $N$ there exist $h \in C_c^\infty(L;\mathbb{R})$ and $\rho \in C_c^\infty(\tilde{N};\mathbb{R})$ such that
\begin{equation}
    \tilde{f}=- R E_{P}^- h,
\end{equation}
where $R$ is the operator of pointwise multiplication with $\rho$. In fact, the function $\rho$ is independent of the choice of $L$.
\label{lem_reduction_to_slab_classical}
\end{lem}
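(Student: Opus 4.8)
The plan is to prove the two assertions separately, reducing the first (existence of $\tilde N$, $\tilde f$ and a nowhere-vanishing probe solution $\varphi$) to the auxiliary slab result Lemma~\ref{lem_finding_slab}, and then constructing the factorisation $\tilde f=-RE_P^-h$ by an explicit cut-off. Throughout, write $D_N$ for the domain of dependence taken within the globally hyperbolic region $N$. For the first part I would begin from a Cauchy surface $\Sigma$ of $N$ and, since $\supp f$ is compact, choose a precompact open piece $\Sigma'\subseteq\Sigma$ large enough that $\supp f\subseteq D_N(\Sigma')$. Lemma~\ref{lem_finding_slab} then supplies a precompact slab region $\tilde N\subseteq N$ around $\Sigma'$ together with a spatially compact solution $\varphi\in\mathrm{Sol}_{sc}(P)$ that is nowhere vanishing on $\tilde N$; concretely $\varphi$ is the solution whose Cauchy data equal $1$ with vanishing normal derivative on a compact set containing $\overline{\Sigma'}$, which by continuous dependence on initial data stays near $1$, hence nonzero, on a sufficiently thin slab. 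Because $\Sigma'\subseteq\tilde N$ and $\supp f\subseteq D_N(\Sigma')\subseteq D_N(\tilde N)$, the classical timeslice property for $S$ (\Sect\ref{sec:classical}) yields a representative $\tilde f\in C_c^\infty(\tilde N;\mathbb R)$ with $f-\tilde f\in SC_c^\infty(N;\mathbb R)$, so that $[\tilde f]_S=[f]_S$. This settles conclusions~1 and~2 and fixes $\tilde N$, $\tilde f$ and $\varphi$ once and for all, independently of $L$.

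For the factorisation I would set $\rho:=-\tilde f/\varphi$, which is smooth with $\supp\rho=\supp\tilde f\subseteq\tilde N$ precisely because $\varphi$ is nowhere zero on $\tilde N$; thus $\rho\in C_c^\infty(\tilde N;\mathbb R)$, it depends only on $\tilde f$ and $\varphi$ (hence is independent of $L$), and $R\varphi=-\tilde f$ for the associated multiplication operator $R$. It then remains to arrange $\varphi=E_P^-h$ on $\supp\tilde f$ for a suitable $h\in C_c^\infty(L;\mathbb R)$. Exploiting that $N\subseteq D^-(L)$ while $L\subseteq M^+=M\setminus J^-(\supp\tilde f)$, I would choose, via a Cauchy temporal function adapted to $L$, a cut-off $\chi\in C^\infty(M;\mathbb R)$ equal to $1$ on a past neighbourhood of $\supp\tilde f$ and equal to $0$ to the future of a thin slab contained in $L$; the two causal hypotheses are exactly what allow the transition region to be placed inside $L$ while keeping $\supp\tilde f$ on the side where $\chi=1$. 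Setting $h:=P(\chi\varphi)$ and using $P\varphi=0$, the support of $h$ is confined to the transition slab intersected with the spatially compact set $\supp\varphi$, hence is compact and contained in $L$. Since $\chi\varphi$ is future-compact, the advanced identity $E_P^-P(\chi\varphi)=\chi\varphi$ gives $E_P^-h=\chi\varphi$, which equals $\varphi$ wherever $\chi=1$, in particular on $\supp\tilde f$. Therefore $-RE_P^-h=-\rho\,E_P^-h=-\rho\varphi=\tilde f$ on all of $M$, completing the factorisation with $\rho$ independent of $L$ as claimed.

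The principal difficulty is concentrated in the first step, which is the reason for isolating Lemma~\ref{lem_finding_slab}: one must produce a \emph{single} precompact region $\tilde N$ that is simultaneously (i) thin enough that a globally defined, spatially compact probe solution $\varphi$ can be kept nowhere vanishing across the \emph{whole} of it, and (ii) large enough that $[f]_S$ still has a representative supported in it. The resolution hinges on the order of construction: fix the compact Cauchy datum generating $\varphi$ before thinning the slab, so that continuous dependence on initial data secures nonvanishing, while the inclusion $\supp f\subseteq D_N(\Sigma')\subseteq D_N(\tilde N)$, and hence the timeslice representative, survives the thinning.

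The only other delicate point is the placement of the temporal cut-off $\chi$ inside $L$, but this is comparatively routine once the geometry $N\subseteq D^-(L)$, $L\subseteq M^+$ is in hand: the former guarantees that every future-inextendible causal curve from $\supp\tilde f$ meets $L$, so the transition of $\chi$ can be made to intercept all of them within $L$, while the latter keeps $\supp\tilde f$ strictly on the past side. I would record, for use elsewhere, that the same support bookkeeping shows $E_P^+h$ vanishes on $\supp\tilde f$, so that $E_P^-h=E_Ph$ there and the construction matches the heuristic $\varphi=E_Ph$ of the main text.
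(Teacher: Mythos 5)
Your proposal is correct and shares the paper's overall skeleton, but it implements the key ``moreover'' step by a genuinely different mechanism. For the first half you do what the paper does: conclusions (i) and (ii) come from Lemma~\ref{lem_finding_slab} (the paper simply invokes it with $k=1$, taking $\tilde f$ from it as well, whereas you re-derive $\tilde f$ from the timeslice property using $\supp f\subseteq D(\tilde N)$ --- both are legitimate), and then $\rho:=-\tilde f/\varphi$, extended by zero, is exactly the paper's choice, with independence of $L$ immediate in both treatments. The divergence is in the factorisation. The paper cites the final clause of Lemma~\ref{lem_finding_slab} to obtain $\varphi=E_Ph$ with $h\in C_c^\infty(L;\mathbb{R})$, and then converts $E_P$ into $E_P^-$ through the support observation $RE_P^+h=0$, which is where $L\subseteq M^+$ enters. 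You instead open up that black box: you construct $h=P(\chi\varphi)=[P,\chi]\varphi$ from a temporal cut-off whose transition is placed inside $L$, and use the fact that the advanced Green operator inverts $P$ on future-compactly supported functions (available from the Green-hyperbolic framework of~\cite{Baer:2015}, which the paper relies on) to conclude $E_P^-h=\chi\varphi$, so that $E_P^-$ appears natively and $L\subseteq M^+$ is needed only to keep $\supp\tilde f$ on the $\chi=1$ side; your closing remark then recovers the paper's $E_Ph$ picture as a corollary rather than as an input. What your route buys is self-containedness --- it effectively proves the statement that the paper's Appendix asserts in a single sentence (``the solutions $\varphi_j$ can be written in the form $\varphi_j=E_{P_j}h_j$ for suitable $h_j\in C_c^\infty(L;\mathbb{R})$'') --- at the cost of carrying the Lorentzian geometry explicitly.

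Two points need care if you write this out in full, though neither is fatal and the paper glosses the same issues. First, the future-inextendible causal curves your cut-off must intercept inside $L$ are not those issuing from $\supp\tilde f$, as you state, but those issuing from the Cauchy-data support of $\varphi$ (a compact subset of the Cauchy surface of $N$ used in Lemma~\ref{lem_finding_slab}), since $\supp\varphi$ is the causal hull of that set; this is still covered by the hypothesis, because that set lies in $N\subseteq D^-(L)$, so it is a phrasing slip rather than an error. Second, the existence of $\chi$ whose transition meets $\supp\varphi$ only inside $L$, and the compactness of $\supp h$, deserve an argument: the clean construction takes two disjoint Cauchy surfaces $\Sigma_1,\Sigma_2$ of $L$ (hence of $D(L)$) with $\Sigma_2\subseteq I^+(\Sigma_1)$, notes that the slab $J^+(\Sigma_1)\cap J^-(\Sigma_2)\cap D(L)$ lies in $L$ by causal convexity, interpolates $\chi$ between $J^-(\Sigma_1)$ and $J^+(\Sigma_2)$, and then uses the standard precompactness of sets of the form $J^+(K)\cap D^-(\Sigma_2)$ for compact $K\subseteq D^-(\Sigma_2)$ to get compact $\supp h$ and future-compact $\supp(\chi\varphi)$. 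These are exactly the geometric facts hidden in the paper's one-line appeal to Lemma~\ref{lem_finding_slab}, so your level of rigour matches the original, but they are where the real content of the ``comparatively routine'' step sits.
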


The second step above is the content of the following lemma, which is a special case of Lemma~\ref{lem_approx_single_test_multiple_probe} proved below.

\begin{lem}
For $\rho \in C_c^\infty(M;\mathbb{R})$ let $R: C^\infty(M;\mathbb{R}) \to C^\infty(M;\mathbb{R})$ be the operator of pointwise multiplication with $\rho$ and let $P$ be the probe equation of motion operator. For $h \in C_c^\infty(M;\mathbb{R})$ and $\lambda >0$ define $h_\lambda := h/\lambda$. Then
\begin{equation}
    \lim_{\lambda \to 0} \varepsilon^{\mathrm{cl}, \lambda R}([h_\lambda]_{P}) = \qty[-R E_{P}^-h]_S
\end{equation}
in $\qty(\mathcal{C}_\mathcal{S}, {\tau^{\mathrm{cl}}})$.
\label{lem_approx_single_test_multiple_probe_classical}
\end{lem}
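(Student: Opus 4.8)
The plan is to start from the explicit representative of the induced observable already recorded in Eqs.~\eqref{eq:convenient} and~\eqref{eq_classical_varepsilon}, and to reduce the whole statement to the continuous dependence of the coupled advanced Green operator $E_{T_\lambda}^-$ on $\lambda$. Specializing Eq.~\eqref{eq:convenient} to $k=1$, where $R^T=R$ and $T_\lambda-S\oplus P=\lambda V$ with $V=\mqty(0 & R\\ R & 0)$, and writing $E_{T_\lambda}^-\mqty(0\\ h_\lambda)=\mqty(a_\lambda\\ w_\lambda)$, one reads off $f_\lambda=-\lambda R w_\lambda$, so that $\varepsilon^{\mathrm{cl},\lambda R}([h_\lambda]_P)=[-\lambda R w_\lambda]_S$. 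Since $h_\lambda=h/\lambda$ and $E_{T_\lambda}^-$ is linear, $\lambda w_\lambda$ equals the probe component $W_\lambda$ of $E_{T_\lambda}^-\mqty(0\\ h)$, so that $f_\lambda=-R W_\lambda=-\rho W_\lambda$. The factor of $\lambda$ has thus been absorbed, and the target value appears at $\lambda=0$: because $E_{T_0}^-=E_S^-\oplus E_P^-$ is block diagonal, $W_0=E_P^- h$ and $f_0=-\rho E_P^- h=-R E_P^- h$, exactly the claimed limit.

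Next I would reduce the convergence in $(\mathcal{C}_\mathcal{S},\tau^{\mathrm{cl}})$ to convergence of representatives. Every $f_\lambda=-\rho W_\lambda$ is supported in the fixed compact set $\supp\rho$, and the quotient map $C_c^\infty(M;\mathbb{R})\to\mathcal{C}_\mathcal{S}$ defining $\tau^{\mathrm{cl}}$ is continuous; hence it suffices to show that $\rho W_\lambda\to\rho E_P^- h$ in the test-function topology, i.e.\ that $W_\lambda\to E_P^- h$ uniformly together with all derivatives on a neighbourhood of $\supp\rho$.

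For the analytic core I would use the resolvent (Duhamel) identity $E_{T_\lambda}^- - E_{T_0}^- = -\lambda\, E_{T_\lambda}^- V E_{T_0}^-$, valid here because $V$ has compact support, so all compositions are well defined and preserve advanced supports. Equivalently, setting $D_\lambda:=E_{T_\lambda}^-\mqty(0\\ h)-E_{T_0}^-\mqty(0\\ h)$ one finds $T_\lambda D_\lambda=-\lambda V E_{T_0}^-\mqty(0\\ h)=-\lambda\mqty(\rho E_P^- h\\ 0)=:-\lambda G$, with $G\in C_c^\infty(\supp\rho;\mathbb{R}^2)$ fixed and independent of $\lambda$, whence $D_\lambda=-\lambda E_{T_\lambda}^- G$ by uniqueness of advanced solutions. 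Thus $W_\lambda-E_P^- h=-\lambda\,(\text{probe component of }E_{T_\lambda}^- G)$, and everything reduces to bounding $E_{T_\lambda}^- G$ on $\supp\rho$ uniformly as $\lambda\to 0$.

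The main obstacle is precisely this uniform control of $E_{T_\lambda}^-$ as $\lambda$ varies. I would obtain it from standard hyperbolic theory: since $T_\lambda=T_0+\lambda V$ has $\lambda$-independent principal part it is normally hyperbolic for every $\lambda$ and possesses unique Green operators, and its coefficients depend affinely on $\lambda$ and are supported in a fixed compact set, hence are uniformly bounded in $C^\infty$ for $\lambda\in[0,\lambda_0]$. The energy estimates underlying the existence theory of~\cite{bar2007wave,Baer:2015} then give, on any fixed compact set and for each $k$, a bound $\|E_{T_\lambda}^- G\|_{C^k}\le C_k\|G\|_{C^{k'}}$ with $C_k$ uniform over $\lambda\in[0,\lambda_0]$. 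Consequently $\|\rho(W_\lambda-E_P^- h)\|_{C^k}\le \lambda\,\|\rho\|_{C^k}\,C_k\to 0$ for every $k$, with fixed support $\supp\rho$, which is convergence of $\rho W_\lambda$ to $\rho E_P^- h$ in the test-function topology; continuity of the quotient map then yields $\varepsilon^{\mathrm{cl},\lambda R}([h_\lambda]_P)=[f_\lambda]_S\to[-R E_P^- h]_S$. As an alternative to invoking the energy estimates directly, one may expand $E_{T_\lambda}^-=\sum_{n\ge0}(-\lambda)^n(E_{T_0}^- V)^n E_{T_0}^-$; because each factor $V$ multiplies by $\rho$ and hence localizes to the compact set $\supp\rho$, the iterates $(E_{T_0}^- V)^n$ are bounded $C^k(\supp\rho)\to C^k(\supp\rho)$, so for each fixed $k$ the series converges once $\lambda$ is small enough and $W_\lambda\to E_P^- h$ in every $C^k$, again giving the claim. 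The same argument applies verbatim to the multi-probe ($k>1$) case with $R$ a column and $R^T$ a row of multiplication operators, which is why this lemma is subsumed by Lemma~\ref{lem_approx_single_test_multiple_probe}.
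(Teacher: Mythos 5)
Your proposal is correct and, at the decisive analytic step, takes a genuinely different route from the paper. The paper proves the general multi-probe statement (Lemma~\ref{lem_approx_single_test_multiple_probe}) and, after the same opening move as yours --- using linearity to absorb the $\lambda^{-1}$ in $\vec{h}_\lambda$ into the interaction term, cf.~Eq.~\eqref{eq:flambdaglambda} --- it quotes the holomorphy of $\lambda \mapsto E^-_{T_\lambda}$ in the topology of bounded convergence of maps $C_c^\infty(M;\mathbb{C}^{k+1})\to C^\infty(M;\mathbb{C}^{k+1})$ from the unpublished reference~\cite{fewster-non-local}, composes with the continuous operator $\mqty(0 & R^T \\ R & 0)$, and finishes with continuity of the quotient map, exactly as you do. You replace the holomorphy input by the resolvent identity $E^-_{T_\lambda}-E^-_{T_0}=-\lambda E^-_{T_\lambda}VE^-_{T_0}$ (your uniqueness argument for $D_\lambda$ is valid, since $D_\lambda$ has future-compact support), which exhibits the error $f_\lambda-\qty(-R E_P^- h)$ as $\lambda$ times $\rho$ times the probe component of $E^-_{T_\lambda}G$ for the single fixed test function $G=(\rho E_P^- h)\oplus 0$; after that you only need local uniform boundedness of $E^-_{T_\lambda}G$ on $\supp\rho$ for small $\lambda$, a much weaker fact than holomorphy. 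What each approach buys: the paper's holomorphy yields the full asymptotic expansion in $\lambda$, which is needed elsewhere (Eq.~\eqref{eq:glambda_expansion} and the order-$2k$ schemes of Theorem~\ref{theo_faster_approx}), while your argument is more elementary and self-contained, resting on standard hyperbolic theory rather than on a result in preparation. Two technical caveats, neither fatal to the strategy: hyperbolic solution operators are free of derivative loss only in $L^2$-Sobolev scales, so the bounds should be run in $H^k$ norms on compacta followed by Sobolev embedding rather than directly in $C^k$ --- this matters little for your primary argument (where you sensibly allow a bound by $\|G\|_{C^{k'}}$ with $k'>k$), but it is essential for your Neumann-series alternative, where a loss per iteration would be fatal; and the uniformity in $\lambda\in[0,\lambda_0]$ of the energy-estimate constants, while true because the coefficients of $T_\lambda$ depend affinely on $\lambda$ through a fixed compactly supported perturbation, is asserted rather than proved and is not stated in the form you need in~\cite{bar2007wave,Baer:2015}.
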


\paragraph*{Remark:} The fact that $h_\lambda$ diverges as $\lambda\to 0$ is in fact unproblematic since there is no convergence requirement on the probe observables of an asymptotic measurement scheme. See however \Sect\ref{sec:effort} for the effect it has on the ``effort'' required to measure to finer accuracy. A further discussion is also given in \Sect\ref{sec_variance}.\\

We are now ready to put things together and prove Theorem~\ref{theo_classical_asympt_m_scheme}.

\begin{proof}[Proof of Theorem~\ref{theo_classical_asympt_m_scheme}]
Let $\qty[f]_S \in \mathcal{C}_\mathcal{S}(N)$ for a precompact region $N$ and let $L$ be a region in $ M \setminus J^-(\overline{N})$ such that $N \subseteq D(L)$. Then, according to Lemma~\ref{lem_reduction_to_slab_classical}, $\qty[f]_S = \qty[-RE_P^-h]_S$ for some $h \in C_c^\infty(L;\mathbb{R})$ and  $\rho \in C_c^\infty(\tilde{N};\mathbb{R})$, for some pre-compact region $\tilde{N}\subseteq N$, and according to Lemma~\ref{lem_approx_single_test_multiple_probe_classical} 
\begin{equation}
    \lim\limits_{\lambda \to 0} \varepsilon^{\mathrm{cl}, \lambda R}(\qty[h_\lambda]_P) = \qty[-RE_P^-h]_{S} =\qty[f]_S, 
\end{equation}
where $h_\lambda = h/\lambda$. In particular, the collection 
\begin{equation}
    H^\mathrm{cl}_\lambda = \qty(\mathcal{C}_\mathcal{P}, \varepsilon^{\mathrm{cl}, \lambda R}, \qty[h_\lambda]_P),
    \qquad (\lambda>0),
    \label{eq_classical_ams}
\end{equation}
forms a $\tau^\mathrm{cl}$-asymptotic measurement scheme for $\qty[f]_S$ as $\lambda\to 0^+$ with coupling in $N$ and processing region $L$.
\end{proof}

\subsection{Effort and rate of convergence}\label{sec:effort}

We now discuss the efficiency of asymptotic measurement schemes by comparing the rate of convergence with a measure of the effort required. We will focus on the classical measurement schemes
\begin{equation}
    H^\mathrm{cl}_\lambda = \qty(\mathcal{C}_\mathcal{P}, \varepsilon^{\mathrm{cl}, \lambda R}, \qty[h_\lambda]_P),
\end{equation}
as constructed in Theorem~\ref{theo_classical_asympt_m_scheme}. To quantify the effort associated with a measurement scheme, let $\mathrm{eff}:\mathcal{C}_\mathcal{P} \to \mathbb{R}^+_0$ be some arbitrary but fixed choice of seminorm on $\mathcal{C}_\mathcal{P}$ such that $\mathrm{eff}(\qty[h]_P) \neq 0$. In \Sect\ref{sec_variance}, where we discuss the physical interpretation of these asymptotic measurement schemes in more detail, we give one physical example of a seminorm that encodes experimental effort for quantum asymptotic measurement schemes. This concept is very general, however, and a seminorm can in principle encode a multitude of different experimental factors relating to the feasibility of a measurement.

Equipped with some seminorm, $\mathrm{eff}$, the effort associated with a measurement of the probe observable $\qty[h_\lambda]_P$ 
diverges as $\lambda\to 0^+$,
\begin{equation}\label{eq_eff_diverges_for_probe}
\mathrm{eff}\qty(\qty[h_\lambda]_P) = \lambda^{-1} \mathrm{eff}\qty(\qty[h]_P) \to \infty .
\end{equation}
However, the reward for this effort is that the induced observables approach the desired limit $\varepsilon^{\mathrm{cl}, \lambda R}(\qty[h_\lambda]_P)\to \qty[f]_S$ as $\lambda\to 0^+$. The rate at which convergence occurs, relative to the effort involved, can be used as a measure of the efficiency of the asymptotic measurement scheme. 
In general, we will say that the scheme is of order $n$ if 
\begin{equation}\label{eq_def_order_of_m_scheme} 
    \varepsilon^{\mathrm{cl}, \lambda R}(\qty[h_\lambda]_P) - \qty[f]_S = \mathrm{eff}\qty(\qty[h_\lambda]_P)^{-n} \; \mathcal{E}(\lambda),
\end{equation}
for some $\lambda \mapsto \mathcal{E}(\lambda) \in \mathcal{C}_\mathcal{S}$ that is bounded with respect to the topology $\tau^{\mathrm{cl}}$ as $\lambda\to 0^+$. The required boundedness certainly holds if $\mathcal{E}(\lambda)$ extends continuously to a neighbourhood of $\lambda=0$ since then there exists $\epsilon >0$ such that $\lbrace \mathcal{E}(\lambda)|\lambda\in [0,\epsilon] \rbrace$ is compact (as it is the image of a compact set under a continuous map) and hence bounded. Note that this definition is invariant under reparametrisation of $\lambda$ and is independent of the choice of $\mathrm{eff}$, provided that $\mathrm{eff}(\qty[h]_P) \neq 0$. 

The classical asymptotic measurement scheme constructed in Theorem~\ref{theo_classical_asympt_m_scheme} turns out to be second order according to the above definition. It also turns out that we can improve the situation by increasing $k$, the number of probe fields. This is the statement of the following theorem, whose proof is {presented} in Appendix~\ref{sec_appendix_faster_approx}.

\begin{theo}
For every precompact region $N$, every admissible processing region $L$, every $k$ and every $\qty[f]_S \in \mathcal{C}_\mathcal{S}(N)$ there is an asymptotic measurement scheme for $\qty[f]_S$ of order $2k$.
\label{theo_faster_approx}
\end{theo}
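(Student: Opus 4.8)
The plan is to expand the induced observable as an even power series in $\lambda$ and then to spend the $k$ probe fields on cancelling its first $k-1$ corrections. Writing $T_\lambda=T_0+\lambda V$ with $T_0=S\oplus P$ and $V=\mqty(0&R^T\\R&0)$, the advanced Green operator has the Born expansion $E_{T_\lambda}^-=\sum_{n\ge0}(-\lambda)^n(E_{T_0}^-V)^nE_{T_0}^-$ with $E_{T_0}^-=E_S^-\oplus E_P^-$ block diagonal. Substituting into~\eqref{eq:convenient} and keeping the system component, the probe--system alternation forces only even powers of $\lambda$ to survive, and with $G:=R^TE_P^-R$ one finds
\begin{equation}
f_\lambda=-\lambda\sum_{m\ge0}\lambda^{2m}\qty(GE_S^-)^m\,R^TE_P^-\vec h .
\end{equation}
For $k=1$ and the scaling $\vec h\mapsto\vec h/\lambda$ this is precisely Lemma~\ref{lem_approx_single_test_multiple_probe_classical}, reproducing $[f]_S$ at order $\lambda^0$ with an $\mathcal{O}(\lambda^2)$ tail, i.e.\ order $2$; the goal is to push that tail to $\mathcal{O}(\lambda^{2k})$.

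To do so I would take $k$ identical copies of the probe field sharing a common coupling function, so that $G$ is fixed once and for all, and assign to the $j$-th probe source the scaling $\hat h_j\mapsto\lambda^{2j-3}\hat h_j$. This places the leading effect of the $j$-th field at order $\lambda^{2(j-1)}$, so that the coefficient of $\lambda^{2n}$ in $[f_\lambda]_S$ becomes
\begin{equation}
[F_n]_S=-\sum_{\substack{m\ge0,\ 1\le j\le k\\ m+j-1=n}}\qty[\qty(GE_S^-)^mR_jE_{P_j}^-\hat h_j]_S ,
\end{equation}
where $R_j$ is multiplication by the $j$-th coupling function. I would then determine the sources recursively to impose $[F_0]_S=[f]_S$ and $[F_1]_S=\dots=[F_{k-1}]_S=0$. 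Because $G$ is now frozen, the only source in $[F_n]_S$ not already determined is $\hat h_{n+1}$ (the unique $m=0$ summand), so the system is genuinely triangular. At the $n$-th step one must realise a prescribed value of $[R_{n+1}E_{P_{n+1}}^-\hat h_{n+1}]_S$ by a source $\hat h_{n+1}\in C_c^\infty(L;\mathbb{R})$; since every summand carries an outer multiplication by a coupling function, each target is supported in the coupling region, hence lies in $\mathcal{C}_\mathcal{S}(\tilde N)$, and the realisation is exactly the problem handled by the finding-slab machinery of Lemma~\ref{lem_finding_slab} and Lemma~\ref{lem_reduction_to_slab_classical} under the admissibility of $L$ and the condition $N\subseteq D(L)$. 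Carrying this out uniformly for all the targets is the role of the general Lemma~\ref{lem_approx_single_test_multiple_probe}.

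Finally I would control the remainder. The dominant probe source is the first one (scale $\lambda^{-1}$), while the others vanish as $\lambda\to0^+$, so $\mathrm{eff}\qty([\vec h_\lambda]_P)\sim\lambda^{-1}$ and $\mathrm{eff}\qty([\vec h_\lambda]_P)^{-2k}\sim\lambda^{2k}$; matching the definition of order then reduces to showing that $\sum_{n\ge k}\lambda^{2(n-k)}[F_n]_S$ stays $\tau^{\mathrm{cl}}$-bounded as $\lambda\to0^+$. This should follow from convergence of the Born series in $\tau^{\mathrm{cl}}$: since the coupling is supported in the compact set $\supp R$, the iterated operator $(E_{T_0}^-V)^m$ restricted there is a time-ordered $m$-fold integral over a region of bounded temporal extent, yielding factorial decay and hence an entire $\mathcal{C}_\mathcal{S}$-valued function of $\lambda$ whose tail is manifestly bounded. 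The main obstacle is the middle step rather than this analysis: one must (i) fix a single coupling structure that simultaneously realises \emph{all} of the recursively generated targets through sources in the one admissible processing region $L$ -- a uniform surjectivity statement strictly stronger than the single-target Lemma~\ref{lem_reduction_to_slab_classical}, and the reason the general multi-probe Lemma~\ref{lem_approx_single_test_multiple_probe} is needed -- and (ii) keep the conditions triangular by freezing $G$ before solving for the sources, so that the dependence of every coefficient on all couplings through $G$ does not reintroduce circularity.
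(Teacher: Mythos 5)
Your global strategy---expand the induced observable in even powers of $\lambda$, use $k$ probes whose leading effects sit at orders $\lambda^0,\lambda^2,\ldots,\lambda^{2k-2}$, and cancel the successive corrections by a triangular recursion---is exactly the shape of the paper's argument, and your expansion $f_\lambda=-\lambda\sum_m\lambda^{2m}(GE_S^-)^mR^TE_P^-\vec h_\lambda$ is correct. The gap is in the step you yourself flag as the main obstacle, and it is not repaired by the lemmas you cite. You freeze the coupling $\rho$ (so as to freeze $G$) and then, at stage $n\ge 1$, must find $\hat h_{n+1}\in C_c^\infty(L;\mathbb{R})$ with $[\rho E_P^-\hat h_{n+1}]_S$ equal to a prescribed class $[\rho\,\Xi_n]_S$, where $\Xi_n$ is a sum of terms of the form $E_P^-(\rho E_S^-\cdots)$ sourced inside the coupling zone. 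But anything realizable on the left-hand side has $E_P^-\hat h_{n+1}$ a \emph{homogeneous} $P$-solution on a neighbourhood of $\supp\rho$ (since $\supp\hat h_{n+1}\subseteq L$ is disjoint from a neighbourhood of $\supp\rho$, so $P E_P^-\hat h_{n+1}=\hat h_{n+1}$ vanishes there), whereas $P\Xi_n$ is generically nonzero exactly on $\supp\rho$. So your targets do not lie in the manifestly realizable set, and you would need a genuine surjectivity theorem for the map $\hat h\mapsto[\rho E_P^-\hat h]_S$ with $\rho$ \emph{fixed}. Lemma~\ref{lem_reduction_to_slab_classical} does not supply this: it realizes a single target by constructing $\rho$ and $h$ \emph{jointly}, the whole point being the freedom to set $\rho=-\tilde f/\varphi$---freedom you surrendered when you froze $\rho$. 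Lemma~\ref{lem_approx_single_test_multiple_probe} supplies none of it either: it is a convergence computation for \emph{given} couplings and sources, not a realization statement.

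The paper's proof (Appendix~C) avoids the problem by making the dual choice: it freezes the \emph{sources} and recursively adjusts the \emph{couplings}. All probes share the single test function $h$ produced by Lemma~\ref{lem_finding_slab} (only the first rescaled, $\vec h_\lambda=(\lambda^{-1}h,h,\ldots,h)^T$), so the fixed solution $\varphi=E_Ph$ is nowhere vanishing on the compact set $K=\supp\tilde f$; the couplings carry the increasing powers, $R^{(k)}(\lambda)^T=(\lambda R_1,\lambda^2R_2,\lambda^4R_3,\ldots,\lambda^{2k-2}R_k)$, which is what keeps the recursion triangular; and at each stage the error $\mathcal{E}_n(\lambda)$ is supported in $K$, so the next coupling is obtained simply as $\rho_{n+1}=\mathcal{E}_n(0)/\varphi$---division by a nonvanishing function, always possible. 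In other words, the step that is trivial in the paper (solve for a coupling, given a fixed nonvanishing probe solution) is precisely the step that is unsupported in your proposal (solve for a source, given a fixed coupling). To salvage your route you would have to prove the surjectivity statement above, which looks false in general; the safe fix is to swap which data the recursion adjusts, whereupon you recover the paper's proof.
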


\subsection{Combining measurement schemes}\label{sec_comb_measurement_schemes}

Suppose that one has asymptotic measurement schemes for two or more classical observables.
Each involves a specific sequence of couplings. How can these be combined to find an asymptotic measurement scheme for their sum? Considering situations where the supports of these coupling functions overlap, it is clear that one cannot simply add the coupling functions in general. A better solution is to take the direct sum of all the probe systems, and couple each to the system field as before. Even this is not a trivial matter, because the various probe fields now interact with each other via their coupling to the system. Nonetheless, because this coupling is at a higher order than the direct coupling of each probe to the system, one may prove the following. For simplicity of notation,
we restrict to the situation in which each probe field has the same free equation of motion operator $P$, and denote the equation of motion operator for $l$ probes
by $P^{\oplus l}$.

\begin{theo}
For $j=1,2,...,l$, let
\begin{equation}
    H^{\mathrm{cl},j}_\lambda = \qty(\mathcal{C}_\mathcal{P}, \varepsilon^{\mathrm{cl}, \lambda R_j}, [h^j_\lambda]_P),
\end{equation}
be classical asymptotic measurement schemes with coupling in $N$ and processing region $L$ for the observables $\qty[f_j]_S  = \lim_{\lambda\to 0} \varepsilon^{\mathrm{cl}, \lambda R_j}([ h^j_\lambda ]_{P})= \qty[-R_jE_P^- h^j]_S \in \mathcal{C}_\mathcal{S}(N)$. Then 
\begin{equation}
    H^{\mathrm{cl}}_\lambda := \qty(\mathcal{C}_\mathcal{P}^{\oplus l}, \varepsilon^{\mathrm{cl}, \lambda R }, [\vec{h}_\lambda]_{P^{\oplus l}}),
\end{equation}
where $R=(R_1 , ... ,R_l)^T$ as in Eq.~\eqref{eq_def_R}, $P^{\oplus l}:= P \oplus ... \oplus P$ and $\vec{h}_\lambda := \bigoplus_{j=1}^l h_\lambda^j$, is an asymptotic measurement scheme for $\sum_{j=1}^l [f_j ]_S =\qty[-R^T E_{P^{\oplus l}}^- \vec{h}]_S$.
\label{theo_comb_classical_mscheme}
\end{theo}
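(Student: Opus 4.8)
The plan is to recognise the combined object $H^{\mathrm{cl}}_\lambda$ as nothing other than a \emph{single} multi-probe measurement scheme of the type already set up in \Sect\ref{sec:classical_combinations}, now with $k=l$ probe fields all governed by $P$, coupling column $R=(R_1,\dots,R_l)^T$, and probe element $[\vec h_\lambda]_{P^{\oplus l}}$ where $\vec h_\lambda=\bigoplus_{j=1}^l h^j_\lambda=\vec h/\lambda$ and $\vec h=\bigoplus_{j=1}^l h^j$. Once this identification is made, the theorem reduces to applying the multi-probe convergence result (Lemma~\ref{lem_approx_single_test_multiple_probe}, of which the $k=1$ case Lemma~\ref{lem_approx_single_test_multiple_probe_classical} was used above) and then performing a short algebraic identification of the limit. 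First I would verify the localisation claims: the coupling zone of $H^{\mathrm{cl}}_\lambda$ is $\bigcup_{j=1}^l \supp \rho_j$, which lies in $N$ because each constituent scheme $H^{\mathrm{cl},j}_\lambda$ has coupling in $N$; and since each $h^j$ is supported in $L$, the element $[\vec h_\lambda]_{P^{\oplus l}}$ lies in $\mathcal{C}_{\mathcal{P}}^{\oplus l}(L)$, so the combined scheme has processing region $L$.

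Next I would invoke the multi-probe convergence lemma to obtain
\begin{equation*}
\lim_{\lambda\to 0^+}\varepsilon^{\mathrm{cl},\lambda R}([\vec h_\lambda]_{P^{\oplus l}})=\qty[-R^T E_{P^{\oplus l}}^-\vec h]_S
\end{equation*}
in $(\mathcal{C}_\mathcal{S},\tau^{\mathrm{cl}})$. Finally, because $P^{\oplus l}$ is block diagonal its advanced Green operator is $E_{P^{\oplus l}}^-=\bigoplus_{j=1}^l E_P^-$, whence $R^T E_{P^{\oplus l}}^-\vec h=\sum_{j=1}^l R_j E_P^- h^j$; using linearity of the quotient map together with the hypothesis $[f_j]_S=[-R_j E_P^- h^j]_S$ gives
\begin{equation*}
\qty[-R^T E_{P^{\oplus l}}^-\vec h]_S=\sum_{j=1}^l\qty[-R_j E_P^- h^j]_S=\sum_{j=1}^l[f_j]_S,
\end{equation*}
which is the asserted limit. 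This shows that $H^{\mathrm{cl}}_\lambda$ is a $\tau^{\mathrm{cl}}$-asymptotic measurement scheme for $\sum_{j=1}^l[f_j]_S$ with coupling in $N$ and processing region $L$.

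The conceptual heart of the matter -- and the step I expect to be the genuine obstacle, absorbed into the multi-probe lemma -- is that in the combined theory the individual probes are no longer decoupled: the block structure of $T_\lambda$ mixes them through their common coupling to the system, producing an effective probe--probe interaction of the schematic form $R_i E_S^- R_j$. The decisive point is that this induced interaction sits at order $\lambda^2$ relative to the direct probe--system coupling and therefore cannot survive the $\lambda\to0$ limit. Concretely, I would expand the coupled advanced Green operator as a Neumann series $E_{T_\lambda}^-=\sum_{n\ge0}(-\lambda)^n (E_{T_0}^- V)^n E_{T_0}^-$ with $E_{T_0}^-=E_S^-\oplus E_{P^{\oplus l}}^-$ and $V=\left(\begin{smallmatrix} 0 & R^T\\ R & 0\end{smallmatrix}\right)$; since the coupling is multiplication by functions supported in the compact zone $N$, repeated application of $E_{T_0}^- V$ eventually moves the support outside $N$ and the series terminates, so each term is a fixed compactly supported test function carrying an explicit power of $\lambda$, making $\tau^{\mathrm{cl}}$-boundedness automatic. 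Applying this expansion to $0\oplus\vec h_\lambda$ and extracting the system component through Eq.~\eqref{eq:convenient} yields $f_\lambda=-R^T E_{P^{\oplus l}}^-\vec h+\mathcal{O}(\lambda^2)$, with the cross terms $R_i E_S^- R_j$ contributing only to the $\mathcal{O}(\lambda^2)$ remainder; this is precisely the content the multi-probe lemma packages, and it is the only genuinely analytic ingredient of the proof.
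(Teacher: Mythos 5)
Your reduction of the theorem to the multi-probe convergence lemma is exactly the paper's route: the paper derives Theorem~\ref{theo_comb_classical_mscheme} as an immediate consequence of Lemma~\ref{lem_approx_single_test_multiple_probe}, and your bookkeeping -- the localisation checks, the block-diagonality $E_{P^{\oplus l}}^-=\bigoplus_j E_P^-$, the identification $R^T E_{P^{\oplus l}}^-\vec h=\sum_j R_j E_P^- h^j$ and linearity of the quotient map -- is correct and matches what the paper does. Had you stopped after invoking the lemma, the proof would be complete and essentially identical to the paper's.

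The genuine gap is in your final paragraph, where you purport to supply the analytic content of that lemma. The claim that ``repeated application of $E_{T_0}^- V$ eventually moves the support outside $N$ and the series terminates'' is false. The operator $V$ multiplies by functions supported in the compact coupling zone $K\subseteq N$, and the \emph{advanced} Green operator propagates supports into the causal \emph{past}, $\supp E_{T_0}^- F\subseteq J^-(\supp F)$; since $K\subseteq J^-(K)$, every application of $V$ after $E_{T_0}^-$ again produces functions supported in $K$, which are generically nonvanishing at every order -- the system and the probes keep re-exciting one another inside the coupling zone. (The cross-terms of the schematic form $R_iE_S^-R_j$ that you mention occur at \emph{all} orders, not only at order $\lambda^2$.) Consequently the Neumann series $\sum_{n\ge 0}(-\lambda)^n (E_{T_0}^- V)^n E_{T_0}^-$ is genuinely infinite, its convergence is not ``automatic'', and the $\tau^{\mathrm{cl}}$-boundedness of the remainder does not follow from your argument. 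This is precisely why the paper's proof of Lemma~\ref{lem_approx_single_test_multiple_probe} rests on the holomorphy of $\lambda\mapsto E_{T_\lambda}^-$ in the topology of bounded convergence of maps from $C_c^\infty(M;\mathbb{C}^{k+1})$ to $C^\infty(M;\mathbb{C}^{k+1})$, a non-trivial result imported from~\cite{fewster-non-local} rather than proved in the paper. To repair your argument you must either cite that analyticity result (as the paper does) or provide actual estimates establishing convergence of the perturbative series; a termination argument cannot work.
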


This result is an immediate consequence of the following lemma (of which Lemma~\ref{lem_approx_single_test_multiple_probe_classical} is a special case).

\begin{lem}
For $\rho_1, ..., \rho_k \in C_c^\infty(M;\mathbb{R})$ let $R^T: C^\infty(M;\mathbb{R}^k) \to C^\infty(M;\mathbb{R})$ be the operator defined in Eq.~\eqref{eq_def_R} and let $P^{\oplus k}= P \oplus ... \oplus P$. For $\vec{h}=(h^1, ..., h^k)^T \in C_c^\infty(M;\mathbb{R}^k)$ and $\lambda >0$ define $\vec{h}_\lambda := \vec{h}/\lambda$ and $h^j_\lambda:= h^j/\lambda$. Then
\begin{equation}
    \lim_{\lambda \to 0} \varepsilon^{\mathrm{cl}, \lambda R}([\vec{h}_\lambda]_{P^{\oplus k}}) =  \lim_{\lambda \to 0} \sum\limits_{j=1}^k \varepsilon^{\mathrm{cl}, \lambda R_j}([ h^j_\lambda ]_{P}) = \qty[-R^T E_{P^{\oplus k}}^- \vec{h}]_S
\end{equation}
in $\qty(\mathcal{C}_\mathcal{S}, {\tau^{\mathrm{cl}}})$.
\label{lem_approx_single_test_multiple_probe}
\end{lem}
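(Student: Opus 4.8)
The plan is to read off the leading behaviour of the induced observable directly from the explicit formula \eqref{eq:convenient}, isolating it from a remainder by a single application of the resolvent identity for advanced Green operators. Write $T_0 = S \oplus P^{\oplus k}$ and $W = \mqty(0 & R^T \\ R & 0)$, so that $T_\lambda = T_0 + \lambda W$ and $E_{T_0}^- = E_S^- \oplus (E_P^-)^{\oplus k}$. Reading off the first (system) component of \eqref{eq:convenient} together with \eqref{eq_classical_varepsilon} gives $\varepsilon^{\mathrm{cl}, \lambda R}([\vec h_\lambda]_{P^{\oplus k}}) = [f_\lambda]_S$ with $f_\lambda = -\lambda R^T \vec v_\lambda$, where $\vec v_\lambda$ is the probe block of $E_{T_\lambda}^-\mqty(0 \\ \vec h/\lambda)$. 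I would now insert the identity $E_{T_\lambda}^- = E_{T_0}^- - \lambda\, E_{T_\lambda}^- W E_{T_0}^-$ (which follows from $E_{T_\lambda}^- T_\lambda = \openone = T_0 E_{T_0}^-$) applied to $\mqty(0 \\ \vec h/\lambda)$. Since $E_{T_0}^-\mqty(0 \\ \vec h/\lambda) = \mqty(0 \\ E_{P^{\oplus k}}^-\vec h/\lambda)$ and $W\mqty(0 \\ E_{P^{\oplus k}}^-\vec h/\lambda) = \tfrac1\lambda G$ with $G := \mqty(R^T E_{P^{\oplus k}}^-\vec h \\ 0)$, this collapses to the exact identity
\begin{equation}
    f_\lambda = -R^T E_{P^{\oplus k}}^- \vec{h} + \lambda\, R^T\, \mathrm{pr}_\mathcal{P}\, E_{T_\lambda}^- G,
\end{equation}
where $\mathrm{pr}_\mathcal{P}$ projects onto the probe block. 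Note that $G$ is compactly supported, with support in $\supp R$ because of the left factor $R^T$, so $E_{T_\lambda}^- G$ is well defined for every $\lambda$.

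Second, I would show the remainder tends to zero in the test-function topology. Two features do the work. First, the left multiplication by $R^T$ confines $R^T\,\mathrm{pr}_\mathcal{P}\,E_{T_\lambda}^- G$ to the fixed compact set $\supp R$, uniformly in $\lambda$, and the advanced support property keeps $E_{T_\lambda}^- G$ inside $J^-(\supp G)$ for all $\lambda$, since the causal structure is $\lambda$-independent (only the lower-order terms of $T_\lambda$ vary). Second, on any fixed compact set the family $\lambda \mapsto E_{T_\lambda}^- G$ is bounded in every $C^\infty$-seminorm as $\lambda \to 0$. Multiplying a uniformly $C^\infty$-bounded, uniformly compactly supported family by $\lambda$ then yields convergence to zero in every seminorm, so $f_\lambda \to -R^T E_{P^{\oplus k}}^-\vec h$ in $C_c^\infty(M;\mathbb{R})$. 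Because the quotient map onto $(\mathcal{C}_\mathcal{S}, \tau^{\mathrm{cl}})$ is continuous, $[f_\lambda]_S \to [-R^T E_{P^{\oplus k}}^- \vec h]_S$, which is the first equality.

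For the middle equality I would appeal to linearity. Each summand $\varepsilon^{\mathrm{cl}, \lambda R_j}([h^j_\lambda]_P)$ is the $k=1$ instance of the equality just established, with coupling $\rho_j$ and probe function $h^j$, so it converges to $[-R_j E_P^- h^j]_S$. As $\tau^{\mathrm{cl}}$ is a vector-space topology and the sum is finite, the limit of the sum equals $\sum_j [-R_j E_P^- h^j]_S = [-\sum_j R_j E_P^- h^j]_S$. Finally, since $E_{P^{\oplus k}}^- = (E_P^-)^{\oplus k}$ acts blockwise and $R^T\vec x = \sum_j R_j x_j$, one has $R^T E_{P^{\oplus k}}^-\vec h = \sum_j R_j E_P^- h^j$, so all three expressions coincide. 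This simultaneously recovers Lemma~\ref{lem_approx_single_test_multiple_probe_classical} as the $k=1$ case.

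The main obstacle is the analytic input used in the second step: the uniform $C^\infty$-boundedness of $\lambda \mapsto E_{T_\lambda}^- G$ near $\lambda = 0$. The support control is free from the fixed causal structure, but the boundedness of all derivatives rests on the well-posedness theory for normally hyperbolic operators with smoothly varying lower-order coefficients (continuous, indeed smooth, dependence of the advanced Green operator on $\lambda$, e.g.\ via the Cauchy problem and energy estimates); I would isolate this as a separately cited lemma. As a refinement useful elsewhere, observing that $\mathrm{pr}_\mathcal{P}\, E_{T_0}^- G = 0$ (because $G$ has vanishing probe block) upgrades the remainder to $\mathcal{O}(\lambda^2)$, matching the claim that the scheme is of order two; this would be spelled out where the order is discussed rather than here, Theorem~\ref{theo_faster_approx} then being obtained by iterating the resolvent identity further.
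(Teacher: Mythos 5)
Your proof is correct, and it reaches the limit by a genuinely different mechanism than the paper's proof, although both start from the same exact formula, Eqs.~\eqref{eq:convenient} and~\eqref{eq_classical_varepsilon}. The paper cancels the $1/\lambda$ in $\vec{h}_\lambda$ against the explicit $\lambda$ in the coupling, so that $f_\lambda$ is $-R^T$ applied to the probe block of $E_{T_\lambda}^-(0\oplus\vec{h})$, and then passes to the limit by invoking holomorphy of $\lambda\mapsto E_{T_\lambda}^-$ in the topology of bounded convergence, cited from~\cite{fewster-non-local}; composition with the compactly supported multiplication operator keeps the convergence in $C_c^\infty$. You instead apply the Green-operator identity $E_{T_\lambda}^- = E_{T_0}^- - \lambda E_{T_\lambda}^- W E_{T_0}^-$ (with $W$ the off-diagonal coupling block), a variant of which the paper itself uses in Appendix~\ref{sec_appendix_faster_approx}, to split off the limit term \emph{exactly}, leaving the remainder $\lambda R^T\,\mathrm{pr}_{\mathcal{P}}\,E_{T_\lambda}^- G$; your verification of this identity and of the compact support of $G$ is sound. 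What you then need is only local uniform boundedness of $E_{T_\lambda}^- G$ in $C^\infty$ near $\lambda=0$, which is strictly weaker than the continuity/holomorphy the paper invokes, though it is still a nontrivial statement about coefficient dependence of Green operators that you rightly flag as a separate citable input (it follows from the holomorphy result of~\cite{fewster-non-local}, or from standard energy estimates for normally hyperbolic operators). So both routes rest on an external analytic fact; yours asks for less of it and in exchange exhibits the explicit rate $\mathcal{O}(\lambda)$, upgraded to $\mathcal{O}(\lambda^2)$ by your observation that $\mathrm{pr}_{\mathcal{P}}E_{T_0}^- G = 0$ -- which is precisely the mechanism behind Eq.~\eqref{eq:glambda_expansion} and, iterated, behind the order-$2k$ schemes of Theorem~\ref{theo_faster_approx}. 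A minor point in your favour: you spell out the middle equality of the Lemma (applying the $k=1$ case to each summand and using the blockwise action of $E_{P^{\oplus k}}^-$), which the paper's proof leaves implicit.
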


\paragraph*{Remark:}  For $\lambda>0$, $\varepsilon^{\mathrm{cl}, \lambda R}([\vec{h}_\lambda]_{P^{\oplus k}})$ and $\sum\limits_{j=1}^k \varepsilon^{\mathrm{cl}, \lambda R_j}([ h^j_\lambda ]_{P})$ are generally unequal because the probe fields interact with each other via the system.\\

\begin{proof}
Recall from Eq.~\eqref{eq:convenient} and Eq.~\eqref{eq_classical_varepsilon} that $\varepsilon^{\mathrm{cl}, \lambda R}([\vec{h}_\lambda]_{P^{\oplus k}}) = \qty[f_\lambda]_S$, where
\begin{equation}\label{eq:flambdaglambda}
\begin{aligned}
    \mqty(f_\lambda \\ \vec{g}_\lambda):&= \mqty(0\\ \vec{h}_\lambda) -\mqty(0 & \lambda R^T \\ \lambda R & 0)E^-_{T_\lambda} \mqty(0\\ \vec{h}_\lambda)\\
    &=\mqty(0\\ \vec{h}_\lambda) -\mqty(0 & R^T \\  R & 0)E^-_{T_\lambda} \mqty(0\\ \vec{h}).
    \end{aligned}
\end{equation}

Let us extend $T_\lambda$ and its Green operators to spaces of complex-valued functions in the obvious way, also allowing $\lambda$ to be complex. It is shown by one of us (CJF) in~\cite{fewster-non-local} that $\lambda \mapsto E_{T_\lambda}^-$ is holomorphic on $\mathbb{C}$ with respect to the topology of bounded convergence of continuous linear maps from the LF space $C_c^\infty(M;{\mathbb{C}^{k+1}})$ to the Fr\'{e}chet space $C^\infty(M;{\mathbb{C}^{k+1}})$. (See~\cite{Treves:TVS} for the definition of the topologies involved.) As the linear operator $\mqty(0 & R^T \\  R & 0)$ {is continuous} from $C^\infty(M;{\mathbb{C}^{k+1}})$ to $C_c^\infty(M;{\mathbb{C}^{k+1}})$, the map
\begin{equation}
    \lambda \mapsto \mqty(0 & R^T \\  R & 0)E^-_{T_\lambda}
\end{equation}
is holomorphic on all of $\mathbb{C}$ with respect to the topology of bounded convergence of continuous linear maps from $C_c^\infty(M;{\mathbb{C}^{k+1}})$ into itself~\cite{fewster-non-local}. In particular,
\begin{equation}
    \mqty(0 & R^T \\  R & 0)E^-_{T_\lambda} \mqty(0\\ \vec{h}) \overset{\lambda \to 0}{\longrightarrow} \mqty(0 & R^T \\  R & 0)\mqty(E_S^- & 0^T \\  0 & E_{P^{\oplus k}}^-)\mqty(0\\ \vec{h}) = \mqty(R^T E^-_{P^{\oplus k}} \vec{h}\\\vec{0})
\end{equation}
in $C_c^\infty(M;{\mathbb{C}^{k+1}})$ and hence $f_\lambda \to -R^T E^-_{P^{\oplus k}} \vec{h} = -\sum_{j=1}^k R_j E_P^- h^j$ in $C_c^\infty(M;{\mathbb{R}})$. Finally, the result follows by continuity of the quotient map.
\end{proof}

\paragraph*{Remark:} 
Analyticity of $\lambda\mapsto E_{T_\lambda}^-$ implies in particular that
\begin{equation}
E_{T_\lambda}^- = E_{T_0}^- -\lambda  E_{T_0}^- \mqty( 0 & R^T\\ R & 0)E_{T_0}^- +   \mathcal{O}(\lambda^2). 
\end{equation}
Substituting in Eq.~\eqref{eq:flambdaglambda}, one obtains 
\begin{equation}\label{eq:glambda_expansion}
    \vec{g}_\lambda = \lambda^{-1} \vec{h}  + \lambda RE_S^- R^T E^-_{ P^{\oplus k}}   \vec{h}+ \mathcal{O}(\lambda^3),
\end{equation}
which will be needed later (see also Eq.~(5.27) in~\cite{fewster2018quantum}). Here, the diagonal structure of $E_{T_0}^-$ 
and the off-diagonal nature of the interaction term combine to eliminate even powers from the expansion. 
\\

The above lemma will be an important ingredient in the discussion of quantum asymptotic measurement schemes below. It is nonetheless interesting to consider whether asymptotic measurement schemes may be combined more abstractly and some thoughts in that direction are collected in Appendix~\ref{sec_appendix_abstract comb}.

\section{Quantum asymptotic measurement schemes: the field algebra}\label{sec:asymp_m_schemes_field_algebra}

\subsection{The field algebra}\label{sec:field_algebra}

Consider a classical equation of motion operator $P$ on $C_c^\infty(M;\mathbb{R}^k)$. The classical theory
may be quantized in various ways, and we will show that asymptotic measurement
schemes may be obtained for all elements of these various quantizations. 

The \emph{field algebra} may be presented as follows: $\mathcal{F}$ is the complex unital $*$-algebra with generators (`smeared fields') $\varphi(F)$ labelled by $F\in C_c^\infty(M;\mathbb{R}^k)$ and subject to relations
\begin{enumerate}
    \item $F\mapsto \varphi(F)$ is $\mathbb{R}$-linear,
    \item $\varphi(F)=\varphi(F)^*$,
    \item $\varphi(PF) = 0$,
    \item $[\varphi(F),\varphi(G)] = \mathrm{i}E_P(F,G)\openone$,
\end{enumerate}
for arbitrary $F,G\in C_c^\infty(M;\mathbb{R}^k)$. $\mathcal{F}$ is a topological $*$-algebra with respect to the topology $\tau^\varphi$ induced 
by the test function topology such that an arbitrary product of smeared field converges if all the smearing functions do.\footnote{This topology stems from the Borchers-Uhlmann algebra, see for instance \Sect~4.1 in~\cite{Sahlmann2000}. In particular one can see that $\mathcal{F}$ is the quotient of a countable inductive limit of nuclear spaces by a topologically closed two-sided $*$-ideal and hence itself nuclear.} For any region $N\subseteq M$, we define $\mathcal{F}(N)$ to be the subalgebra of $\mathcal{F}$ generated by those $\varphi(F)$ with $F\in C_c^\infty(N;\mathbb{R})$, and endowed with the subspace topology; in particular, $\mathcal{F}(M)=\mathcal{F}$. Although it is common in the literature to allow for complex-valued smearings by $\varphi(F) = \varphi(\Re F)+ \mathrm{i}\varphi(\Im F)$ for $F\in C^\infty_c(M;\mathbb{C}^k)$ it is convenient not to do so here -- though this does not change the algebra, just the way elements are labelled.

We say that $A \in \mathcal{F}$ is localizable (or that $A$ can be localised) in a region $N$, if and only if $A \in \mathcal{F}(N)$; a given element is localizable in many regions, given that
$\mathcal{F}(N_1)\subseteq \mathcal{F}(N_2)$ whenever $N_1\subseteq N_2$ (`isotony') or 
$N_1 \subseteq D(N_2)$ (the `timeslice property').

Finally, we note that $\mathcal{F}$ is a nuclear locally convex topological vector space. In particular, if equation of motion operators $P_1$ and $P_2$ on $C_c^\infty(M;\mathbb{R}^k)$ and $C_c^\infty(M;\mathbb{R}^l)$ respectively induce field algebras $\mathcal{F}_1$ and $\mathcal{F}_2$, the algebraic tensor product $\mathcal{F}_1 \otimes \mathcal{F}_2$ equipped with the unambiguous nuclear topology (see~\cite{Treves:TVS}) is isomorphic to the topological unital $*$-algebra induced by $P_1 \oplus P_2$ on $C_c^\infty(M;\mathbb{R}^{k+l})$ \emph{as a topological $*$-algebra} under the correspondence
\begin{equation}
    \varphi_{P_1\oplus P_2}(f\oplus g) \mapsto 
     \varphi_{P_1}(f)\otimes\openone_{\mathcal{F}_2}  +  \openone_{\mathcal{F}_1} \otimes\varphi_{P_2}(g).
\end{equation}

Generally, a state on a unital $*$-algebra $\mathcal{F}$ is a linear map $\omega:\mathcal{F} \to \mathbb{C}$ that is normalised, $\omega(\openone)=1$, and positive, $\forall A \in \mathcal{F}: \omega(A^* A) \geq 0$.

\subsection{Asymptotic measurement schemes}\label{sec:field_alg_ams}

Suppose, as in \Sect\ref{sec:classical} that the system and probe are described by equation of motions operators $S$ and $P^{ \oplus k}$ respectively, inducing corresponding field algebras $\mathcal{F}_{\mathcal{S}}$ and $\mathcal{F}_\mathcal{P}^{ \otimes k}$, together with the uncoupled 
combination $\mathcal{F}_\mathcal{U}=\mathcal{F}_{\mathcal{S}}\otimes \mathcal{F}_{\mathcal{P}}^{ \otimes k}$
induced by $S\oplus P^{ \oplus k}$ and the coupled theory  $\mathcal{F}_\mathcal{C}$ induced by $T_\lambda$.
We denote the smeared fields generating $\mathcal{F}_\mathcal{S}$ and $\mathcal{F}_\mathcal{P}^{ \otimes k}$
by $\varphi_{\mathcal{S}}(f)$, $\varphi_{\mathcal{P}}^{ \otimes k}(\vec{g})$ respectively\footnote{The superscript $\otimes k$ in $\varphi_{\mathcal{P}}^{ \otimes k}(\vec{g})$ refers to the fact that $\varphi_{\mathcal{P}}^{ \otimes k}$ is the natural field that parametrises the $k$-fold tensor product of field algebras $\mathcal{F}_{\mathcal{P}}^{ \otimes k}$, see Eq.~\eqref{eq_def_varphi_otimes}, and should not be confused with the $k$-fold tensor product of the individual fields parametrising $\mathcal{F}_\mathcal{P}$.},
where
\begin{equation}
    \varphi_{\mathcal{P}}^{ \otimes k}(\vec{g}) =\varphi_\mathcal{P}(g_1) \otimes \openone \otimes \dots \otimes \openone + \openone \otimes \varphi_\mathcal{P}(g_2) \otimes \openone \otimes \dots \otimes \openone + \dots + \openone \otimes \dots \otimes \openone \otimes \varphi_\mathcal{P}(g_k),
    \label{eq_def_varphi_otimes}
\end{equation}
may equivalently be regarded as a multiplet of $k$ scalar fields. The fields of the uncoupled theory are
\begin{equation}
    \varphi_{\mathcal{U}}(f\oplus \vec{g}) = \varphi_{\mathcal{S}}(f)\otimes \openone + \openone\otimes \varphi_{\mathcal{P}}^{ \otimes k}(\vec{g})
\end{equation}
for $f\in C_c^\infty(M;\mathbb{R})$, $\vec{g}\in C_c^\infty(M;\mathbb{R}^k)$. The scattering map associated with the field algebra is obtained from the scattering map of the underlying classical theories, namely that
\begin{equation}
    \Theta_\lambda \varphi_{\mathcal{U}}(F) = \varphi_{\mathcal{U}}(\theta_\lambda F)
\end{equation}
for all $F\in C_c^\infty(M^{ +};\mathbb{R}^{k+1})$. 
 
Following \Sect~5 of~\cite{fewster2018quantum}, the induced observable 
map $\varepsilon_{\sigma}^{\varphi,\lambda R}:\mathcal{F}_{\mathcal{P}}^{ \otimes k}\to \mathcal{F}_{\mathcal{S}}$ obeys
\begin{equation}\label{eq:qinduced}
    \varepsilon_\sigma^{\varphi, \lambda R}(e^{\mathrm{i}x \varphi_\mathcal{P}^{ \otimes k}(\vec{h}_\lambda)} ) = 
    \sigma(e^{\mathrm{i}x \varphi_\mathcal{P}^{ \otimes k}(\vec{g}_\lambda)}) e^{\mathrm{i} x \varphi_\mathcal{S}(\varepsilon^{\mathrm{cl},\lambda R}(\vec{h}_\lambda))}  ,
\end{equation}
as an identity between formal power series in the parameter $x$ with coefficients in $\mathcal{F}_\mathcal{S}$, where $\vec{g}_\lambda$ and 
$f_\lambda=\varepsilon^{\mathrm{cl}, \lambda R}(\vec{h}_\lambda)$ are
as in Eq.~\eqref{eq:convenient} and $\sigma$ is a probe preparation state on $\mathcal{F}_{\mathcal{P}}^{ \otimes k}$. 
Note that we have slightly abused notation by writing $f_\lambda=\varepsilon^{\mathrm{cl}, \lambda R}(\vec{h}_\lambda)$, rather than more properly writing equivalence classes of test functions. This is convenient because the fields are indexed by test functions rather than equivalence classes. 
In~\cite{fewster2018quantum}, the identity~\eqref{eq:qinduced} was used to compute the induced elements obtained from given probe elements. Here, we wish to solve the inverse problem, and begin by rearranging the identity as
\begin{equation}
    \varepsilon_\sigma^{\varphi, \lambda R}(e^{\mathrm{i}x \varphi_\mathcal{P}^{ \otimes k}(\vec{h}_\lambda)} / \sigma(e^{\mathrm{i}x \varphi_\mathcal{P}^{ \otimes k}(\vec{g}_\lambda)})) = e^{\mathrm{i} x \varphi_\mathcal{S}(\varepsilon^{\mathrm{cl},\lambda R}(\vec{h}_\lambda))},
    \label{eq_fields_induced_obs_formal_series}
\end{equation}
again as an identity of formal power series, making use of the fact that the formal power series $\sigma(e^{\mathrm{i}x \varphi_\mathcal{P}^{ \otimes k}(\vec{g}_\lambda)})$ has an inverse because its constant term is nonzero, and also using the linearity of $\varepsilon_\sigma^{\varphi, \lambda R}$. By equating powers of $x$ we see
that every power of $\varphi_\mathcal{S}(\varepsilon^{\mathrm{cl},\lambda R}(\vec{h}_\lambda))$ admits a measurement scheme, because
\begin{equation}\label{eq:powms}
    \varphi_\mathcal{S}(\varepsilon^{\mathrm{cl},\lambda R}(\vec{h}_\lambda))^n = 
     \varepsilon_\sigma^{\varphi, \lambda R}\left(
     (-\mathrm{i})^{n} \dv[n]{}{x} \frac{e^{\mathrm{i}x\varphi_\mathcal{P}^{\otimes k}(\vec{h}_\lambda)}}{\sigma(e^{\mathrm{i} x \varphi_\mathcal{P}^{\otimes k}(\vec{g}_\lambda)})}\Bigg|_{x=0}
     \right),
\end{equation}
interpreting the differentiation and evaluation at zero appropriately to formal power series. Furthermore, the classical ${ \tau^{\mathrm{cl}}}$-asymptotic measurement scheme $H_\lambda^{\mathrm{cl}}$ for $[f]_S$ in Eq.~\eqref{eq_classical_ams} immediately induces a $\tau^\varphi$-asymptotic measurement scheme for each power
$\varphi_\mathcal{S}(f)^n$, owing to the
$\tau^\varphi$-continuity of any power of smeared fields.
For example,
\begin{equation}\label{eq:field_scheme}
   \varphi_\mathcal{S}(\varepsilon^{\mathrm{cl},\lambda R} (\vec{h}_\lambda))= \varepsilon^{\varphi, \lambda R}_{\sigma}(\varphi_\mathcal{P}^{ \otimes k}(\vec{h}_\lambda) - \sigma(\varphi_\mathcal{P}^{ \otimes k}(\vec{g}_\lambda))\openone) 
\end{equation}
where $\vec{g}_\lambda$ is as in Eq.~\eqref{eq:convenient}, which shows that $(\mathcal{F}_\mathcal{P}^{ \otimes k}, \varepsilon_{\sigma}^{\varphi,\lambda R},\varphi_\mathcal{P}^{ \otimes k}(\vec{h}_\lambda) - \sigma(\varphi_\mathcal{P}^{ \otimes k}(\vec{g}_\lambda))\openone)$ provides a measurement scheme for $\varphi_\mathcal{S}(\varepsilon^{\mathrm{cl},\lambda R} (\vec{h}_\lambda))$,
and thus providing a $\tau^\varphi$-asymptotic measurement scheme for $\varphi_\mathcal{S}(f)$ as
$\lambda\to 0$.

In fact, we can go further. The general element of $\mathcal{F}_{\mathcal{S}}$ is a complex linear combination of products of generators. Using the commutation relations, every element of $\mathcal{F}_{\mathcal{S}}$ can be expressed
as a finite linear combination of symmetrised products of generators. 
Next, by the multi-linear generalization of the polarization identity, see e.g.~Eq.~(A.4) in~\cite{Thomas2014}, every symmetrised $n$-fold product of $\varphi_\mathcal{S}(f_1),...,\varphi_\mathcal{S}(f_n)$ can be written as
\begin{equation}
\begin{aligned}
    &\sum\limits_{\pi \in S_n} \varphi_\mathcal{S}(f_{\pi(1)}) ... \varphi_\mathcal{S}(f_{\pi(n)})\\
    &= \frac{1}{2^n} \sum\limits_{\epsilon_1,...,\epsilon_n=0}^1 (-1)^{\epsilon_1+...+\epsilon_n} \qty((-1)^{\epsilon_1} \varphi_\mathcal{S}(f_1)+ ...+ (-1)^{\epsilon_n} \varphi_\mathcal{S}(f_n))^n\\
    &=\frac{1}{2^n} \sum\limits_{\epsilon_1,...,\epsilon_n=0}^1 (-1)^{\epsilon_1+...+\epsilon_n} \qty(\varphi_\mathcal{S}((-1)^{\epsilon_1} f_{1}+ ...+ (-1)^{\epsilon_n} f_n))^n.
    \label{eq_multilin_pol_id}
\end{aligned}
\end{equation}
Therefore every element of $\mathcal{F}_\mathcal{S}$ may be written as a complex linear combination of powers of generators
\begin{equation}\label{eq:typical_field_obs}
    A = \sum\limits_{j=1}^k c_j \varphi_\mathcal{S}(f_j)^{n_j}
\end{equation} 
with real-valued $f_j$ and $c_j \in \mathbb{C}$. We will now construct an asymptotic measurement scheme for $A$ using $k$ probe fields. For every $j$ let $H_\lambda^{\mathrm{cl},j} = \qty(\mathcal{C}_\mathcal{P}, \varepsilon^{\mathrm{cl}, \lambda R_j}, [h^j_\lambda]_P)$ be the classical asymptotic measurement scheme for $\qty[f_j]_S$ as in Theorem~\ref{theo_comb_classical_mscheme}. Also as in that result, we consider a probe consisting of $k$ fields with coupling $R=(R_1,\ldots,R_k)^T$. Let us hence introduce the notation
\begin{equation}
    \vec{h}^j:= 0 \oplus \dots 0\oplus h^j \oplus 0 \dots \oplus 0 \equiv h^j \vec{e}^j \in C_c^\infty(M;\mathbb{R}^k),
\end{equation}
where $\vec{e}^j$ is the $j^\mathrm{th}$ standard basis vector in $\mathbb{R}^k$; 
we also write $\vec{h}_\lambda^j:= \vec{h}^j/\lambda$. Define $f^j_\lambda$ and
  $\vec{g}^j_\lambda$ by 
\begin{equation}
    \mqty(f^j_\lambda \\ \vec{g}_\lambda^j):= \mqty(0 \\ \vec{h}^j_\lambda) - (T_\lambda - S\oplus P^{\oplus l}) E_{T_\lambda}^-\mqty(0 \\ \vec{h}^j_\lambda),
\end{equation}
(cf.\ Eq.~\eqref{eq:convenient}) so that $f^j_\lambda = \varepsilon^{\mathrm{cl}, \lambda R}(\vec{h}^j_\lambda)$ for each $j$. It now follows from~\eqref{eq:powms}, applied to each $\vec{h}^j_\lambda$ in turn, and the linearity of $\varepsilon_{\sigma}^{\varphi, \lambda R}$, that
\begin{equation}
    H_\lambda^\varphi:=\qty(\mathcal{F}_\mathcal{P}^{\otimes k},\varepsilon_{\sigma}^{\varphi, \lambda R}, \sum\limits_{j=1}^k c_j (-\mathrm{i})^{n_j} \dv[n_j]{}{x} \frac{e^{\mathrm{i}x\varphi_\mathcal{P}^{\otimes k}(\vec{h}^j_\lambda)}}{\sigma(e^{\mathrm{i} x \varphi_\mathcal{P}^{\otimes k}(\vec{g}^j_\lambda)})}\Bigg|_{x=0} ),
    \label{eq_field_mscheme}
\end{equation}
is a measurement scheme for 
\begin{equation}
\sum\limits_{j=1}^k c_j
\varphi_\mathcal{S}(\varepsilon^{\mathrm{cl},\lambda R}(\vec{h}_\lambda^j))^n =
\sum\limits_{j=1}^k c_j \varphi_\mathcal{S}(f^j_\lambda)^{n_j}
\label{eq_field_mscheme_induced}
\end{equation} 
and hence $(H_\lambda^\varphi)$ is a $\tau^\varphi$-asymptotic measurement scheme
for $A$ in the limit $\lambda\to 0$, with coupling in $N$ and processing region $L$.

Finally, we note that any Hermitian $A$ can be written as a finite \emph{real} linear combination of symmetrised products of generators, and, consulting Eq.~\eqref{eq_multilin_pol_id}, also in the form of Eq.~\eqref{eq:typical_field_obs} with \emph{real} $c_j$'s. It is then immediate that  $(H_\lambda^\varphi)$ is a \emph{Hermitian} $\tau^\varphi$-asymptotic measurement scheme
for $A$ in the limit $\lambda\to 0$, with coupling in $N$ and processing region $L$.

In summary we have proved the following theorem.

\begin{theo}\label{theo_existence_of_asymp_measurement_schemes_field_algebra}
Let $\mathcal{F}_\mathcal{S}$ be equipped with $\tau^{\varphi}$. Then, for every precompact region $N$, for every $A \in \mathcal{F}_\mathcal{S}(N)$ and for every region $L \subseteq M \setminus J^-(\overline{N})$ such that $N \subseteq D(L)$ there exists a $\tau^\varphi$-asymptotic measurement scheme for $A$ with coupling in $N$ and processing region $L$. If $A$ is Hermitian, then the $\tau^\varphi$-asymptotic measurement scheme can be chosen to be Hermitian as well.
\end{theo}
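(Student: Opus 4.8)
The plan is to reduce a general local algebra element to a complex linear combination of powers of single smeared fields, for which the classical theory of Sections~\ref{sec:classical} and~\ref{sec:field_alg_ams} directly supplies measurement schemes, and then to pass to the $\tau^\varphi$-limit using the continuity of monomials in the smeared fields. First I would put $A$ into a convenient normal form: using the commutation relation (relation~4 in the definition of $\mathcal{F}$) every element of $\mathcal{F}_\mathcal{S}(N)$ is a finite linear combination of symmetrised products of generators $\varphi_\mathcal{S}(f)$ with $f\in C_c^\infty(N;\mathbb{R})$, and the multilinear polarisation identity~\eqref{eq_multilin_pol_id} rewrites each such symmetrised product as a linear combination of pure powers. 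Thus $A=\sum_{j=1}^k c_j\,\varphi_\mathcal{S}(f_j)^{n_j}$ with $f_j\in C_c^\infty(N;\mathbb{R})$ and $c_j\in\mathbb{C}$ (with real $c_j$ if $A$ is Hermitian). This step is the structural heart of the argument, since it replaces the measurement of an arbitrary local element by that of finitely many powers of single fields, each governed by the classical construction.

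Next I would feed in the classical input. For each $f_j$, Lemma~\ref{lem_reduction_to_slab_classical} (applied with region $N$ and processing region $L$) produces $\rho_j\in C_c^\infty(\tilde N_j;\mathbb{R})$ with $\tilde N_j\subseteq N$ precompact and $h^j\in C_c^\infty(L;\mathbb{R})$ such that $[f_j]_S=[-R_jE_P^-h^j]_S$. Assembling the $\rho_j$ into a single $k$-probe coupling $R=(R_1,\dots,R_k)^T$, Theorem~\ref{theo_comb_classical_mscheme} (equivalently Lemma~\ref{lem_approx_single_test_multiple_probe}) gives, for $\vec h^j=h^j\vec e^j$ and $f^j_\lambda=\varepsilon^{\mathrm{cl},\lambda R}(\vec h^j_\lambda)$, the convergence $f^j_\lambda\to -R_jE_P^-h^j$ in the test-function topology, whence $[f^j_\lambda]_S\to[f_j]_S$. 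Crucially, a single probe theory $\mathcal{F}_\mathcal{P}^{\otimes k}$, a single coupling $\lambda R$ and a single probe state $\sigma$ are used throughout the net, which is exactly what licenses the linearity argument in the remarks following Definition~\ref{def_ams}.

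I would then quantise and take the limit. The identity~\eqref{eq:qinduced}, inherited from~\cite{fewster2018quantum}, equates the quantum induced-observable map with the classical one as formal power series in $x$; equating coefficients as in~\eqref{eq:powms} exhibits, for each $j$, an explicit probe element $B^j_\lambda$ inducing $\varphi_\mathcal{S}(f^j_\lambda)^{n_j}$. By linearity of $\varepsilon_\sigma^{\varphi,\lambda R}$ the probe element $\sum_j c_j B^j_\lambda$ induces $\sum_j c_j\,\varphi_\mathcal{S}(f^j_\lambda)^{n_j}$, giving the scheme~\eqref{eq_field_mscheme}. Since the representatives $f^j_\lambda$ converge in the test-function topology to representatives of $[f_j]_S$, and since any product of smeared fields is $\tau^\varphi$-continuous in its smearings, $\sum_j c_j\varphi_\mathcal{S}(f^j_\lambda)^{n_j}\to\sum_j c_j\varphi_\mathcal{S}(f_j)^{n_j}=A$, so $(H^\varphi_\lambda)_{\lambda>0}$ is a $\tau^\varphi$-asymptotic measurement scheme for $A$ with coupling in $N$ and processing region $L$. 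For Hermitian $A$ one may take the $c_j$ real, whereupon $\sum_j c_j B^j_\lambda$ is a real-coefficient polynomial in the Hermitian fields $\varphi_\mathcal{P}^{\otimes k}(\vec h^j_\lambda)$ and hence Hermitian; alternatively one symmetrises via $\tfrac12(\,\cdot\,+(\,\cdot\,)^*)$ as in remark~\ref{real_part_ams}, which leaves the limit unchanged because $A^*=A$.

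The main obstacle I anticipate is ensuring that the classical approximation occurs at the level of genuine test-function representatives rather than merely equivalence classes: $\tau^\varphi$-continuity of field monomials requires the smearings themselves to converge, which is supplied by the holomorphicity of $\lambda\mapsto E^-_{T_\lambda}$ underlying Lemma~\ref{lem_approx_single_test_multiple_probe}. A secondary but essential bookkeeping point is keeping the probe theory, coupling and state fixed across the entire net while letting only the probe element vary with $\lambda$ and $j$, which is what makes the linear combination of the individual schemes a single legitimate asymptotic measurement scheme; the polarisation reduction, while elementary, must be handled with care to control the number $k$ of probe fields required.
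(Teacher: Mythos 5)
Your proposal is correct and follows essentially the same route as the paper's own proof: reduction of $A$ to the form $\sum_j c_j\varphi_\mathcal{S}(f_j)^{n_j}$ via commutation relations and the polarisation identity~\eqref{eq_multilin_pol_id}, classical input from Lemma~\ref{lem_reduction_to_slab_classical} and Theorem~\ref{theo_comb_classical_mscheme} with a single $k$-probe coupling and fixed state, quantisation through~\eqref{eq:qinduced} and~\eqref{eq:powms}, and the $\tau^\varphi$-continuity of field monomials to pass to the limit. Your Hermitian-case argument (real $c_j$ yielding a Hermitian probe element, with symmetrisation as a fallback) likewise matches the paper's conclusion.
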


According to Theorem~\ref{theo_faster_approx} for every classical system observable and every $k$ there exists a classical asymptotic measurement scheme of order $2k$. As we show in  Appendix~\ref{sec_appendix_faster_approx}, the argument generalises and yields the following theorem.

\begin{theo}
For every precompact region $N$, every admissible processing region $L$, every $k$ and every $\varphi_\mathcal{S}(f) \in \mathcal{F}_\mathcal{S}(N)$ there is an asymptotic measurement scheme for $\varphi_\mathcal{S}(f)$ of order $2k$.
\label{theo_faster_approx_fields}
\end{theo}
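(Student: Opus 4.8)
The plan is to lift the classical order-$2k$ scheme furnished by Theorem~\ref{theo_faster_approx} to the field algebra by means of the identity~\eqref{eq:field_scheme}, exploiting the fact that $f\mapsto\varphi_\mathcal{S}(f)$ is a continuous linear map and therefore preserves both the rate of convergence and the boundedness of the error coefficient.

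Concretely, I would first fix $N$, $L$, $k$ and $f\in C_c^\infty(N;\mathbb{R})$ and apply Theorem~\ref{theo_faster_approx} to $[f]_S$. This supplies a $k$-field classical scheme, i.e.\ a coupling $R=(R_1,\dots,R_k)^T$ supported in $N$ and probe functions $\vec{h}$ supported in $L$, together with $\vec{h}_\lambda=\vec{h}/\lambda$ and $f_\lambda=\varepsilon^{\mathrm{cl},\lambda R}(\vec{h}_\lambda)$. Unpacking the definition of order in~\eqref{eq_def_order_of_m_scheme}, and using that any admissible effort evaluated on $[\vec{h}_\lambda]$ scales as $\lambda^{-1}$, order $2k$ means precisely that there is a representative satisfying $f_\lambda-f=\lambda^{2k}\,\mathcal{E}(\lambda)$ with $\mathcal{E}(\lambda)\in\mathcal{C}_\mathcal{S}$ bounded in $\tau^{\mathrm{cl}}$ as $\lambda\to0^+$.

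I would then feed this into Eq.~\eqref{eq:field_scheme}: the triple $H^\varphi_\lambda=(\mathcal{F}_\mathcal{P}^{\otimes k},\varepsilon_{\sigma}^{\varphi,\lambda R},B_\lambda)$ with $B_\lambda=\varphi_\mathcal{P}^{\otimes k}(\vec{h}_\lambda)-\sigma(\varphi_\mathcal{P}^{\otimes k}(\vec{g}_\lambda))\openone$, built from the same coupling, probe and (quasi-free) state $\sigma$ as the classical scheme, is a measurement scheme for $\varphi_\mathcal{S}(f_\lambda)$, with coupling in $N$ and $B_\lambda\in\mathcal{F}_\mathcal{P}^{\otimes k}(L)$. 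Linearity of $F\mapsto\varphi_\mathcal{S}(F)$ then gives
\begin{equation}
\varepsilon_{\sigma}^{\varphi,\lambda R}(B_\lambda)-\varphi_\mathcal{S}(f)=\varphi_\mathcal{S}(f_\lambda-f)=\lambda^{2k}\,\varphi_\mathcal{S}(\mathcal{E}(\lambda)),
\end{equation}
and, since the map $[\cdot]_S\mapsto\varphi_\mathcal{S}(\cdot)$ is continuous from $(\mathcal{C}_\mathcal{S},\tau^{\mathrm{cl}})$ into $(\mathcal{F}_\mathcal{S},\tau^\varphi)$ while continuous linear maps take bounded sets to bounded sets, the coefficient $\varphi_\mathcal{S}(\mathcal{E}(\lambda))$ stays $\tau^\varphi$-bounded as $\lambda\to0^+$. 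For the effort I would take $\mathrm{eff}(B_\lambda)=\sqrt{\mathrm{Var}_\sigma(B_\lambda)}$ from \Sect\ref{sec_variance}; the additive constant drops out and homogeneity yields $\mathrm{Var}_\sigma(B_\lambda)=\lambda^{-2}\mathrm{Var}_\sigma(\varphi_\mathcal{P}^{\otimes k}(\vec{h}))$, so that $\mathrm{eff}(B_\lambda)$ scales exactly as $\lambda^{-1}$ (nonzero for suitable $\sigma$). Absorbing the resulting constants into the bounded coefficient shows that $(H^\varphi_\lambda)$ is an asymptotic measurement scheme for $\varphi_\mathcal{S}(f)$ of order $2k$.

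The step I expect to require the most care is not the lift itself, which is a continuity argument, but checking that the notion of order transfers faithfully: one must confirm that the quantum effort inherits the \emph{exact} $\lambda^{-1}$ scaling---which is why the variance, being insensitive both to the additive constant and to the $\lambda$-dependence hidden in $\vec{g}_\lambda$, is the convenient choice---and that the merely bounded (not convergent) classical coefficient $\mathcal{E}(\lambda)$ survives application of $\varphi_\mathcal{S}$. Both rest on the vector-space and continuity properties of $\tau^\varphi$ already established for the field algebra.
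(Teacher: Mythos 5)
Your strategy---lift the classical order-$2k$ scheme of Theorem~\ref{theo_faster_approx} to the field algebra via the probe observable $B_\lambda=\varphi_\mathcal{P}^{\otimes k}(\vec{h}_\lambda)-\sigma(\varphi_\mathcal{P}^{\otimes k}(\vec{g}_\lambda))\openone$, transport the error by linearity and continuity of $F\mapsto\varphi_\mathcal{S}(F)$, and then control the effort of $B_\lambda$---is exactly the route the paper takes in Appendix~\ref{sec_appendix_faster_approx}. However, your proof rests on a false description of what Theorem~\ref{theo_faster_approx} supplies, and this breaks a step as written. The order-$2k$ classical scheme is \emph{not} of the uniform form $\qty(\mathcal{C}_\mathcal{P}^{\oplus k},\varepsilon^{\mathrm{cl},\lambda R},[\vec{h}/\lambda]_{P^{\oplus k}})$ that you assume: for schemes of that form the perturbative expansion (cf.\ the Remark after Lemma~\ref{lem_approx_single_test_multiple_probe} and Eq.~\eqref{eq_base_case_lambda_order_2}) gives $f_\lambda=f-\lambda^2\,R^TE^-_{P^{\oplus k}}RE_S^-R^TE^-_{P^{\oplus k}}\vec{h}+\mathcal{O}(\lambda^4)$ with a generically nonvanishing $\lambda^2$ coefficient, so no such scheme gets beyond order $2$, for any $k$. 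The scheme actually constructed in Appendix~\ref{sec_appendix_faster_approx} has coupling $R^{(k)}(\lambda)^T=\qty(\lambda R_1,\lambda^2 R_2,\lambda^4R_3,\ldots,\lambda^{2k-2}R_k,0,\ldots,0)$, whose slots carry \emph{different} powers of $\lambda$ (the $\rho_j$ chosen inductively to cancel successive error terms), and probe functions $\vec{h}_\lambda^T=(\lambda^{-1}h,h,\ldots,h)$, in which only the first component is rescaled.

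This has two consequences for your argument. First, the identity~\eqref{eq:field_scheme} you feed the classical scheme into is stated for the coupling $\lambda R$; you need the (easy, but unstated) observation that~\eqref{eq:qinduced} holds for any fixed bilinear coupling, hence for $R^{(k)}(\lambda)$ at each $\lambda$. Second, and this is where your proof genuinely fails, the exact homogeneity $\mathrm{Var}_\sigma(B_\lambda)=\lambda^{-2}\mathrm{Var}_\sigma(\varphi_\mathcal{P}^{\otimes k}(\vec{h}))$ is false for the actual scheme: since only the first component of $\vec{h}_\lambda$ carries the factor $\lambda^{-1}$, one only obtains $\mathrm{eff}(B_\lambda)=\lambda^{-1}\,\mathrm{eff}\qty(\varphi_\mathcal{P}^{\otimes k}(\lambda\vec{h}_\lambda)-\sigma(\varphi_\mathcal{P}^{\otimes k}(\lambda\vec{g}_\lambda))\openone)$, i.e., $\lambda^{-1}$ times a function of $\lambda$ that is merely continuous and nonzero at $\lambda=0$, not constant. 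That weaker statement still suffices for order $2k$---raise the continuous factor to the power $2k$ and absorb it into the bounded coefficient $\tilde{\mathcal{E}}(\lambda)$, which is precisely how the paper concludes---but the argument must proceed through continuity of this factor rather than through a scaling identity. With these two repairs your proposal coincides with the paper's proof.
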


Recall from \Sect\ref{sec:effort}, that the notion of \emph{order} of an asymptotic measurement scheme utilized a seminorm $\mathrm{eff}$ being a measure of \emph{effort}. In the context of Theorem~\ref{theo_faster_approx_fields}, a candidate for such a seminorm is given as follows. For a probe preparation state $\sigma$ consider the following positive semi-definite\footnote{This follows from the Cauchy-Schwarz inequality and the fact that $\sigma(A^*)=\overline{\sigma(A)}$.} Hermitian sesquilinear form
\begin{equation}
    \mathrm{cov}_\sigma(A,B):= \sigma(A^* B) - \sigma(A^*) \sigma(B),
\end{equation}
which induces the seminorm
\begin{equation}
    \mathrm{eff}_\sigma(A):= \sqrt{\mathrm{cov}_\sigma(A,A)}.
    \label{eq_eff_sigma}
\end{equation}
See \Sect\ref{sec_variance} for a detailed account on the interpretation of $\mathrm{eff}_\sigma(A)$.\\

Before we move on to a discussion of asymptotic measurement schemes for the Weyl algebra, let us note that each of $\mathcal{F}_\mathcal{S}, \mathcal{F}_\mathcal{P}^{\otimes k}, \mathcal{F}_\mathcal{U}$ and $\mathcal{F}_\mathcal{C}$ admits a global $\mathbb{Z}_2$ symmetry defined by its action on the generators as $ \varphi_*(F) \mapsto -\varphi_*(F)$ where $* \in \qty{\mathcal{S}, \mathcal{U}, \mathcal{P}, \mathcal{C}}$ and $F$ chosen appropriately. This symmetry may be regarded as a global \emph{gauge} symmetry, upon which only the Hermitian elements that are invariant under this transformation are deemed \emph{observables}. It follows by direct inspection of Eq.~\eqref{eq:qinduced} that $\varepsilon_\sigma^{\varphi, \lambda R}(B)$ is a $\mathbb{Z}_2$-gauge-invariant element of $\mathcal{F}_\mathcal{S}$ if $B$ is a $\mathbb{Z}_2$-gauge-invariant element of $\mathcal{F}_\mathcal{P}^{\otimes k}$ and $\sigma$ is a gauge-invariant state. (This also follows from general results proved in~\cite{fewster2018quantum}.) More importantly, if $\sigma$ is gauge invariant (for instance quasi-free with vanishing one-point function) and $A$ from above is gauge-invariant (i.e., $n_j$ is even for every $j$), then also the elements in Eq.~\eqref{eq_field_mscheme_induced} are gauge invariant and moreover the probe observables for the measurement schemes $H_\lambda^\varphi$ in Eq.~\eqref{eq_field_mscheme} are gauge invariant. In summary, every gauge-invariant $A \in \mathcal{F}_\mathcal{S}$ admits a gauge-invariant asymptotic measurement scheme, i.e., an asymptotic measurement scheme with gauge-invariant probe element.

\section{Quantum asymptotic measurement schemes: the Weyl algebra}\label{sec_qams_Weyl}

\subsection{The Weyl algebra}\label{sec:Weyl_algebra}

As in \Sect\ref{sec:field_algebra}, consider a classical equation of motion operator $P$ on $C_c^\infty(M;\mathbb{R}^k)$. The classical theory $\mathcal{C}$ induced by $P$ also 
admits a CCR-$C^*$-quantization in terms of abstract Weyl generators $W(f)$ indexed by $f \in C_c^\infty(M;\mathbb{R}^k)$. They fulfill
\begin{enumerate}
    \item $W(f)^* = W(-f)$,
    \item $W(Pf)=\openone$,
    \item $W(f) W(g) = e^{-\frac{\mathrm{i}}{2} E_P(f,g)} W(f+g)$,
\end{enumerate}
so in particular $W(0) = \openone$ and $W(f)^{*}= W(f)^{-1}$; moreover, $W(f)=W(g)$ whenever $f$ and $g$ are equivalent. The unital $*$-algebra spanned by all finite $\mathbb{C}$-linear combinations, products and adjoints of Weyl generators, subject to the above relations, can be equipped with a unique $C^*$-norm. Its completion in this norm defines the (global) CCR-$C^*$-algebra for fixed $M$, which we will denote simply by $\mathcal{A}$. 

Furthermore, to any region $N \subseteq M$, we can associate the unital $C^*$-subalgebra $\mathcal{A}(N)$ of $\mathcal{A}$ generated by $W(f)$'s indexed by $f$ with support in the region $N$, in particular $\mathcal{A}= \mathcal{A}(M)$. Again, $A\in \mathcal{A}$ is said to be localizable in a region $N$, if and only if $A \in \mathcal{A}(N)$, and just as with the field algebra, any given element is localizable in many regions.

Finally, we note that $\mathcal{A}$ is a nuclear $C^*$-algebra. In particular, if equation of motion operators $P_1$ and $P_2$ on $C_c^\infty(M;\mathbb{R}^k)$ and $C_c^\infty(M;\mathbb{R}^l)$ respectively induce CCR-$C^*$-algebras $\mathcal{A}_1$ and $\mathcal{A}_2$, the unique $C^*$ tensor product $\mathcal{A}_1 \overline{\otimes} \mathcal{A}_2$ is isomorphic to the CCR-$C^*$-algebra induced by $P_1 \oplus P_2$ on $C_c^\infty(M;\mathbb{R}^{k+l})$ \emph{as a $C^*$-algebra}.\footnote{See for instance Proposition 18.1-18 in~\cite{Honegger2015}, and Theorem 10.10 in~\cite{Evans1977} for the nuclearity of the CCR-$C^*$-algebras.}

The advantage of using the $C^*$-quantization (instead of for instance the $*$-quantization described in~\cite{fewster2018quantum}) is the ability to utilize the well-developed $C^*$-representation theory and to consider the associated von Neumann algebras as we will do in Corollary~\ref{corol_vN}. Formally, the $W(f)$'s can be viewed as ``exponentiated smeared quantum fields'' 
\begin{equation}
W(f) = e^{\mathrm{i} (\varphi_1(f^1) + ... + \varphi_k(f^k))},
\end{equation}
for $f=(f^1,...,f^k)^T$. This can be made rigorous, for instance, in the GNS representation of \emph{analytic} states.

For the convenience of the reader, an introduction to analytic states, quasi-free states, field operators and the GNS representation can be found in Appendix~\ref{appendix_conv_field_ops}. In brief, states on $\mathcal{A}$ may be used to connect the above $C^*$-quantization to field operators and also to von Neumann algebras in the following way.

\begin{itemize}
    \item In a given GNS representation $\pi$ of an analytic state on the algebra $\mathcal{A}$ there is a densely defined self-adjoint field operator $\varphi^\pi(f)$ for every $f \in C_c^\infty(M;\mathbb{R}^k)$ such that
\begin{equation}
    \pi(W(f)) = e^{\mathrm{i}\varphi^\pi(f)}.
\end{equation}
Note that $\varphi^\pi$ can also be regarded as a multiplet of $k$ scalar fields.
\item For a given fixed state $\omega$ one can define an AQFT of von Neumann algebras by associating to every region $N$ the weak closure of $\pi_\omega[\mathcal{A}(N)]$ in $BL(\mathcal{H}_\omega)$, or equivalently (by von Neumann's bicommutant Theorem)
\begin{equation}
    \overline{\pi_\omega[\mathcal{A}(N)]}^w=\qty(\pi_\omega[\mathcal{A}(N)])'' \subseteq BL(\mathcal{H}_\omega). 
\end{equation}
\end{itemize}

It is well known that the $C^*$-norm topology $\tau_{\|\cdot\|}$ is too strong for physical purposes -- for example, all differences of distinct Weyl generators have norm $2$, see for instance Proposition 7~in~\cite{Baer2009} -- hence we will define a more useful topology $\tau$ by reference to the strong$^*$ operator topologies in suitable GNS representations. The resulting topology will be used in our discussion of asymptotic measurement schemes.\\

To prepare for the definition of $\tau$, we remind the reader that a topology $\tau_A$ on a set $X$ is called weaker (or coarser, or smaller) than topology $\tau_B$ on $X$, if and only if $\tau_A \subseteq \tau_B$. In this case one also says that $\tau_B$ is stronger (or finer, or larger) than $\tau_A$. The weakest topology is $\qty{\emptyset, X}$, the strongest topology is the power set of $X$. In particular, every set that is $\tau_A$-open is also $\tau_B$ open, but $\tau_B$ has (possibly) more open sets, so it is more difficult for a net (Moore-Smith sequence) to converge. Every net $(a_\alpha)_\alpha$ that converges to a point $a$ in the stronger topology $\tau_B$ also converges in the weaker topology $\tau_A$, but the converse does not hold in general. If $Z \subseteq X$ is $\tau_B$-dense, then it is also $\tau_A$-dense. Any map $f:X \to Y$ for a topological space $Y$ that is continuous with respect to the topology $\tau_A$ is also continuous with respect to the topology $\tau_B$, but the converse does not hold in general. In summary, stronger topologies have more open sets, fewer convergent nets, fewer dense subsets, and more continuous functions into other topological spaces.

Although the norm topology $\tau_{\|\cdot\|}$ on $\mathcal{A}_\mathcal{S}$ is too strong, every state $\omega$ on $\mathcal{A}_\mathcal{S}$ induces three further interesting topologies via its GNS representation (see Appendix~\ref{appendix_conv_field_ops}), which are both weaker than $\tau_{\|\cdot\|}$.

\begin{mydef}
Let $\omega$ be a state on $\mathcal{A}_\mathcal{S}$ with GNS representation $\pi_\omega: \mathcal{A}_\mathcal{S}\to BL(\mathcal{H}_\omega)$. Then we define
\begin{enumerate}
    \item the $\pi_\omega$-weak operator topology $\tau_w^\omega$ on $\mathcal{A}_\mathcal{S}$ as the weakest topology such that $\pi_\omega: \mathcal{A}_\mathcal{S} \to \qty(BL(\mathcal{H}_\omega), \tau_w)$ is continuous, where $\tau_w$ is the weak operator topology,
    \item the $\pi_\omega$-strong operator topology $\tau_{st}^\omega$ on $\mathcal{A}_\mathcal{S}$ as the weakest topology such that $\pi_\omega: \mathcal{A}_\mathcal{S} \to \qty(BL(\mathcal{H}_\omega), \tau_{st})$ is continuous, where $\tau_{st}$ is the strong operator topology, and
    \item the $\pi_\omega$-strong$^*$ operator topology $\tau_{st^*}^\omega$ on $\mathcal{A}_\mathcal{S}$ as the weakest topology such that $\pi_\omega: \mathcal{A}_\mathcal{S} \to \qty(BL(\mathcal{H}_\omega), \tau_{st^*})$ is continuous, where $\tau_{st^*}$ is the strong$^*$ operator topology.
\end{enumerate}
Equipped with either $\tau_w^\omega$ or $\tau_{st}^\omega$ or $\tau_{st^*}^\omega$, $\mathcal{A}_\mathcal{S}$ is a locally convex topological vector space and 
\begin{enumerate}
    \item $\tau_w^\omega$ is generated by the family of seminorms $|\braket{x|\pi_\omega (\cdot)y}_\omega|$ for $x,y \in \mathcal{H}_\omega$,
    \item $\tau_{st}^\omega$ is generated by the family of seminorms $\|\pi_\omega (\cdot)x\|_\omega$ for $x \in \mathcal{H}_\omega$, and
     \item $\tau_{st^*}^\omega$ is generated by the family of seminorms $\|\pi_\omega (\cdot)x\|_\omega +\|\pi_\omega (\cdot)^*x\|_\omega$ for $x \in \mathcal{H}_\omega$.
\end{enumerate}
It holds that
\begin{equation}
    \tau_w^\omega \subseteq \tau_{st}^\omega \subseteq \tau_{st^*}^\omega \subseteq \tau_{\|\cdot\|}.
\end{equation}
\label{def_operator_topologies}
\end{mydef}

We now introduce the topology $\tau$.
\begin{mydef}
We define $\mathfrak{S}_c$ to be the set of all states on $\mathcal{A}_\mathcal{S}$ such that $\omega \circ W_\mathcal{S}: C_c^\infty(M;\mathbb{R}) \to \mathbb{C}$ is continuous with respect to the 
standard topology on $C_c^\infty(M;\mathbb{R})$. Then $\tau$ is defined to be the locally convex topology generated by the seminorms $\|\pi_\omega (\cdot)x\|_\omega {+\|\pi_\omega (\cdot)^*x\|_\omega}$ for $x \in \mathcal{H}_\omega$ and $\omega \in \mathfrak{S}_c$, i.e., the weakest topology that contains every $\tau_{st^*}^\omega$ for $\omega \in \mathfrak{S}_c$.
\label{def_tau}
\end{mydef}

\paragraph*{Remark:} $\mathfrak{S}_c$ contains in particular all quasi-free states with distributional two-point function, see Appendix~\ref{appendix_conv_field_ops}. Moreover, every state in $\mathfrak{S}_c$ is \emph{regular}, see for instance \Sect~5.2.3 in~\cite{BratteliRobinsonII1997}.\\

The usefulness of this topology is illustrated by the following lemma.

\begin{lem}
For every $\omega \in \mathfrak{S}_c$
\begin{equation}
    \tau_w^\omega \subseteq \tau_{st}^\omega {\subseteq \tau_{st^*}^\omega} \subseteq \tau \subseteq \tau_{\|\cdot\|},
\end{equation}
and $\omega: (\mathcal{A}_\mathcal{S},\tau) \to \mathbb{C}$ as well as the GNS representation $\pi_\omega: (\mathcal{A}_\mathcal{S},\tau) \to BL(\mathcal{H}_\omega)$ is continuous, where $BL(\mathcal{H}_\omega)$ is equipped either with its strong$^*$ or strong or weak operator topology.

Furthermore, $\tau$ is $*$-compatible and the map 
\begin{equation}
    \begin{aligned}
    W_\mathcal{S}: C_c^\infty(M;\mathbb{R}) &\to \qty(\mathcal{A}_\mathcal{S}, \tau)\\
    f &\mapsto W_\mathcal{S}(f)
    \end{aligned}
\end{equation}
is continuous and for every $\tau$-dense $\mathcal{I} \subseteq \mathcal{A}_\mathcal{S}$,
\begin{equation}
    \overline{\pi_\omega\qty[\mathcal{I}]}^w = \overline{\pi_\omega\qty[\mathcal{A}_\mathcal{S}]}^w = \qty(\pi_\omega\qty[\mathcal{A}_\mathcal{S}])'',
    \label{eq_dense_vN}
\end{equation}
where $\overline{\cdot}^w$ denotes the closure in $\qty(BL(\mathcal{H}_\omega), \tau_w)$.
\label{lem_W_tau_cont}
\end{lem}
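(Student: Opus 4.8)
The plan is to establish the claim in three logically separate pieces: the chain of topology inclusions, the continuity statements, and the von Neumann density identity \eqref{eq_dense_vN}. I would begin with the inclusions $\tau_w^\omega \subseteq \tau_{st}^\omega \subseteq \tau_{st^*}^\omega \subseteq \tau \subseteq \tau_{\|\cdot\|}$. The first two inclusions are already recorded in Definition~\ref{def_operator_topologies}. The inclusion $\tau_{st^*}^\omega \subseteq \tau$ for $\omega \in \mathfrak{S}_c$ is immediate from the definition of $\tau$ in Definition~\ref{def_tau}, since $\tau$ is generated by the union over all $\omega' \in \mathfrak{S}_c$ of the seminorms defining $\tau_{st^*}^{\omega'}$, and hence contains each $\tau_{st^*}^\omega$. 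For the final inclusion $\tau \subseteq \tau_{\|\cdot\|}$, I would observe that each generating seminorm $\|\pi_\omega(\cdot)x\|_\omega + \|\pi_\omega(\cdot)^*x\|_\omega$ is dominated by $2\|x\|_\omega \|\cdot\|$, using that $\pi_\omega$ is norm-contractive and $*$-preserving; thus every $\tau$-generating seminorm is norm-continuous and $\tau \subseteq \tau_{\|\cdot\|}$ follows.

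Next I would dispatch the continuity assertions. That $\pi_\omega: (\mathcal{A}_\mathcal{S},\tau) \to (BL(\mathcal{H}_\omega),\tau_{st^*})$ is continuous is essentially the defining property of $\tau$: $\tau$ is finer than $\tau_{st^*}^\omega$, and $\pi_\omega$ is $\tau_{st^*}^\omega$-to-$\tau_{st^*}$ continuous by Definition~\ref{def_operator_topologies}; composing and using that a finer domain topology preserves continuity gives the result, and the strong and weak versions follow since $\tau_w \subseteq \tau_{st} \subseteq \tau_{st^*}$. Continuity of the state $\omega: (\mathcal{A}_\mathcal{S},\tau)\to\mathbb{C}$ follows by writing $\omega(A) = \braket{\Omega_\omega | \pi_\omega(A)\Omega_\omega}_\omega$ for the GNS cyclic vector $\Omega_\omega$ and bounding $|\omega(A)| \le \|\Omega_\omega\|_\omega\,\|\pi_\omega(A)\Omega_\omega\|_\omega$, which is controlled by a single $\tau$-generating seminorm with $x = \Omega_\omega$. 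For continuity of $W_\mathcal{S}: C_c^\infty(M;\mathbb{R}) \to (\mathcal{A}_\mathcal{S},\tau)$ it suffices, since $\tau$ is generated by the stated seminorms, to check continuity of each $f \mapsto \|\pi_\omega(W_\mathcal{S}(f))x\|_\omega + \|\pi_\omega(W_\mathcal{S}(f))^*x\|_\omega$ for fixed $\omega \in \mathfrak{S}_c$ and $x \in \mathcal{H}_\omega$; this reduces, by a density/triangle-inequality argument, to the strong continuity of $f \mapsto \pi_\omega(W_\mathcal{S}(f)) = e^{i\varphi^\pi(f)}$, which in turn follows from the regularity of $\omega$ and the continuity of $\omega \circ W_\mathcal{S}$ built into the definition of $\mathfrak{S}_c$ (this is exactly why $\mathfrak{S}_c$ was defined with that continuity requirement). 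The $*$-compatibility of $\tau$ is visible from the symmetric form of the generating seminorms, which are invariant under $A \mapsto A^*$.

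The genuinely substantive step, and the one I expect to be the main obstacle, is the density identity \eqref{eq_dense_vN}: that the weak-operator closure of $\pi_\omega[\mathcal{I}]$ coincides with $\pi_\omega[\mathcal{A}_\mathcal{S}]''$ for every $\tau$-dense $\mathcal{I}$. The inclusion $\overline{\pi_\omega[\mathcal{I}]}^w \subseteq \overline{\pi_\omega[\mathcal{A}_\mathcal{S}]}^w = (\pi_\omega[\mathcal{A}_\mathcal{S}])''$ is routine, using monotonicity of closure and von Neumann's bicommutant theorem for the final equality. The reverse inclusion is the crux: I would argue that since $\mathcal{I}$ is $\tau$-dense in $\mathcal{A}_\mathcal{S}$ and $\pi_\omega$ is $(\tau, \tau_{st^*})$-continuous (hence $(\tau,\tau_w)$-continuous), the image $\pi_\omega[\mathcal{A}_\mathcal{S}]$ lies in the $\tau_w$-closure of $\pi_\omega[\mathcal{I}]$; that is, $\pi_\omega[\mathcal{A}_\mathcal{S}] \subseteq \overline{\pi_\omega[\mathcal{I}]}^w$. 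Taking weak closures of both sides then yields $\overline{\pi_\omega[\mathcal{A}_\mathcal{S}]}^w \subseteq \overline{\pi_\omega[\mathcal{I}]}^w$, closing the loop. The subtlety I would watch carefully is that $\tau$-density only guarantees approximation by nets, not sequences, so the continuity of $\pi_\omega$ must be invoked at the level of nets, and one must ensure the weak-operator closure is taken in the net sense (which for a bounded-below situation is automatic, but here the images are unitaries and their linear combinations, so boundedness on the relevant pieces should be checked to justify that $\tau_w$-limits of the net land in $BL(\mathcal{H}_\omega)$). Once the net-continuity of $\pi_\omega$ into $(BL(\mathcal{H}_\omega),\tau_w)$ is in hand, the density transfer is clean.
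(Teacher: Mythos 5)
Your proposal is correct and takes essentially the same route as the paper: the inclusions and the continuity of $\omega$ and $\pi_\omega$ are read off from Definitions~\ref{def_operator_topologies} and~\ref{def_tau}, the continuity of $W_\mathcal{S}$ is reduced to strong$^*$ continuity of $\pi_\omega\circ W_\mathcal{S}$ for each $\omega\in\mathfrak{S}_c$ (the paper delegates exactly this step to Lemma~\ref{lem_strong_convergence_id}, proved by the Weyl-relation/density/unitarity argument you sketch), and the density identity~\eqref{eq_dense_vN} follows by pushing $\tau$-density through the $(\tau,\tau_w)$-continuity of $\pi_\omega$ and invoking the bicommutant theorem, just as in the paper. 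One minor slip: continuity of the compositions $f\mapsto \|\pi_\omega(W_\mathcal{S}(f))x\|_\omega+\|\pi_\omega(W_\mathcal{S}(f))^*x\|_\omega$ is not literally sufficient for a nonlinear map into a locally convex space---what is required (and what you in fact verify next) is that these seminorms evaluated on $W_\mathcal{S}(f_n)-W_\mathcal{S}(f)$ tend to zero, i.e.\ strong$^*$ convergence, so the argument stands as you subsequently run it.
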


\begin{proof}
The first sentence follows straightforwardly from Definition~\ref{def_operator_topologies} and Definition~\ref{def_tau}.

Recall that a function $F$ from some topological space $X$ into $\qty(\mathcal{A}_\mathcal{S},\tau)$ is continuous if and only if $F$ is continuous into  $\qty(\mathcal{A}_\mathcal{S},\tau_{st^*}^\omega)$ for every $\omega \in \mathfrak{S}_c$, which holds if and only if $\pi_\omega \circ  F : X \to \qty(BL(\mathcal{H}_\omega), \tau_{st^*})$ is continuous for every $\omega \in \mathfrak{S}_c$. It then follows by Lemma~\ref{lem_strong_convergence_id} that $W_\mathcal{S}$ is continuous. The $\tau$-continuity of the $*$-operation is easy to see from the definition of $\tau_{st^*}^\omega$.

Finally, since $\tau$ is stronger than $\tau_w^\omega$, $\mathcal{I}$ is $\tau_w^\omega$-dense in $\mathcal{A}_\mathcal{S}$ so $\pi_\omega \qty[\mathcal{I}]$ is $\tau_w$-dense in $\pi_\omega\qty[\mathcal{A}_{ \mathcal{S} }]$. Hence $\pi_\omega \qty[\mathcal{I}]$ is also $\tau_w$-dense in $\overline{\pi_\omega\qty[\mathcal{A}_\mathcal{S}]}^w = \qty(\pi_\omega\qty[\mathcal{A}_\mathcal{S}])''$.
\end{proof}

\subsection{Asymptotic measurement schemes}

We return to the situation described in \Sect\ref{sec:classical_combinations}, of a
system consisting of a single linear real system field $\varphi_\mathcal{S}$ with equation of motion operator $S$ on $C^\infty(M;\mathbb{R})$ and $k$ linear real probe fields $\qty(\varphi_\mathcal{P})_j$ with combined equation of motion operator $P^{ \oplus k}$ on $C^\infty(M;\mathbb{R}^{k})$, with $S$ and $P^{ \oplus k}$ both linear, second order, normally hyperbolic and formally self-adjoint. Then $S$ and $P^{ \oplus k}$ determine system and probe theories $\mathcal{A}_\mathcal{S}$ and $\mathcal{A}_{\mathcal{P}}^{\otimes k}$ with Weyl generators denoted by $W_\mathcal{S}$ and $W_{\mathcal{P}}^{\otimes k}$ respectively, { where $\mathcal{A}_\mathcal{P}$ is the theory induced by $P$. As mentioned before,} $\mathcal{A}_\mathcal{S} \overline{\otimes} \mathcal{A}_{\mathcal{P}}^{\otimes k}$, is isomorphic to the AQFT $\mathcal{A}_{\mathcal{U}}$ obtained from $S\oplus P^{ \oplus k}$
under the isomorphism sending
\begin{equation}
    W_\mathcal{S}(f) \otimes W_\mathcal{P}^{ \otimes k}(\vec{h}) \mapsto W_\mathcal{U}(f \oplus \vec{h}),
\end{equation}
for $f \in C_c^\infty(M;\mathbb{R})$, $\vec{h} \in C_c^\infty(M;\mathbb{R}^{k})$, where the Weyl generators of $\mathcal{A}_\mathcal{U}$ are denoted $W_\mathcal{U}$. This theory describes the \emph{free, uncoupled} combination of system and probe. Meanwhile, the AQFT $\mathcal{A_C}$ induced by 
$T{_\lambda}$ will be the coupled variant of the free combination of the system and the probe theory according to the FV framework.

The associated scattering map $\Theta_\lambda$ was derived in~\cite{fewster2018quantum} and acts on Weyl generators by 
\begin{equation}\label{eq:scattering}
    \Theta{_\lambda} W_\mathcal{U}(F) = W_\mathcal{U}(\theta_\lambda F),
\end{equation}
with $\theta_\lambda$ as in~\eqref{eq:unvartheta_def} { for $F \in C_c^\infty(M^+; \mathbb{R}^{k+1})$}. The induced observables may now be described.
We consider the specific case where the probe element is comprised of one Weyl generator for each of the $k$ probe fields, tensored together. This corresponds to a Weyl generator $W_\mathcal{P}^{ \otimes k}(\vec{h}_\lambda)$ of $\mathcal{A}_\mathcal{P}^{ \otimes k}$ where $\vec{h}_\lambda \in C_c^\infty(M^+; \mathbb{R}^{k})$ is supported in the `out' region. 
Using~\eqref{eq:scattering} and~\eqref{eq:convenient} (and its notation) we have
\begin{equation}\label{eq:prevarepsGH}
    \Theta_\lambda \qty(\openone \otimes W_\mathcal{P}^{ \otimes k}(\vec{h}_\lambda)) = W_\mathcal{S}(f_\lambda)\otimes W_\mathcal{P}^{ \otimes k}(\vec{g}_\lambda) .
\end{equation}
Consequently, the corresponding induced element is
\begin{equation}\label{eq:varepsGH}
    \varepsilon_{\sigma}^{W, \lambda R}(W_\mathcal{P}^{ \otimes k}(\vec{h}_\lambda)) = \sigma(W_\mathcal{P}^{ \otimes k}(\vec{g}_\lambda)) W_\mathcal{S}(f_\lambda),
\end{equation}
which is a clear analogue of Eq.~\eqref{eq_fields_induced_obs_formal_series}. In particular, $\varepsilon_{\sigma}^{W, \lambda R}(W_\mathcal{P}^{ \otimes k}(\vec{h}_\lambda))$ can be localised in any region containing the support of $R$.

The main result of this section is that every element of the Weyl algebra $\mathcal{A}_\mathcal{S}$ admits an asymptotic measurement scheme with respect to the topology $\tau$ from Definition~\ref{def_tau}.
 
\begin{theo} 
Let $\tau$ be the topology on $\mathcal{A}_{\mathcal{S}}$ given in Definition~\ref{def_tau}. Then, for every precompact region $N$, every $A \in \mathcal{A}_\mathcal{S}(N)$ and every region $L \subseteq M \setminus J^-(\overline{N})$ such that $N \subseteq D(L)$ there exists a $\tau$-\emph{asymptotic measurement scheme} for $A$ with coupling in $N$ and processing region $L$. If $A$ is Hermitian, then the $\tau$-asymptotic measurement scheme can be chosen to be Hermitian as well.
\label{theo_asympt_m_scheme}
\end{theo}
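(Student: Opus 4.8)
The plan is to mirror the field-algebra argument behind Theorem~\ref{theo_existence_of_asymp_measurement_schemes_field_algebra}, reducing the statement to asymptotic schemes for single Weyl generators, propagating through finite linear combinations, and then closing up by density. Since the Weyl relations collapse every product $W_\mathcal{S}(f)W_\mathcal{S}(g)$ into a scalar multiple of $W_\mathcal{S}(f+g)$ and $W_\mathcal{S}(f)^*=W_\mathcal{S}(-f)$, the $*$-algebra generated by the $W_\mathcal{S}(f)$ with $\supp f\subseteq N$ is precisely their finite $\mathbb{C}$-linear span $D$, whose norm closure is $\mathcal{A}_\mathcal{S}(N)$. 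As $\tau\subseteq\tau_{\|\cdot\|}$ by Lemma~\ref{lem_W_tau_cont}, the set $D$ is also $\tau$-dense in $\mathcal{A}_\mathcal{S}(N)$. Hence, once I show that each element of $D$ carries a $\tau$-asymptotic measurement scheme with coupling in $N$ and processing region $L$, I can invoke the localized analogue of Lemma~\ref{lem_closed} (whose proof applies verbatim while preserving the localization data, since a net of schemes with coupling in $N$ and probe elements in $\mathcal{A}_\mathcal{P}(L)$ retains both properties): the set of such $\tau$-asymptotically inducible elements is $\tau$-closed, contains the $\tau$-dense set $D$, and therefore contains all of $\mathcal{A}_\mathcal{S}(N)$.

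For a single generator $W_\mathcal{S}(f)$ with $\supp f\subseteq N$, I would first apply Lemma~\ref{lem_reduction_to_slab_classical} to reverse-engineer, for the given admissible $L$, a coupling $\rho$ supported in a precompact $\tilde N\subseteq N$ (so the coupling zone lies in $N$) and an $h\in C_c^\infty(L;\mathbb{R})$ with $-RE_P^-h$ equivalent to $f$, whence $W_\mathcal{S}(-RE_P^-h)=W_\mathcal{S}(f)$. Setting $h_\lambda=h/\lambda$ and reading off the induced element from Eq.~\eqref{eq:varepsGH} gives $\varepsilon_\sigma^{W,\lambda R}(W_\mathcal{P}(h_\lambda))=\sigma(W_\mathcal{P}(g_\lambda))\,W_\mathcal{S}(f_\lambda)$, with $f_\lambda,g_\lambda$ as in Eq.~\eqref{eq:convenient}. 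The feature absent in the field algebra is the scalar prefactor $\sigma(W_\mathcal{P}(g_\lambda))$: I would take $\sigma$ quasi-free, for which it equals $e^{-\frac12\mu(g_\lambda,g_\lambda)}$ and is nonzero for every $\lambda>0$ (it tends to $0$ as $\lambda\to0$, which only makes the probe element's norm diverge, which is harmless). Dividing it out, the rescaled probe element $B_\lambda:=\sigma(W_\mathcal{P}(g_\lambda))^{-1}W_\mathcal{P}(h_\lambda)\in\mathcal{A}_\mathcal{P}(L)$ induces exactly $W_\mathcal{S}(f_\lambda)$. Using the classical convergence $f_\lambda\to-RE_P^-h$ in the test-function topology (Lemma~\ref{lem_approx_single_test_multiple_probe_classical}) together with the $\tau$-continuity of $W_\mathcal{S}$ from Lemma~\ref{lem_W_tau_cont}, this yields $\varepsilon_\sigma^{W,\lambda R}(B_\lambda)=W_\mathcal{S}(f_\lambda)\to W_\mathcal{S}(f)$ in $\tau$.

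To reach a general element $A=\sum_{j=1}^k c_j W_\mathcal{S}(f_j)$ of $D$, I would use $k$ probe fields with couplings $R=(R_1,\dots,R_k)^T$ obtained by applying Lemma~\ref{lem_reduction_to_slab_classical} to each $f_j$, exactly as in the combination result Theorem~\ref{theo_comb_classical_mscheme}. For the $j$-th probe datum the induced element is $\sigma(W_\mathcal{P}^{\otimes k}(\vec g^j_\lambda))\,W_\mathcal{S}(f^j_\lambda)$, where Lemma~\ref{lem_approx_single_test_multiple_probe} guarantees $f^j_\lambda\to-R_jE_P^-h^j\equiv f_j$ despite the probe–probe interaction mediated by the system. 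Rescaling each summand and forming $B_\lambda=\sum_{j}c_j\,\sigma(W_\mathcal{P}^{\otimes k}(\vec g^j_\lambda))^{-1}W_\mathcal{P}^{\otimes k}(\vec h^j_\lambda)\in\mathcal{A}_\mathcal{P}^{\otimes k}(L)$, the linearity of $\varepsilon_\sigma^{W,\lambda R}$ and the $\tau$-continuity of $W_\mathcal{S}$ give $\varepsilon_\sigma^{W,\lambda R}(B_\lambda)=\sum_j c_j W_\mathcal{S}(f^j_\lambda)\to A$ in $\tau$, with coupling in $N$ (all $\supp\rho_j\subseteq N$) and processing region $L$ (all $\supp h^j\subseteq L$). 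Together with the density argument this proves the statement for every $A\in\mathcal{A}_\mathcal{S}(N)$. For the Hermitian refinement, since $\tau$ is $*$-compatible (Lemma~\ref{lem_W_tau_cont}) and $\varepsilon_\sigma^{W,\lambda R}$ is $*$-preserving, replacing $B_\lambda$ by $\tfrac12(B_\lambda+B_\lambda^*)$ induces $\tfrac12(\varepsilon_\sigma^{W,\lambda R}(B_\lambda)+\varepsilon_\sigma^{W,\lambda R}(B_\lambda)^*)\to\tfrac12(A+A^*)=A$ whenever $A=A^*$, yielding a Hermitian scheme.

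The main obstacle I anticipate is the bookkeeping of the multiplicative scalar prefactor $\sigma(W_\mathcal{P}^{\otimes k}(\vec g_\lambda))$, and in particular verifying that a single fixed probe state keeps it nonvanishing along the whole net so that the rescaling is legitimate; the quasi-free choice settles this cleanly and simultaneously links the construction to the divergence of ``effort'' in Section~\ref{sec:effort}. The remaining points — the reduction of a $C^*$-algebra element to finite Weyl combinations, the $\tau$-density of $D$, and the passage to the von Neumann setting — are transcriptions of the classical and field-algebra arguments through the continuity and density properties of $\tau$ collected in Lemma~\ref{lem_W_tau_cont}.
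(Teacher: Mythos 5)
Your proposal is correct and takes essentially the same route as the paper's own proof: lift the classical schemes of Lemma~\ref{lem_reduction_to_slab_classical} and Lemma~\ref{lem_approx_single_test_multiple_probe} to rescaled Weyl probe elements $\sigma(W_\mathcal{P}^{\otimes k}(\vec g^{\,j}_\lambda))^{-1}W_\mathcal{P}^{\otimes k}(\vec h^{\,j}_\lambda)$ with a quasi-free probe state guaranteeing the nonvanishing prefactor, extend by linearity to finite linear combinations of Weyl generators, pass to all of $\mathcal{A}_\mathcal{S}(N)$ via norm- (hence $\tau$-) density together with the closedness argument of Lemma~\ref{lem_closed}, and Hermitize using the $*$-compatibility of $\tau$. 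The only differences are presentational: you make explicit that the closure argument preserves the coupling zone $N$ and the processing region $L$, and that the Weyl relations collapse the generated $*$-algebra to the linear span of generators, two points the paper leaves implicit.
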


To prove Theorem~\ref{theo_asympt_m_scheme}, we will show that we can find an asymptotic measurement scheme for every element in a $\tau$-dense subset of $\mathcal{A}_\mathcal{S}$ and hence, by Lemma~\ref{lem_closed}, for
all elements of $\mathcal{A}_\mathcal{S}$. It follows immediately from Theorem~\ref{theo_asympt_m_scheme} and Lemma~\ref{lem_closed} that the set of elements $\mathcal{I}$ of $\mathcal{A}_\mathcal{S}$ admitting a \emph{bona fide} measurement scheme is $\tau$-dense in $\mathcal{A}_\mathcal{S}$. Moreover, we have the following corollary to Theorem~\ref{theo_asympt_m_scheme}.

\begin{corol}
For every $\omega \in \mathfrak{S}_c$ and every element $A$ in the von Neumann algebra $\qty(\pi_\omega\qty[\mathcal{A}_\mathcal{S}])''$ there exists a net $(A_\alpha)_{\alpha} \subseteq \mathcal{A}_\mathcal{S}$ and measurement schemes $H_\alpha$ for each $A_\alpha$ such that $\pi_ \omega(A_\alpha) \to A$ in the weak topology. If $A$ is Hermitian, then every $H_\alpha$ can be chosen to be Hermitian as well.
\label{corol_vN}
\end{corol}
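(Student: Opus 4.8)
The plan is to obtain the corollary as an essentially immediate consequence of Lemma~\ref{lem_W_tau_cont}, which already supplies the bridge between $\tau$-density and weak-operator density in the GNS representation. Write $\mathcal{M}:=(\pi_\omega[\mathcal{A}_\mathcal{S}])''$. First I would recall that, by Theorem~\ref{theo_asympt_m_scheme} together with Lemma~\ref{lem_closed}, the set $\mathcal{I}\subseteq\mathcal{A}_\mathcal{S}$ of elements admitting a \emph{bona fide} measurement scheme is $\tau$-dense in $\mathcal{A}_\mathcal{S}$ (this is the observation recorded just before the corollary). Applying Eq.~\eqref{eq_dense_vN} of Lemma~\ref{lem_W_tau_cont} with this $\mathcal{I}$ in the role of the $\tau$-dense subset then gives $\overline{\pi_\omega[\mathcal{I}]}^w=\mathcal{M}$, where the closure is taken in the weak operator topology.

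For a general $A\in\mathcal{M}$ this already does the work: since $A$ lies in the weak-operator closure of $\pi_\omega[\mathcal{I}]$, I would choose a net $(A_\alpha)_\alpha\subseteq\mathcal{I}$ with $\pi_\omega(A_\alpha)\to A$ weakly (picking preimages of a convergent net in $\pi_\omega[\mathcal{I}]$). By definition of $\mathcal{I}$, each $A_\alpha$ carries a measurement scheme $H_\alpha=(\mathcal{P}_\alpha,\varepsilon_{\sigma_\alpha},B_\alpha)$, which establishes the first assertion.

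For the Hermitian refinement I would pass to real parts, exploiting that induced-observable maps intertwine the $*$-operations, $\varepsilon_{\sigma_\alpha}(B_\alpha^*)=\varepsilon_{\sigma_\alpha}(B_\alpha)^*$ (Theorem~3.2 in~\cite{fewster2018quantum}, as already used in remark~\ref{real_part_ams}). By linearity, the Hermitian probe element $\tfrac{1}{2}(B_\alpha+B_\alpha^*)$ then induces $A_\alpha':=\tfrac{1}{2}(A_\alpha+A_\alpha^*)$, so each $A_\alpha'$ admits a \emph{Hermitian} measurement scheme. Because the adjoint is weak-operator continuous and $A=A^*$, we have $\pi_\omega(A_\alpha^*)=\pi_\omega(A_\alpha)^*\to A^*=A$, whence $\pi_\omega(A_\alpha')\to A$ weakly as well. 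Replacing $A_\alpha$ by $A_\alpha'$ and $H_\alpha$ by the associated Hermitian scheme then yields the Hermitian statement.

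The genuine substance of this result therefore sits in Lemma~\ref{lem_W_tau_cont} rather than in the corollary itself: the real obstacle is the passage from $\tau$-density to weak-operator density in $\mathcal{M}$ encoded in Eq.~\eqref{eq_dense_vN}, which rests on the inclusion $\tau_w^\omega\subseteq\tau$ for $\omega\in\mathfrak{S}_c$ and on von Neumann's bicommutant theorem. The only point in the corollary proper that requires care is the Hermitian case, where one must check that taking real parts of \emph{probe} elements produces genuine Hermitian measurement schemes (not merely Hermitian induced elements) and that weak convergence is preserved; both follow from the $*$-intertwining property above and the weak continuity of the adjoint.
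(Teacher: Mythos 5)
Your proposal is correct and follows essentially the same route as the paper's own (very brief) proof: weak-operator density of $\pi_\omega[\mathcal{I}]$ in $\qty(\pi_\omega[\mathcal{A}_\mathcal{S}])''$ via Eq.~\eqref{eq_dense_vN} of Lemma~\ref{lem_W_tau_cont}, followed by the weak continuity of the $*$-operation to handle the Hermitian case. Your write-up merely spells out the details the paper leaves implicit (choosing preimages of the weakly convergent net, and passing to real parts of the probe elements via $\varepsilon_{\sigma_\alpha}(B_\alpha^*)=\varepsilon_{\sigma_\alpha}(B_\alpha)^*$ to get genuinely Hermitian schemes), all of which are accurate.
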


\begin{proof}
By Eq.~\eqref{eq_dense_vN} in Lemma~\ref{lem_W_tau_cont}, $\pi_\omega[\mathcal{I}]$ is weakly dense in $\qty(\pi_\omega\qty[\mathcal{A}_\mathcal{S}])''$. Due to the weak continuity of the star operation, the Hermitian elements in $\pi_\omega[\mathcal{I}]$ are weakly dense in the Hermitian elements of $\qty(\pi_\omega\qty[\mathcal{A}_\mathcal{S}])''$.
\end{proof}

In this sense one could call $(H_\alpha)_\alpha$ a $\tau_w$-asymptotic measurement scheme for $A$, whether or not the measurement scheme can be fully implemented at the level of von Neumann algebras.\footnote{In fact, by the causal propagation property of the Hadamard form we see that $\Theta$ preserves the class of quasi-free Hadamard states on $\mathcal{A}_\mathcal{S} \otimes \mathcal{A}_\mathcal{P}$. It then follows from Theorem 3.6 in~\cite{Verch1997} and Theorem 2.4.26 (1) in~\cite{BratteliRobinsonI1987} that $\Theta$, and hence the measurement schemes $H_\alpha$, are implementable on the von Neumann algebra of any precompact region $\tilde{M} \subseteq M$ that contains $N$ and the processing region $L$.}\\

We will now proceed to prove Theorem~\ref{theo_asympt_m_scheme}. The first observation is that -- just as for the field algebra -- the classical asymptotic measurement schemes provided by Theorem~\ref{theo_classical_asympt_m_scheme} lift immediately to give asymptotic measurement schemes for any multiple of a Weyl generator in the quantum theory. To see this, let $W_\mathcal{S}(f) \in \mathcal{A}_\mathcal{S}(N)$ for a precompact region $N$ and let $L$ be a region in $ M \setminus J^-(\overline{N})$ such that $N \subseteq D(L)$. Let $H_\lambda^\mathrm{cl}$ be the classical asymptotic measurement scheme for $\qty[f]_S$ from above and let $g_\lambda$ be as in Eq.~\eqref{eq:convenient}. We choose a fixed probe state $\sigma$ so that for every $\lambda:$ $\sigma(W_\mathcal{P}(g_\lambda)) \neq 0$ (for instance by choosing a quasi-free state, see Appendix~\ref{appendix_conv_field_ops}). Then, for every $c \in \mathbb{C}$ the collection of quantum measurement schemes
\begin{equation}
    \qty(\mathcal{A}_\mathcal{P},\varepsilon_{\sigma}^{W, R_\lambda}, c \frac{W_\mathcal{P}(h_\lambda)}{\sigma(W_\mathcal{P}(g_\lambda))} )
    \label{eq_single_mscheme_quant}
\end{equation}
is a $\tau$-asymptotic measurement scheme for $c W_\mathcal{S}(f)$ with coupling in $N$ and processing region $L$. This follows immediately, by observing that
\begin{equation}
     \varepsilon_{\sigma}^{W, R_\lambda}\qty(c \frac{W_\mathcal{P}(h_\lambda)}{\sigma(W_\mathcal{P}(g_\lambda))} ) = c W_\mathcal{S}(\varepsilon^{\mathrm{cl}, \lambda R}(h_\lambda)) \longrightarrow c W_\mathcal{S}(f),
     \label{eq_relationship_vareps_cl_quant}
\end{equation}
according to Eq.~\eqref{eq:varepsGH}, Eq.~\eqref{eq_classical_varepsilon}, Lemma~\ref{lem_W_tau_cont} and Lemma~\ref{lem_approx_single_test_multiple_probe_classical}. Note that we have again slightly abused notation by writing $\varepsilon^{\mathrm{cl}, \lambda R}(h_\lambda)=f_\lambda$.\\
 
In order to construct an asymptotic measurement of a general element in
\begin{equation}\Weyl_k(N) := \left\{\sum_{j=1}^k \alpha_j W_\mathcal{S}(f_j)| \alpha_j \in \mathbb{C}, f_j \in C_c^\infty(N;\mathbb{R)}\right\},
\end{equation} 
we again utilize the combination of $k$ asymptotic measurement schemes of the kind constructed in Theorem~\ref{theo_comb_classical_mscheme}. The argument above, of course, corresponds to the case $k=1$. Consider $A=\sum\limits_{j=1}^k c_j W_\mathcal{S}(f_j)\in \Weyl_k(N)$ and construct the coupling $R$ for $k$ coupled fields exactly as described after~\eqref{eq:typical_field_obs}.
Fix any probe state $\sigma$ such that $\sigma(W_\mathcal{P}^{ \otimes k}(\vec{g}^j_\lambda)) \neq 0$ for all $j$ (for instance a quasi-free probe state on $\mathcal{A}_\mathcal{P}^{\otimes k}$ -- see Appendix~\ref{appendix_conv_field_ops}). Owing to the formal analogy between~\eqref{eq:varepsGH} and~\eqref{eq:qinduced}, we may read off immediately that  
\begin{equation}
\begin{aligned}
    &\lim_{\lambda \to 0} \varepsilon_{\sigma}^{W, \lambda R}\qty(\sum\limits_{j=1}^k c_j \frac{W_\mathcal{P}^{\otimes k}(\vec{h}^j_\lambda)}{\sigma(W_\mathcal{P}^{\otimes k}(\vec{g}^j_\lambda))} ) = \sum\limits_{j=1}^k c_j W_\mathcal{S}(f_j),
\end{aligned}
\end{equation}
where we have used the $\tau$-continuity of Weyl generators. 
Thus,
\begin{equation}
    H_\lambda:=\qty(\mathcal{A}_\mathcal{P}^{\otimes k},\varepsilon_{\sigma}^{W, \lambda R}, \sum\limits_{j=1}^k c_j \frac{W_\mathcal{P}^{\otimes k}(\vec{h}^j_\lambda)}{\sigma(W_\mathcal{P}^{\otimes k}(\vec{g}^j_\lambda))} ),
\end{equation}
is a $\tau$-asymptotic measurement scheme for $A$, in the limit $\lambda\to 0$, with coupling in $N$ and processing region $L$.

To prove Theorem~~\ref{theo_asympt_m_scheme} it remains to extend this result to $\mathcal{A}_\mathcal{S}(N) = \overline{\bigcup_l \Weyl_l(N)}^{\|\cdot\|}$ via an abstract argument.

\begin{proof}[Proof of Theorem~\ref{theo_asympt_m_scheme}]
We have shown that the set of all elements possessing asymptotic measurement schemes contains $\bigcup_k \Weyl_k(N)$, and since it is closed, also contains $\overline{\bigcup_k \Weyl_k(N)}^{\tau}=
\overline{\bigcup_k \Weyl_k(N)}^{\|\cdot\|}=\mathcal{A}_\mathcal{S}(N)$, using the fact that $\tau$ is weaker than $\tau_{\|\cdot\|}$. In particular, for every $A \in \mathcal{A}_\mathcal{S}(N)$ there exists an asymptotic measurement scheme $\qty(\mathcal{P}_\alpha, \varepsilon_{\alpha, \sigma_\alpha}, B_\alpha)_\alpha$ with coupling in $N$ and processing region $L$. If $A$ is Hermitian, then $\qty(\mathcal{P}_\alpha, \varepsilon_{\alpha, \sigma_\alpha}, \frac{1}{2}\qty(B_\alpha + B_\alpha^*))_\alpha$ is a \emph{Hermitian} $\tau$-asymptotic measurement scheme for $A$ with same coupling and processing region, see Remark~\ref{real_part_ams} below Definition~\ref{def_ams} and the fact that $\tau$ is $*$-compatible, see Lemma~\ref{lem_W_tau_cont}.
\end{proof}

\paragraph*{Remark:} This proof is the only place in which we utilize Lemma~\ref{lem_closed}. In particular, it is the only place where we need \emph{nets} rather than \emph{sequences} of measurement schemes.

\section{Physical interpretation of the approximation procedure}\label{sec_variance}

Above we saw that a Hermitian asymptotic measurement scheme exists for any system observable, both in the field algebra case (Thm.~\ref{theo_existence_of_asymp_measurement_schemes_field_algebra}) and the Weyl algebra case (Thm.~\ref{theo_asympt_m_scheme}). In this section we discuss the physical interpretation of this scheme, specifically the sequence of measurements one must perform as we scale $\lambda\rightarrow 0$.

Operationally, one first identifies a desired local system observable. Then, an appropriate probe observable is determined, and a coupling is tuned to suit these choices. Recall from above that it is possible to give tight bounds on the localization of the probe observable depending on the localization of the desired system observable. To increase the accuracy of the measurement process a ``scaling procedure'' is used, in which the coupling strength is decreased while the probe observable is upscaled. To gain some physical intuition for this scaling and how it relates to an experiment let us go through a specific example.

Suppose that our aim is to use a probe to measure the expectation value $\omega (\varphi_\mathcal{S}(f))$ of smeared field $\varphi_\mathcal{S}(f)$ in the system state $\omega$, assuming for the sake of the argument that this expectation value is nonvanishing. Focussing on the field algebra case (\Sect\ref{sec:asymp_m_schemes_field_algebra}), the aim is to extract $\omega (\varphi_\mathcal{S}(f))$ using the asymptotic measurement scheme in \Sect\ref{sec:field_alg_ams}
with a single probe field, and some coupling parameter $\lambda >0$. Following \Sect\ref{sec:field_alg_ams} the first step is to find some probe test function $h$ and coupling function $\rho$ (with the desired restrictions on their supports) such that $f$ is equivalent to $-\rho E_P^- h$, i.e. $[f]_S=[-\rho E_P^- h]_S$.

For simplicity we will assume the probe state $\sigma$ has a vanishing one-point function, i.e., $\sigma(\varphi_\mathcal{P}(f))\equiv 0$. With this assumption the probe observable we need to measure is simply the smeared field $\varphi_\mathcal{P}(\lambda^{-1}h) = \lambda^{-1}\varphi_\mathcal{P}(h)$ (see Eq.~\eqref{eq:field_scheme} in the case $k=1$). Further, since we are only using a single probe field the measurement scheme is second order (recall Theorem~\ref{theo_faster_approx_fields} and the definition of the order in~\eqref{eq_def_order_of_m_scheme}), and thus the expectation value of the corresponding coupled observable is
\begin{align}\label{eq:bias}
    \omega \left( \varepsilon_{\sigma}^{\varphi , \lambda R} (\varphi_{\mathcal{P}}(\lambda^{-1} h)) \right) & = (\omega\otimes\sigma )\left( \Theta_{\lambda} (\openone \otimes \varphi_\mathcal{P} (\lambda^{-1}h)) \right) \nonumber
    \\
    & = \omega(\varphi_\mathcal{S}(f))+\mathcal{O}(\lambda^2) \; .
\end{align}
More details of this calculation can be found in Appendix~\ref{sec_appendix_faster_approx}, e.g. equation~\eqref{eq_base_case_lambda_order_2}.

Clearly this expectation value gets closer to $\omega(\varphi_\mathcal{S}(f))$ as the coupling is turned off, i.e. as $\lambda\rightarrow 0$, which means that we can get more and more accurate readings on $\omega(\varphi_\mathcal{S}(f))$ by reducing the coupling between the probe and the main system. At the same time, however, if the `effort' associated with a measurement of $\varphi_{\mathcal{P}}(h)$ is encoded via some seminorm $\mathrm{eff}$ with $\mathrm{eff}(\varphi_{\mathcal{P}}(h))\neq 0$, then we have to put in more and more `effort' as we scale down $\lambda$. This follows as $\mathrm{eff}(\varphi_\mathcal{P}(\lambda^{-1}h)) = \lambda^{-1}\mathrm{eff}(\varphi_\mathcal{P}(h))$ diverges as $\lambda\rightarrow 0$ (\textit{cf.}~\eqref{eq_eff_diverges_for_probe} for the classical case). At the end of this section we will discuss the seminorm from Eq.~\eqref{eq_eff_sigma} as an example.

Physically, we can understand these measurements of the observables $\varphi_\mathcal{P}(\lambda^{-1}h)$ at different $\lambda$ as that of a fixed probe observable, $\varphi_\mathcal{P}(h)$, but with the additional multiplication of any measurement outcome by $\lambda^{-1}$. That is, for different values of $\lambda$ we employ the same measuring device,\footnote{Throughout the following discussion, we assume that this device returns measurement results exactly in line with the predicted statistics of $\varphi_\mathcal{P}(h)$, i.e., we neglect the fact that real devices inevitably bin results into discrete intervals.} and use it on the probe in exactly the same way. What changes is the way we process the data from the probe measurements, and, importantly, how the probe is coupled to the main system. The latter could be done in practice by tuning some other field, e.g. an electric field, that mediates the interaction between the probe and the main system.

To get a better understanding of the additional effort required as $\lambda$ is tuned down to $0$, consider a sequence of experiments running over many values of the coupling parameter $\lambda$, and for each $\lambda$ consider an ensemble of $N$ copies of the system-probe setup. Later, we will allow $N$ to depend on $\lambda$. On each copy we make a single measurement of $\varphi_\mathcal{P}(h)$ and get numerical outcomes $\mathtt{x}_{\lambda,i}$ ($i=1,...,N$) say. For each outcome we then multiply the result by $\lambda^{-1}$ as just described. Call this value $\mathtt{y}_{\lambda,i}=\mathtt{x}_{\lambda,i}/\lambda$. 

According to QFT, the outcomes $\mathtt{y}_{\lambda,i}$ are distributed 
in line with a probability distribution
determined by the observable 
$\Theta_{\lambda} (\openone \otimes \varphi_\mathcal{P} (\lambda^{-1}h))$ and the state $\omega\otimes\sigma$. Let us denote the corresponding random variable by $\mathtt{Y}_{\lambda}$ and the expectation value at coupling $\lambda$ by $\mathbb{E}_\lambda$. The \emph{observed sample mean} 
$\frac{1}{N}\sum_{i=1}^N \mathtt{y}_{\lambda,i}$ is
distributed according to the random variable $\mathtt{Y}^{(N)}_\lambda$ which is the average of $N$ independent and identically distributed copies of $\mathtt{Y}_{\lambda}$. In particular, we have
\begin{equation}\mathbb{E}_{\lambda}(\mathtt{Y}_{\lambda}^{(N)})=\mathbb{E}_{\lambda}(\mathtt{Y}_{\lambda}) =
(\omega\otimes\sigma )\left( \Theta_{\lambda} (\openone \otimes \varphi_\mathcal{P} (\lambda^{-1}h)) \right)=
\omega(\varphi_\mathcal{S}(f))+\mathcal{O}(\lambda^2)
\end{equation}
in which the first equality is a probabilistic statement and the second is predicted by QFT,
while the third comes from~\eqref{eq:bias}.
Thus the sample mean provides an estimate of $\omega(\varphi_\mathcal{S}(f))$ for sufficiently small $\lambda$, because $\mathbb{E}_{\lambda}(\mathtt{Y}_{\lambda}^{(N)})$ tends to $\omega(\varphi_\mathcal{S}(f))$ as $\lambda\to 0$, and because the $\mathcal{O}(\lambda^2)$ error is independent of $N$, the equation remains valid if $N$ is allowed to depend on $\lambda$. 

One measure of the accuracy of this estimate is the variance
\begin{equation}
    \text{Var}_{\lambda}(\mathtt{Y}_{\lambda}^{(N)}) =
   \frac{\text{Var}_{\lambda}(\mathtt{Y}_{\lambda})}{N}=
    \frac{\text{Var}_{\omega \otimes \sigma}\left(\Theta_{\lambda} (\openone \otimes \varphi_\mathcal{P} (\lambda^{-1}h))\right)}{N} \; ,
\end{equation}
where the first equality follows from the Bienaym\'{e} formula~\cite{Loeve1977}, and the fact that $\mathtt{Y}_{\lambda}^{(N)}$ is the average of independent and identically distributed random variables.

Now, $\Theta_\lambda(\openone \otimes \varphi_\mathcal{P}(\lambda^{-1}h)) = \varphi_\mathcal{S}(f_\lambda) \otimes \openone + \openone\otimes \varphi_\mathcal{P}(g_\lambda)$, where $f_{\lambda}$ and $g_{\lambda}$ are given explicitly in~\eqref{eq:convenient}. From this it is easy to see that $\text{Var}_{\omega \otimes \sigma}(\Theta_\lambda(\openone \otimes \varphi_\mathcal{P}(\lambda^{-1}h))) = \text{Var}_{\omega}(\varphi_\mathcal{S}(f_\lambda)) + \text{Var}_{\sigma}(\varphi_\mathcal{P}(g_\lambda))$, and given that $f_\lambda = f + \mathcal{O}(\lambda^2)$ (\textit{cf.}~\eqref{eq_base_case_lambda_order_2} in Appendix~\ref{sec_appendix_faster_approx}) we then have that $\text{Var}_{\omega}(\varphi_\mathcal{S}(f_\lambda)) = \text{Var}_{\omega}(\varphi_\mathcal{S}(f)) + \mathcal{O}(\lambda^2)$. Since $g_\lambda = \lambda^{-1}h + \lambda R E_S^-RE_P^- h + \mathcal{O}(\lambda^3)$, as shown in Eq.~\eqref{eq:glambda_expansion} in the case $k=1$, one can similarly verify that
\begin{equation}
    \text{Var}_{\sigma}(\varphi_\mathcal{P}(g_\lambda)) =  \lambda^{-2} \text{Var}_{\sigma}(\varphi_\mathcal{P}(h)) + \sigma( \lbrace  \varphi_\mathcal{P}(h) , \varphi_\mathcal{P}(RE_S^-RE_P^-h)  \rbrace ) + \mathcal{O}(\lambda^2)  \; ,
\end{equation}
where $\lbrace \cdot , \cdot \rbrace$ denotes an anti-commutator, and we have used the fact that one-point functions vanish for the probe state $\sigma$.  Putting both the system and probe variances together we have
\begin{align}\label{eq:VarYNlambda}
    \text{Var}_{\lambda}(\mathtt{Y}^{(N)}_\lambda) = &  \frac{\text{Var}_{\sigma}(\varphi_\mathcal{P}(h))}{\lambda^2 N} \nonumber
    \\
    & + \frac{\text{Var}_{\omega}(\varphi_\mathcal{S}(f)) + \sigma( \lbrace  \varphi_\mathcal{P}(h) , \varphi_\mathcal{P}(RE_S^-RE_P^-h)  \rbrace )}{N} \nonumber
    \\
    & + \frac{\mathcal{O}(\lambda^2)}{N} \; ,
\end{align}
In the foregoing expressions, all the $\mathcal{O}(\lambda^2)$ errors are independent of $N$. We can now see that, while the sample mean, $\mathtt{Y}^{(N)}_\lambda$, always tends to $\omega(\varphi_\mathcal{S}(f))$ in expectation as we tune down the coupling $\lambda$, we have to simultaneously increase the number of trials $N$ in order to keep the accuracy of the estimate the same, as measured by its variance. This may be further quantified as follows, using elementary statistics.

In the terminology of classical estimation theory~\cite{Casella2002}, each $\mathtt{Y}^{(N)}_\lambda$ is a \emph{biased estimator} for  $\omega(\varphi_\mathcal{S}(f))$. Nevertheless, it may serve as the midpoint of a \emph{confidence interval} for $\omega(\varphi_\mathcal{S}(f))$. 
As a reminder, given $\epsilon>0$ and $0<\delta < 1$, we say that $\qty(\mathtt{Y}^{(N)}_\lambda-\epsilon, \mathtt{Y}^{(N)}_\lambda+ \epsilon)$ is a confidence interval for $\omega(\varphi_\mathcal{S}(f))$ with confidence coefficient $1-\delta$ 
(or a $100(1-\delta)\%$ confidence interval) if
\begin{equation}  
\Prob_{\lambda} \qty(\qty|\mathtt{Y}^{(N)}_\lambda -\omega(\varphi_\mathcal{S}(f))| < \epsilon) \geq 1- \delta.
\end{equation}

\begin{theo}\label{thm:effort}
For every $\epsilon>0$ and $0 < \delta < 1$ there is a $\lambda_0 >0$ and for every $0 < \lambda < \lambda_0$ there is $N_\lambda \in \mathbb{N}$ such that $\qty(\mathtt{Y}^{(N_\lambda)}_\lambda-\epsilon, \mathtt{Y}^{(N_\lambda)}_\lambda+ \epsilon)$ is a confidence interval for $\omega(\varphi_\mathcal{S}(f))$ with confidence coefficient $1-\delta$. For small $\epsilon$ we have
\begin{equation}
    N_\lambda \gtrsim \frac{\mathrm{Var}_{\sigma}(\varphi_\mathcal{P}(h))/\lambda^2}{\delta (\epsilon - C\lambda^2)^2} 
    =\frac{\mathrm{eff}_{\sigma}(\varphi_{\mathcal{P}}(h_\lambda))^2}{\delta\epsilon^2}(1-C\lambda^2/\epsilon)^{-2}
    \label{eq_condition_N}
\end{equation}
for a constant $C$.
\end{theo}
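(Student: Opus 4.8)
The plan is to turn the bias–variance information already assembled in~\eqref{eq:bias} and~\eqref{eq:VarYNlambda} into a confidence interval by a single application of Chebyshev's inequality. The two facts I would record at the outset are: the estimator is biased with $\mathbb{E}_\lambda(\mathtt{Y}^{(N)}_\lambda) = \omega(\varphi_\mathcal{S}(f)) + b(\lambda)$, where the bias $b(\lambda)$ is $\mathcal{O}(\lambda^2)$ and independent of $N$ (this is exactly the content of~\eqref{eq:bias}); and its variance factorises as $\mathrm{Var}_\lambda(\mathtt{Y}^{(N)}_\lambda) = v(\lambda)/N$ with $v(\lambda) = \lambda^{-2}\mathrm{Var}_\sigma(\varphi_\mathcal{P}(h)) + \mathcal{O}(1)$, read off from~\eqref{eq:VarYNlambda}. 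Crucially, every $\mathcal{O}(\lambda^2)$ and $\mathcal{O}(1)$ error here carries a constant that does not depend on $N$, which is what makes the subsequent bound uniform in the number of trials.

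First I would fix a constant $C$ with $\qty|b(\lambda)| \le C\lambda^2$ for all small $\lambda$ and then choose $\lambda_0$ small enough that $C\lambda_0^2 < \epsilon$; this secures $\epsilon - C\lambda^2 > 0$ for all $0 < \lambda < \lambda_0$, and is precisely the constraint that forces $\lambda_0$ to shrink as $\epsilon \to 0$. Next, for such $\lambda$ I would estimate the relevant tail probability. By the triangle inequality, the event $\qty{\qty|\mathtt{Y}^{(N)}_\lambda - \omega(\varphi_\mathcal{S}(f))| \ge \epsilon}$ is contained in $\qty{\qty|\mathtt{Y}^{(N)}_\lambda - \mathbb{E}_\lambda(\mathtt{Y}^{(N)}_\lambda)| \ge \epsilon - \qty|b(\lambda)|}$, and since $\epsilon - \qty|b(\lambda)| \ge \epsilon - C\lambda^2 > 0$, Chebyshev's inequality yields
\begin{equation}
\Prob_\lambda\qty(\qty|\mathtt{Y}^{(N)}_\lambda - \omega(\varphi_\mathcal{S}(f))| \ge \epsilon) \le \frac{\mathrm{Var}_\lambda(\mathtt{Y}^{(N)}_\lambda)}{(\epsilon - C\lambda^2)^2} = \frac{v(\lambda)}{N\,(\epsilon - C\lambda^2)^2}.
\end{equation}
Requiring the right-hand side to be at most $\delta$ gives the sufficient condition $N \ge v(\lambda)/[\delta(\epsilon - C\lambda^2)^2]$; taking $N_\lambda$ to be the least integer meeting this bound then makes $\qty(\mathtt{Y}^{(N_\lambda)}_\lambda - \epsilon, \mathtt{Y}^{(N_\lambda)}_\lambda + \epsilon)$ a genuine $100(1-\delta)\%$ confidence interval for $\omega(\varphi_\mathcal{S}(f))$, establishing the first assertion.

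For the asymptotic estimate in~\eqref{eq_condition_N} I would observe that driving $\epsilon \to 0$ forces $\lambda \to 0$ (because $\lambda < \lambda_0$ and $\lambda_0 \to 0$), so the leading $\lambda^{-2}$ term of $v(\lambda)$ dominates and $N_\lambda \gtrsim \lambda^{-2}\mathrm{Var}_\sigma(\varphi_\mathcal{P}(h))/[\delta(\epsilon - C\lambda^2)^2]$. Using $\mathbb{R}$-linearity of the smeared probe field together with the homogeneity of the seminorm, $\mathrm{eff}_\sigma(\varphi_\mathcal{P}(h_\lambda))^2 = \lambda^{-2}\mathrm{Var}_\sigma(\varphi_\mathcal{P}(h))$, and factoring $(\epsilon - C\lambda^2)^2 = \epsilon^2(1 - C\lambda^2/\epsilon)^2$, this is exactly~\eqref{eq_condition_N}. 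I expect the only delicate point to be the uniformity of the error terms in $N$: one must be sure that the bias and the subleading variance contributions are bounded by constants independent of $N$, which holds because the bias does not depend on $N$ at all and the entire $N$-dependence of the variance is the overall factor $1/N$. Keeping track of the sign condition $\epsilon > C\lambda^2$ alongside this uniformity is the crux; the remainder is routine estimation theory.
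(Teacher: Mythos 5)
Your proposal is correct and follows essentially the same route as the paper's own proof: bound the bias by $C\lambda^2$ for $\lambda$ below a threshold chosen so that $\epsilon - C\lambda^2 > 0$, reduce the tail event to a deviation from the mean via the (reverse) triangle inequality, apply Chebyshev's inequality, and then neglect the $N$-independent $\mathcal{O}(1)$ variance contributions for small $\epsilon$ to arrive at~\eqref{eq_condition_N}. Your explicit remarks on the $N$-uniformity of the error terms and the identity $\mathrm{eff}_\sigma(\varphi_\mathcal{P}(h_\lambda))^2 = \lambda^{-2}\mathrm{Var}_\sigma(\varphi_\mathcal{P}(h))$ match the paper's reasoning exactly.
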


In the proof we see how a choice of $\lambda_0$ is used to control the bias of $\mathtt{Y}^{(N)}_\lambda$. Furthermore,~\eqref{eq_condition_N} then directly relates the physical resources required for the measurement, specifically the number of trials $N_\lambda$, to 
\begin{enumerate}
    \item the given ``tolerance'' $\epsilon$ and ``accuracy'' $\delta$ (also called the significance level),
    \item the measure of effort for the \emph{asymptotic} measurement scheme introduced in~\eqref{eq_eff_sigma}, (which quantifies the variance of the estimator), and
    \item the bias of the estimator.
\end{enumerate}

Before we turn to the proof we discuss the interpretation of Theorem~\ref{thm:effort}. Suppose one is given a device for measuring $\varphi_{\mathcal{P}}(h)$ and the task is to determine $\omega(\varphi_\mathcal{S}(f))$ up to a maximal absolute error of $\epsilon > 0$ with a probability of $1-\delta$. Then Theorem~\ref{thm:effort} tells us that there is a $\lambda >0$ and an $N_\lambda$ (determined by~\eqref{eq_condition_N}) such that, with a probability of $1-\delta$, the observed sample mean $\frac{1}{N_\lambda} \sum_{i=1}^{N_\lambda} \mathtt{y}_{\lambda,i}$ of $N_\lambda$ measurements at coupling parameter $\lambda$ deviates by at most $\epsilon$ from $\omega(\varphi_\mathcal{S}(f))$. 

\begin{proof}
We first note that, since $\mathbb{E}_{\lambda}({\mathtt{Y}^{(N)}_\lambda})=\omega(\varphi_\mathcal{S}(f))+\mathcal{O}(\lambda^2)$, there is a $\tilde{\lambda}_0>0$ and a constant $C$ such that $|\omega(\varphi_\mathcal{S}(f))-\mathbb{E}_{\lambda}({\mathtt{Y}^{(N)}_\lambda})|
\le C\lambda^2$ for all $0<\lambda< {\tilde{\lambda}_0}$. 
If $|{\mathtt{Y}^{(N)}_\lambda}-\omega(\varphi_\mathcal{S}(f))|\ge \epsilon$, then the reverse triangle inequality gives
\begin{equation}
    |{\mathtt{Y}^{(N)}_\lambda}- \mathbb{E}_{\lambda}({\mathtt{Y}^{(N)}_\lambda})|\ge
    |{\mathtt{Y}^{(N)}_\lambda}-\omega(\varphi_\mathcal{S}(f))|-| \omega(\varphi_\mathcal{S}(f))-\mathbb{E}_{\lambda}({\mathtt{Y}^N})|
    \ge \epsilon - C\lambda^2 >0 \; ,
\end{equation}
for $0<\lambda<\lambda_0$ (independent of $N$), where $\lambda_0 \leq \mathrm{min} \qty{\tilde{\lambda}_0, \sqrt{\epsilon/C}}$. Thus 
\begin{equation}
    \Prob_{\lambda}\left(|{\mathtt{Y}^{(N)}_\lambda}-\omega(\varphi_\mathcal{S}(f))|\ge \epsilon\right) \le
    \Prob_{\lambda}\left(|{\mathtt{Y}^{(N)}_\lambda}- \mathbb{E}_{\lambda}({\mathtt{Y}^{(N)}_\lambda})|
    \ge \epsilon - C\lambda^2\right)\le \frac{ \text{Var}_{\lambda}({\mathtt{Y}_{\lambda}^{(N)}})}{(\epsilon - C\lambda^2)^2} \; ,
\end{equation}
using Chebyshev's inequality. Following this, $N$ must be chosen sufficiently large such that the \emph{variance} of the estimator $\mathtt{Y}^{(N)}_\lambda$ is controlled by
\begin{equation}
     \frac{ \text{Var}_{\lambda}({\mathtt{Y}^{(N)}_\lambda})}{(\epsilon - C\lambda^2)^2} <\delta \; .
\end{equation}
From~\eqref{eq:VarYNlambda} it is then clear that this condition can always be satisfied. Quantitatively, we require
\begin{equation}
    N_{\lambda}> \frac{1}{\delta (\epsilon - C\lambda^2)^2}\left(
    \frac{\text{Var}_{\sigma}(\varphi_\mathcal{P}(h))}{\lambda^2} + \mathcal{O}(1)
    \right) 
\end{equation}
as $\lambda\to 0$.
At least for small $\epsilon$ (for which $\lambda_{0}$ and consequently $\lambda$ are small) the $\mathcal{O}(1)$ terms may be neglected and we find~\eqref{eq_condition_N}.
\end{proof}

\section{Discussion and Outlook}\label{sec:discussion}

Even though a satisfactory description of the full quantum measurement process is not yet available, considerable insight can be obtained by analysing the measurement chain by which information about a system can be inferred from measurements of a probe. The implementation of this idea for relativistic QFT~\cite{fewster2018quantum} has produced an operationally well-motivated framework of local measurement schemes (FV framework) that are consistent with causality~\cite{bostelmann2020impossible}. In particular, these measurements are free of Sorkin's superluminal signaling problem~\cite{sorkin1993impossible}. In the present paper we have set out to address the question whether every local observable of linear real scalar fields\footnote{In fact we do not expect any obstruction to extending our results to a collection of multiple scalar fields.} on a globally hyperbolic spacetime $M$ can be measured in the FV framework.

To that end we first discussed a proof of principle, which however only fully complies with the requirements of the FV framework in the case where $M$ has compact Cauchy surfaces. In order to address the general question we introduced the notion of a Hermitian \emph{asymptotic measurement scheme} for a system observable $A$, given by a collection of bona fide Hermitian measurement schemes whose induced observables converge to $A$. It is the main result of the present work that every local observable $A$ of a linear scalar QFT on $M$ admits a Hermitian asymptotic measurement scheme, which requires an observer to have control over their probe theories only in a reasonably small spacetime region depending on the localization of $A$. We achieve this by first analysing asymptotic measurement schemes for the underlying classical scalar field theory, which then straight forwardly lead to (Hermitian) asymptotic measurement schemes for the quantum field theory (described using both the field algebra and the Weyl algebra). Furthermore, we have given a precise mathematical abstraction of the \emph{effort} associated to an asymptotic measurement scheme. In an example this is given by the square root of the variance of the probe observables in the probe preparation state and hence also bears a clear operational meaning as discussed in \Sect\ref{sec_variance}. The efficiency of an asymptotic measurement scheme can then be quantified by comparison of the rate of convergence of the induced system observables to the rate of divergence of the effort, leading to the notion of \emph{order} of a quantum asymptotic measurement scheme (for smeared fields). In particular, by using multiple probe fields, we have shown that every smeared system field admits an asymptotic measurement scheme of any even order.\\

While the use of relativistic quantum fields as probes themselves stems from~\cite{fewster2018quantum}, non-relativistic structures such as the two-level systems employed by Unruh and Wald~\cite{UnruhWald1984}, based on earlier work~\cite{Unruh1976,DeWitt1979}, have been studied in the literature for a long time and are an important tool in the field of relativistic quantum information (RQI).
However, until very recently, previous work in RQI did not focus on induced system observables. A noteworthy exception was~\cite{Smith2019}; subsequent to~\cite{fewster2018quantum,bostelmann2020impossible} there has been increased interest in this direction and a 
treatment of non-relativistic probes in a similar spirit to~\cite{fewster2018quantum} can be found in~\cite{pologomez2021,deRamon2021}. It should be mentioned that the coupling of non-relativistic structures to quantum fields typically requires either a singular or a non-local coupling, of which the latter is in conflict with causality, while measurements in the FV framework are causal. (See also~\cite{Borsten2021} and~\cite{jubb2021causal} for an analysis of general causal state-updates, of which the resulting updates of FV measurement schemes are (at least) a special case.)\\

The last point together with the result of the present paper that every local observable can be measured in the FV framework (via a Hermitian asymptotic measurement scheme) indeed reinforces the interpretation of generic Hermitian elements of local algebras as \emph{observables} in contrast to \emph{local operations}, which is an important consequence of our investigation.\\

An immediate question raised by our results is whether they extend to the algebra generated by Wick powers. In the expectation of merely mild technical effort we have left this task together with the possible extension to theories other than scalar ones for future work. Beyond specific quantum field theories, it would be very interesting to understand whether there is a general result, akin to the Stinespring theorem, that would establish the (asymptotic) measureability of arbitrary local observables in general QFTs by locally coupled probes.

\section*{Acknowledgment}
It is a pleasure to thank Henning Bostelmann for many useful discussions, and also Fay Dowker and Rafael Sorkin for insightful comments and questions. We also thank the anonymous referees for helpful comments and remarks. The work of M.~H.~R. was supported by a Mathematics Excellence Programme Studentship awarded by the Department of Mathematics at the University of York, and the work of I. J. was supported by a DIAS Scholarship. 

\appendix

\section{The scattering operator of \Sect\ref{sec:proof_of_p}}
\label{sec_appendix_proof_of_p}

The scattering operator is easily determined (from~\eqref{eq:scattering}), noting that the `in' and `out' regions are $M^\pm = I^\pm(\Sigma^\pm)$. For $F\in C_c^\infty(M^+;\mathbb{C})$, one has $\Theta\Phi(F)=\Phi(\tilde{F})$ where $\tilde{F}\in C_c^\infty(M^-;\mathbb{C})$ obeys $E_Q\tilde{F}=E_QF$, or equivalently $E_P e^{i\chi} F= E_P e^{i\chi} \tilde{F}$. 
Because $\chi$ vanishes on the support of $F$ and takes the value $\pi/2$ on the support of $\tilde{F}$, the requirement simplifies to
\begin{equation}
    E_P \tilde{F} = -i E_P F
\end{equation}
which implies that $\tilde{F} = -iF + PH$ for some $H\in C_c^\infty(M;\mathbb{C})$. It follows that $\Theta\Phi(F) = -i\Phi(F)$ for all $F\in C_c^\infty(M^+;\mathbb{C})$  and hence
for all $F\in C_c^\infty(M;\mathbb{C})=C_c^\infty(M^+;\mathbb{C})+PC_c^\infty(M;\mathbb{C})$ using the classical timeslice property.

\section{Proof of Lemma~\ref{lem_reduction_to_slab_classical}}\label{sec_appendix_proof_theo_reduction_slab}

Lemma~\ref{lem_reduction_to_slab_classical} is based on the case $k=1$ of the following lemma. The general case is used in the proof of Theorem~\ref{theo_faster_approx} in Appendix~\ref{sec_appendix_faster_approx}.

\begin{lem}
Let $S, P_1,...,P_k$ be the system and probe equation of motion operators. For every region $N \subseteq M$ and every finite collection of test functions $f_1, ..., f_k \in C_c^\infty(N;\mathbb{R})$ there exists a precompact region $\tilde{N} \subseteq N$ and $\tilde{f}_1, ..., \tilde{f}_k \in C_c^\infty(\tilde{N};\mathbb{R})$ such that $\forall j:$
\begin{enumerate}
    \item $ [f_j]_S= [\tilde{f}_j]_S$,
    \item $ \exists \, \varphi_j \in \mathrm{Sol}_{sc}(P_j)$ such that $\varphi_j \restriction \tilde{N}$ is nowhere vanishing.
\end{enumerate}
Moreover, for any region $L$ whose domain of dependence contains $N$, there exist $h_j \in C_c^\infty(L;\mathbb{R})$ such that $\varphi_j = E_{P_j}h_j$ ($1\le j\le k$).
\label{lem_finding_slab}
\end{lem}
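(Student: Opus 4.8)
The plan is to realize $\tilde N$ as a thin slab around a relatively compact piece of a suitable Cauchy surface, move the $f_j$ into it by a localized timeslice argument, build the solutions $\varphi_j$ from Cauchy data that is close to a nonzero constant on the slab, and finally represent each $\varphi_j$ by a test function supported in $L$ via a cutoff across two Cauchy surfaces lying inside $L$. For the slab, I would first use that $N$ is globally hyperbolic in its own right: pick a smooth spacelike Cauchy surface $\Sigma_N$ of $N$ and extend it to a smooth spacelike Cauchy surface $\Sigma$ of $M$ with $\Sigma\cap N=\Sigma_N$. Since $K:=\bigcup_j\supp f_j$ is compact and $K\subseteq N=D_M(\Sigma_N)$, and since $D_M(B)\uparrow D_M(\Sigma_N)$ as the relatively compact open $B\subseteq\Sigma_N$ exhaust $\Sigma_N$, one may fix such a $B$ with $\overline{B}\subseteq\Sigma_N\subseteq N$ and $K\subseteq D_M(B)=:U$. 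Then $U$ is globally hyperbolic with Cauchy surface $B$, and a sufficiently thin slab neighbourhood $\tilde N$ of $B$ with $\tilde N\subseteq U\cap N$ is a precompact region (causal convexity following from intersecting the causally convex $U$ with a thin time-slab) that contains the Cauchy surface $B$ of $U$. Applying the classical timeslice property inside $U$ gives $\tilde f_j\in C_c^\infty(\tilde N;\mathbb{R})$ with $f_j-\tilde f_j\in SC_c^\infty(U;\mathbb{R})\subseteq SC_c^\infty(M;\mathbb{R})$, so $[\tilde f_j]_S=[f_j]_S$, settling item~1.

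For item~2, I would prescribe for each $j$ Cauchy data on $\Sigma$ consisting of a common position profile $\psi\in C_c^\infty(\Sigma;\mathbb{R})$ equal to $1$ on a neighbourhood of $\overline{B}$ and vanishing normal velocity, and let $\varphi_j\in\mathrm{Sol}_{sc}(P_j)$ be the resulting spatially compact solution, so that $\varphi_j\restriction\Sigma=\psi$ exactly. By smooth dependence on the normal coordinate, $\varphi_j=\psi+\mathcal{O}(\delta)$ on a slab of thickness $\delta$, hence $\varphi_j=1+\mathcal{O}(\delta)$ over $B$. Thinning $\tilde N$ if necessary (and re-running the previous paragraph, which is unaffected as the thinner slab still contains $B$), and taking the smallest of the finitely many thicknesses required for $j=1,\dots,k$, makes every $\varphi_j$ nowhere vanishing on $\tilde N$.

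For the moreover clause, let $L$ be admissible; by the Remark following Theorem~\ref{theo_classical_asympt_m_scheme} we then have $N\subseteq D^-(L)$, so $L$ screens the future of the compact set $C:=\supp\psi\subseteq\Sigma\cap N$. Since each $\varphi_j$ is spatially compact with support in the causal shadow of $C$, I would choose Cauchy surfaces $\Sigma^-,\Sigma^+$ to the future of $\Sigma$ whose sandwiched region $J^+(\Sigma^-)\cap J^-(\Sigma^+)$ meets $\supp\varphi_j$ only inside $L$. Taking $\chi\in C^\infty(M;\mathbb{R})$ equal to $0$ to the past of $\Sigma^-$ and $1$ to the future of $\Sigma^+$, set $h_j:=-P_j(\chi\varphi_j)=-[P_j,\chi]\varphi_j$; this is compactly supported, because $[P_j,\chi]$ is first order and supported where $\nabla\chi\neq0$ while $\varphi_j$ is spatially compact, and it is supported in the sandwiched region intersected with $\supp\varphi_j$, hence in $L$. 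The usual cutoff computation, using $P_j\varphi_j=0$ together with the support and defining properties of the Green operators, yields $E_{P_j}^+h_j=-\chi\varphi_j$ and $E_{P_j}^-h_j=(1-\chi)\varphi_j$, so that $E_{P_j}h_j=E_{P_j}^-h_j-E_{P_j}^+h_j=\varphi_j$, as required.

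The main obstacle is the Lorentzian geometry underpinning the first and third steps. In the first step one must verify both that a Cauchy surface of $N$ extends to one of $M$ and that the supports $\supp f_j$ are captured in the $M$-domain of dependence of a relatively compact piece $B$ whose slab still fits inside $N$; this is exactly what forces the use of $N=D_M(\Sigma_N)$ rather than an arbitrary precompact neighbourhood of $K$, since generic thin regions admit no nonvanishing solutions. Dually, in the third step one must position the transition slab for $\chi$ inside $L$ so that it cuts \emph{completely} across the spatially compact support of each $\varphi_j$, which relies crucially on $N\subseteq D^-(L)$ making $L$ a future screen for the compact Cauchy-data set $C$. By contrast the nonvanishing statement is a soft continuity argument and the representation $\varphi_j=E_{P_j}h_j$ is the standard cutoff identity.
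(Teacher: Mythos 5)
Your construction of $\tilde{N}$, the $\tilde{f}_j$ and the nonvanishing solutions $\varphi_j$ takes essentially the paper's route: a thin, causally convex slab around a precompact piece $B$ of a Cauchy surface of $N$, solutions generated by Cauchy data $(\phi,0)$ with $\phi\equiv 1$ near $\overline{B}$, nonvanishing on a sufficiently thin slab by continuity and compactness, and a timeslice/splitting argument to replace $f_j$ by $\tilde{f}_j$. Two points in this part are asserted more strongly than you can justify, though both are repairable. First, a (generally non-compact) Cauchy surface $\Sigma_N$ of $N$ need not extend to a smooth spacelike Cauchy surface $\Sigma$ of $M$ with $\Sigma\cap N=\Sigma_N$: a Cauchy surface of a double cone that hugs its null boundary extends to no smooth spacelike Cauchy surface of Minkowski space. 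The Bernal--S\'anchez extension theorem applies to \emph{compact} acausal spacelike submanifolds with boundary, and extending such a compact piece containing $\overline{B}$ is all you actually need (this is also what the paper implicitly relies on). Second, being a region requires $\tilde{N}$ to have finitely many connected components; the paper verifies this explicitly and you do not. (Also, $N=D_M(\Sigma_N)$ is in general only an inclusion $N\subseteq D_M(\Sigma_N)$, which is all you use.)

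The genuine gap is in the ``moreover'' clause. The lemma asserts $\varphi_j=E_{P_j}h_j$ with $h_j\in C_c^\infty(L;\mathbb{R})$ for \emph{every} region $L$ with $N\subseteq D(L)$; nothing places $L$ to the future of $N$. Your argument opens with ``let $L$ be admissible'' and appeals to the Remark following Theorem~\ref{theo_classical_asympt_m_scheme}, which is only available under the additional hypothesis $L\subseteq M\setminus J^-(\overline{N})$; the ensuing construction -- a slab between Cauchy surfaces lying to the future of $\Sigma$ that meets $\supp\varphi_j$ only inside $L$ -- makes essential use of this and has nowhere to live if, say, $L$ is a neighbourhood of $N$ itself or lies to its past. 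Moreover, even for admissible $L$, the existence of Cauchy surfaces $\Sigma^\pm$ with your screening property is precisely the nontrivial geometric content, and you assert it rather than prove it: one needs a compactness argument for the family of future-directed causal curves issuing from the compact set $C$, together with openness of $L$. Both defects are cured by running the argument inside $D(L)$ rather than in a global slab: a Cauchy surface $\Lambda$ of $L$ is also a Cauchy surface of the globally hyperbolic spacetime $D(L)$, and $\Lambda\cap J(C)$ is compact, so one may choose a precompact open $O$ with $\Lambda\cap J(C)\subseteq O$ and $\overline{O}\subseteq L$. Since $C\subseteq D(L)\subseteq D(\Lambda)$, every point of $\supp\varphi_j\subseteq J(C)$ lies on an inextendible causal curve through $C$ that meets $\Lambda$ inside $J(C)\cap\Lambda\subseteq O$, whence $\supp\varphi_j\subseteq J^+(O)\cup J^-(O)$ and one can split $\varphi_j=\varphi_j^++\varphi_j^-$ with $\supp\varphi_j^\pm\subseteq J^\pm(O)$. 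Then $h_j:=P_j\varphi_j^-=-P_j\varphi_j^+$ is supported in $J^+(\overline{O})\cap J^-(\overline{O})$, which is compact and contained in $L$ by causal convexity, and $E_{P_j}h_j=\varphi_j$ holds globally by exactly your cutoff computation. This is the same splitting device the paper uses to produce the $\tilde{f}_j$, and it is insensitive to where $L$ sits relative to $N$.
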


\begin{proof}
Since $N$ is a region, $(N,\mathtt{g}|_N)$ is a globally hyperbolic spacetime in its own right, where $\mathtt{g}$ denotes the Lorentzian metric on the spacetime manifold $M$. It should be borne in mind that $N$ might have finitely many connected components. One can then find a Cauchy surface, $\Sigma\subset N$, and a diffeomorphism $\xi:\mathbb{R}\times\Sigma\to N$ with $\xi(\{0\}\times\Sigma)=\Sigma$ and such that the pullback of the metric has the form $(\xi^*\mathtt{g})(t,\vec{x})=\Omega(t,\vec{x})^2(1\oplus -h_t(\vec{x}))$, where $t\mapsto h_t$ is a smooth family of Riemannian metrics on $\Sigma$, and $\Omega\in C^\infty(\mathbb{R}\times\Sigma)$ is nonvanishing~\cite{bernal2003smooth}. In particular, every curve of the form $t\mapsto (t,\vec{x})$, for constant $\vec{x}\in\ \Sigma$, is timelike and (without loss of generality) future-directed. The associated Cauchy time function $\mathtt{t}\in C^\infty(N;\mathbb{R})$ is defined so that $\mathtt{t}(\xi(t,\vec{x}))=t$ for all $(t,\vec{x})\in\mathbb{R}\times\Sigma$. Additionally note that any subset of the form $\xi (I\times \Sigma )\subset N$, for some open interval $I\subset \mathbb{R}$, is causally convex. In what follows we identify $N$ with $\mathbb{R}\times \Sigma$ for convenience, and similarly $\xi(A \times \Sigma)$ with $A \times \Sigma$ for any subset $A\subseteq \mathbb{R}$.

All the solutions sourced by $f_1, ..., f_k \in C_c^\infty(N)$ have compactly supported initial data on $\Sigma$. Let $B \subseteq \Sigma$ be an open precompact subset with finitely many connected components, $B_i$, such that every initial data has support in $B$ (via compact exhaustion). Let $\varphi_j$ be the solution to $P_j$ with compactly supported Cauchy data $(\phi,0)$ (i.e., vanishing normal derivative) independent of $j$. Here $\phi \in C_c^\infty(\Sigma)$ is some smoothed characteristic function of $\overline{B}$, i.e. $\phi \restriction \overline{B}=1$. As each $\varphi_j$ is smooth and set to $1$ on $B$, there is some open neighbourhood of $B$ of the form $I\times B$, for some open interval $I\subset\mathbb{R}$ containing $0$, within which every $\varphi_j$ is non-vanishing. If this were not the case one could find a sequence of zeros, $z_n\in (-1/n,1/n)\times B$, for the continuous function $\prod_{j=1}^k \varphi_j$, which must therefore vanish somewhere in $\overline{B}$ by compactness and continuity, thus producing a contradiction.

We then define $\tilde{N}:= D(B) \cap (I \times B)$, which is open and, moreover, precompact in $M$ (as $ \overline{\tilde{N}}\subset\overline{I \times B}$, which is a compact set in $M$). Note that all the $\varphi_j$ are non-vanishing on $\tilde{N}$, as desired in point \textit{(ii)} of Lemma~\ref{lem_finding_slab}. We next need to show that $\tilde{N}$ is also a region, for which it is necessary to check that $\tilde{N}$ is \textit{a)} causally convex, and \textit{b)} that it has only finitely many connected components.

\noindent\textit{a)} Causal convexity can be seen as follows. Due to the form of the metric, $D(\{0\}\times B)\subset \mathbb{R}\times B$, since any $(t',\vec{x})$ with $\vec{x}\notin B$ lies on an inextendible timelike curve $t\mapsto(t,\vec{x})$ that does not intersect $\{0\}\times B$. Abusing notation by writing $D(\{0\}\times B)=D(B)$, we then have $D(B)=D(B)\cap (\mathbb{R}\times B)$, and
\begin{align} 
D(B)\cap (I\times B) & =  D(B)\cap(\mathbb{R}\times B) \cap (I\times B) \nonumber
\\
& = D(B)\cap (\mathbb{R}\times B)\cap (I\times\Sigma) \nonumber
\\
& = D(B)\cap (I\times\Sigma) \; ,
\end{align}
using $(\mathbb{R}\times B) \cap (I\times B)=I\times B=(\mathbb{R}\times B)\cap (I\times\Sigma)$. The rhs of the last line is the intersection of two causally convex sets, and is therefore casually convex, which immediately implies that $\tilde{N} = D(B)\cap (I\times B)$ is as well.

\noindent\textit{b)} To see that $B$ has finitely many connected components we write it as the union of its connected components $B_j$. Since the $B_i$'s are mutually spacelike, $D(B)= \bigcup_j D(B_j)$, and hence it suffices to show that each $D(B_j) \cap (I\times\Sigma)$ is connected. For every point $(t',\vec{x})$ in the intersection (taking $t'>0$ without loss of generality), the curve $[0,t')\ni t \mapsto (t,\vec{x})$ is also fully contained in the intersection. So every point is path-connected to $B_j$, but $B_j$ is itself connected and hence also path connected, so $D(B_j) \cap (I\times\Sigma)$ is connected.

With the precompact region $\tilde{N}$ in hand we can now find $\tilde{f}_1, ..., \tilde{f}_k \in C_c^\infty(\tilde{N})$ with the stated properties as follows. Explicitly, consider any (necessarily precompact) open neighbourhood $O\subset \tilde{N}$ of $B$, and partition the solution $\psi_j = E_S f_j$ as $\psi_j = \psi_j^+ + \psi_j^-$ with $\supp \psi_j^\pm\subset J^\pm(O)$. Then $\psi^{\pm}$ has past/future-compact support (in $M$) so that $\tilde{f}_j:=S\psi_j^-=-S\psi_j^+$ is compactly supported in $J^+(O)\cap J^-(O)$. The latter is contained in $\tilde{N}$ by causal convexity, and hence $\tilde{f}_j \in C_c^{\infty}(\tilde{N})$. Finally,
\begin{equation}
    E_S \tilde{f}_j = E_S S\psi_j^-=E_S^- S\psi_j^-- E_S^+ (-S\psi_j^+)=\psi_j^-+\psi_j^-=\psi_j=E_S f_j \; ,
\end{equation}
and hence $[\tilde{f}_j]_S = [f_j]_S$ as desired in point \textit{(i)} of Lemma~\ref{lem_finding_slab}.

Finally, if $L$ is any region in $M$ whose domain of dependence contains $N$, the solutions $\varphi_j$ introduced above can be written in the form $\varphi_j=E_{P_j}h_j$ for suitable $h_j\in C^\infty_c(L;\mathbb{R})$.
\end{proof}

We are now ready to prove Lemma~\ref{lem_reduction_to_slab_classical}.

\begin{proof}[Proof of Lemma~\ref{lem_reduction_to_slab_classical}.]
The first part immediately follows from Lemma~\ref{lem_finding_slab} for $k=1$. It remains to show that $ \exists \rho \in C_c^\infty(\tilde{N};\mathbb{R}) , \; \exists h \in C_c^\infty(L;\mathbb{R}):\; \tilde{f}=- R E_{P}^- h$.

By Lemma~\ref{lem_finding_slab} we can find $\varphi$ in $\mathrm{Sol}_{sc}(P)$, and a region $\tilde{N}$ such that $\varphi$ is non-vanishing on $\tilde{N}$. Thus $\rho:=-\tilde{f}/\varphi$ (or more precisely the extension by zero of this function from $\tilde{N}$ to $M$) defines a smooth function with support equal to that of $\tilde{f}$. As a result, $\tilde{f} = - R \varphi$. Nothing so far has depended on the choice of $L$. To show that $R\varphi$ can be written as $R E^-_{P} h$ we first use Lemma~\ref{lem_finding_slab} to find $h \in C_c^\infty(L;\mathbb{R})$ such that $\varphi=E_{P} h$. But since the support of $h$ is in $ L\subseteq M^+$, we see that $R E_{P}^+ h=0$ and hence $R E_{P} h = R E_{P}^- h$, which finishes the proof.
\end{proof}

\section{Proofs of Theorem~\ref{theo_faster_approx} and Theorem~\ref{theo_faster_approx_fields}}\label{sec_appendix_faster_approx}

We start with some definitions. First, $\mathcal{L}_b(X,Y)$ will denote the space of continuous maps between locally convex topological spaces $X$ and $Y$, equipped with the topology of bounded convergence~\cite{Treves:TVS} (which generalises the topology of convergence in operator norm for maps between normed spaces). In the case $Y=X$ we write simply $\mathcal{L}_b(X)$.
Second, the expression $X(\lambda) = \mathcal{O}(\lambda^s)$ will mean that $X(\lambda)=\lambda^{ s}U(\lambda)$ where $U$ is analytic with respect to a stated topology. Finally, recall that a \emph{Green hyperbolic operator} on $C^\infty(M;\mathbb{R}^k)$
is a partial differential operator which has advanced and retarded Green operators, as does its formal adjoint~\cite{Baer:2015}. This includes the normally hyperbolic second order operators discussed in the text but it is convenient to take a more general standpoint here (in fact, one could go even further and consider differential operators between vector bundles).

Proceeding with some general observations: Let $T_0$ and $T_1$ and $T_2$ be Green hyperbolic operators on $C^\infty(M;\mathbb{R}^{\ell +1})$ for some $\ell$, with
corresponding Green operators $E^\pm_{T_j}$ ($j=0,1,2$). 
Suppose further that these operators differ from one another only within a compact region $K$, so that $(T_j-T_k)f$ is supported in $K$ for every $f\in C^\infty(M;\mathbb{R}^{\ell +1})$.
For $j=1,2$, set
\begin{equation}
    \theta_j = I - (T_j-T_0) E_{T_j}^-
    \label{eq_def_theta_j}
\end{equation}
which (as in~\eqref{eq:unvartheta_def}) is the classical scattering operator for the dynamics of $T_j$ relative to that of $T_0$ (modulo passing to equivalence classes). We compute
\begin{align}
    \theta_2-\theta_1 &= (T_1-T_0)E_{T_1}^- - (T_2-T_0)E_{T_2}^-\notag\\
    &= T_0(E_{T_2}^--E_{T_1}^-)\notag\\
    &= T_0E_{T_1}^-(T_1-T_2)E_{T_2}^-\notag\\
    &= (I-(T_1-T_0)E_{T_1}^-)(T_1-T_2)E_{T_0}^-(I-(T_2-T_0)E_{T_2}^-),
\end{align}
where we have used the identity $E_{T_i}^- = E_{T_j}^-(I- (T_i-T_j)E_{T_i}^-)$ three times. 

Now let $T_1$ and $T_2$ depend analytically on a parameter $\lambda$ with $T_1(0)=T_2(0)=T_0$,
and $T_1(\lambda)-T_2(\lambda)=\mathcal{O}(\lambda^{2k})$ for some $k\ge 1$, with respect to the topology of $\mathcal{L}_b(C^\infty(M;{\mathbb{C}^{{ \ell}+1}}))$, where
$C^\infty(M;{\mathbb{C}^{{ \ell}+1}})$ is given its standard Fr\'echet topology.
As the $T_j(\lambda)$ agree outside $K$, the differences are also analytic in the $\mathcal{L}_b(C^\infty(M;{\mathbb{C}^{{ \ell}+1}}),C_c^\infty(M;{\mathbb{C}^{{ \ell}+1}}))$ where
$C_c^\infty(M;{\mathbb{C}^{{ \ell}+1}})$ has its standard LF space topology. (See~\cite{Treves:TVS} for the definition of the topologies involved.) In the same way, $T_j(\lambda)-T_0=\lambda T'_j(0) + \mathcal{O}(\lambda^2)$ for $j=1,2$.
Assume also that $E_{T_j(\lambda)}^-$ are analytic in $\lambda$ with respect to the topology $\mathcal{L}_b(C_c^\infty(M;{\mathbb{C}^{{ \ell}+1}}),
C^\infty(M;{\mathbb{C}^{{ \ell}+1}}))$, whereupon the composition $(T_l(\lambda)-T_m(\lambda))E_{T_n(\lambda)}^-$ is analytic in $\lambda$ in the topology of $\mathcal{L}_b(C_c^\infty(M;{\mathbb{C}^{{ \ell}+1}}))$. Then
\begin{align}
\theta_2(\lambda)-\theta_1(\lambda) &= 
(T_1(\lambda)-T_2(\lambda))E_{T_0}^- 
-\lambda T_1'(0)E_{T_0}^-(T_1(\lambda)-T_2(\lambda))E_{T_0}^-
\notag\\ &\qquad 
-\lambda (T_1(\lambda)-T_2(\lambda))E_{T_0}^- T_2'(0) E_{T_0}^-
+\mathcal{O}(\lambda^{2k+2})\notag\\
&= -\lambda^{2k}VE_{T_0}^- + \lambda^{2k+1}(T_1'(0)E_{T_0}^-V E_{T_0}^- + V E_{T_0}^-T_1'(0) E_{T_0}^-)  \notag\\
&\qquad 
+\mathcal{O}(\lambda^{2k+2})
\label{eq_diff_theta_gen}
\end{align}
in $\mathcal{L}_b(C_c^\infty(M;{\mathbb{C}^{{ \ell}+1}}))$ where we write $T_2(\lambda)-T_1(\lambda)=\lambda^{2k}V+\mathcal{O}(\lambda^{2k+1})$, and note that $T_2'(0)=T_1'(0)$ by assumptions made above.\\

Let us now consider for $\ell \geq 1$ and $k \leq \ell$ the following concrete operators
\begin{equation}
    T_k(\lambda)= \begin{pmatrix} S & {R^{(k)}}(\lambda)^T \\ R^{(k)}(\lambda) & P^{\oplus \ell} \end{pmatrix},
\end{equation}
where $S$ and $P$ are formally self-adjoint Green hyperbolic operators on $C^\infty(M;\mathbb{R})$ and
\begin{equation} 
R^{(k)}(\lambda)^T:= \qty(\lambda R_1, \lambda^2 R_2,   \lambda^4 R_3,\ldots, \lambda^{2k-2} R_k,0,\ldots,0): C_c^\infty(M;\mathbb{R}^\ell) \to C_c^\infty(M;\mathbb{R})
\end{equation}where each $R_j$ is the operator of multiplication by some compactly supported function $\rho_j$ with support in $K$. The operators $T_k(\lambda)-T_k(0)$ are polynomials with smooth coefficients and therefore analytic with respect to the topology of $\mathcal{L}_b(C^\infty(M;\mathbb{C}^{{ \ell}+1}))$, which is also true of $T_k(\lambda)$ in consequence. By results in~\cite{fewster-non-local}, the Green operators are analytic in $\lambda$ with respect to the topology of
$\mathcal{L}_b(C_c^\infty(M;\mathbb{C}^{{ \ell}+1}),C^\infty(M;\mathbb{C}^{{ \ell}+1}))$.
Therefore, the assumptions needed for the calculations above are valid.

Introducing the $(\ell+1) \times (\ell+1)$ matrix units $M_{i,j}$, i.e., $\qty(M_{i,j})_{l,m}:= \delta_{i,l} \delta_{j,m}$, we may write
\begin{align}
    T_{k+1}(\lambda)-T_{k}(\lambda)&=\lambda^{2k} R_{k+1}\otimes (M_{1, k+2} + M_{k+2, 1}),\\
    T_1'(0) &= R_1\otimes (M_{1,2}+M_{2,1}),
\end{align}
and using Eq.~\eqref{eq_diff_theta_gen} and the fact that $E_{T_0}^-$ is diagonal, we compute
\begin{align}
\theta_{k+1}(\lambda)-\theta_k(\lambda) &= -\lambda^{2k} (R_{k+1}E_P^{-}\otimes M_{1,k+2} + R_{k+1}E_S^-\otimes M_{k+2,1}) \\
&\qquad +\lambda^{2k+1} (R_1 E_S^- R_{k+1}\otimes M_{2,k+2} + R_{k+1}E_S^{-}R_1\otimes M_{k+2,2})\\
&\qquad + \mathcal{O}(\lambda^{2k+2})
\end{align}
in $\mathcal{L}_b(C_c^\infty(M;{\mathbb{C}^{{ \ell}+1}}))$.
In particular, the projection onto the first component is
\begin{equation}
      \text{pr}_1\circ (\theta_{k+1}(\lambda)-\theta_k(\lambda)) = -\lambda^{2k} R_{k+1}E_P^{-}\circ \text{pr}_{k+2} + \mathcal{O}(\lambda^{2k+2}),
\end{equation}
in $\mathcal{L}_b(C_c^\infty(M;{\mathbb{C}^{{ \ell}+1}}),C_c^\infty(M;{\mathbb{C}}))$.

Turning to our application, let us recall from Lemma~\ref{lem_finding_slab}, that for every $f \in C_c^\infty(N;\mathbb{R})$ there exists $\tilde{f} \in C_c^\infty(N;\mathbb{R})$ and a solution $\varphi \in \mathrm{Sol}_{sc}(P)$ that is nonzero throughout $\supp \tilde{f}$. Further, fixing $K:=\mathrm{supp}\; \tilde{f}$ and choosing any region $L \subseteq M^+= M\setminus J^-(K)$ whose domain of dependence contains $N$, there exists $h \in C_c^\infty(L, \mathbb{R})$ such that $\varphi=E_{P}h$. It follows that $E^-_{P}h$ is nonvanishing throughout $K$, on which it agrees with $\varphi$. Hence, for
\begin{equation}
 \vec{h}_{\lambda}^T:= \qty(\lambda^{-1} h, h, ..., h) \in C_c^{\infty}(M;\mathbb{R}^\ell),
 \end{equation}
we find
\begin{equation}
 \text{pr}_1\circ (\theta_{k+1}(\lambda)-\theta_k(\lambda))\mqty(0\\ \vec{h}_\lambda) = 
 -\lambda^{2k} R_{k+1} \varphi + \mathcal{O}(\lambda^{2k+2}),
  \label{eq_pr_diff_theta}
\end{equation}
in $C_c^\infty(M, \mathbb{R})$ since by assumption $\mathrm{supp} \; \rho_{k+1} \subseteq K$.

Let us now choose the functions $\rho_k$ inductively in the following way. First, observe that $\rho_1:= - \tilde{f}/ \varphi$ is a smooth function with support in $K$ and that hence
\begin{align}\label{eq_base_case_lambda_order_2}
    \text{pr}_1\circ \theta_{1}(\lambda)\mqty(0\\ \vec{h}_\lambda)& = - R_1 \varphi + \mathcal{O}(\lambda^{2}) =\tilde{f} + \mathcal{O}(\lambda^2).
\end{align}
This is the base case $k=1$ of the inductive hypothesis that
\begin{equation}\label{eq:inductive_hyp}
 \text{pr}_1\circ \theta_{k}(\lambda)\mqty(0\\ \vec{h}_\lambda) = 
 \tilde{f} + \lambda^{2k}\mathcal{E}_k(\lambda),
 \end{equation}
where $\lambda \mapsto \mathcal{E}_k(\lambda) \in C_c^\infty(M;\mathbb{R})$ is analytic and valued in functions supported in $K$. This last property follows because $\text{supp}\,\tilde{f}=K$, together with the fact that 
the left hand side of Eq.~\eqref{eq:inductive_hyp} is supported in $K$, by 
inspection of Eq.~\eqref{eq_def_theta_j} and because the first component of $(0,\vec{h}_\lambda)^T$ is zero. Next, suppose that the hypothesis has been established for all $1 \leq j \leq k$ for some $1\le k\le \ell-1$. Then, by Eq.~\eqref{eq_pr_diff_theta}
\begin{equation}
 \text{pr}_1\circ \theta_{k+1}(\lambda)\mqty(0\\ \vec{h}_\lambda) = 
 \tilde{f} +\lambda^{2k}\mathcal{E}_k(\lambda)  -\lambda^{2k} R_{k+1}\varphi + \mathcal{O}(\lambda^{2k+2})
 \end{equation}
so the choice 
\begin{equation}
    \rho_{k+1} = \mathcal{E}_k(0)/\varphi
\end{equation}
ensures that~\eqref{eq:inductive_hyp} holds with $k$ replaced by $k+1$ and that $\mathrm{supp}\; \rho_{k+1} \subseteq K$. Thus the statement~\eqref{eq:inductive_hyp} holds for all $1\le k\le \ell$, in particular
\begin{equation}
 \text{pr}_1\circ \theta_{\ell}(\lambda)\mqty(0\\ \vec{h}_\lambda) = 
 \tilde{f} + \lambda^{2\ell}\mathcal{E}_\ell(\lambda).
 \end{equation} 
Let us now turn to the proofs.

\begin{proof}[Proof of Theorem~\ref{theo_faster_approx}]
We note that $T_\ell(\lambda)$ and the probe observables $\qty[\vec{h}_\lambda]_{P^{\oplus \ell}}$ give rise to a collection of classical measurement schemes
\begin{equation}
    H^{\mathrm{cl},\ell}_\lambda = \qty(\mathcal{C}_\mathcal{P}^{\oplus \ell}, \varepsilon^{\mathrm{cl}, R_\lambda}, \qty[\vec{h}_\lambda]_{P^{\oplus \ell}}),
\end{equation}
where
\begin{equation}
    \varepsilon^{\mathrm{cl}, R_\lambda}\qty(\qty[\vec{h}_\lambda]_{P^{\oplus \ell}}) = \text{pr}_1\circ \theta_{\ell}(\lambda)\mqty(0\\ \vec{h}_\lambda)=  \tilde{f} + \lambda^{2\ell}\mathcal{E}_\ell(\lambda).
\end{equation}
Obviously, $\lambda \mapsto \qty[\mathcal{E}_k(\lambda)]_S$ is $\tau^{\mathrm{cl}}$ continuous at $\lambda=0$, hence $\qty(H^{\mathrm{cl},\ell}_\lambda)_{\lambda >0}$ is an asymptotic measurement scheme for $\qty[f]_S =\qty[\tilde{f}]_S$ with coupling in $N \supseteq \mathrm{supp}\; \tilde{f}$ and processing region $L$.

For any choice of seminorm $\mathrm{eff}$ such that $\mathrm{eff}\qty([\vec{h}_\lambda]_{P^{\oplus\ell}}) \neq 0$, we have that
\begin{equation}
    \mathrm{eff}([\vec{h}_\lambda]_{P^{\oplus\ell}}) = \lambda^{-1} \mathrm{eff}([(h, \lambda h, ..., \lambda h)^T]_{P^{\oplus\ell}}),
\end{equation}
where $\lambda \mapsto \mathrm{eff}([(h, \lambda h, ..., \lambda h)^T]_{P^{\oplus\ell}}) \in \mathbb{R}^+$ is continuous. Hence
\begin{equation}
\begin{aligned}
\varepsilon^{\mathrm{cl}, R_\lambda}([\vec{h}_\lambda]_{P^{\oplus\ell}}) - \qty[f]_S &= \mathrm{eff}([\vec{h}_\lambda]_{P^{\oplus\ell}})^{-2 \ell} \; \tilde{\mathcal{E}}_\ell(\lambda),
\end{aligned}
\end{equation}
for $\lambda \mapsto \tilde{\mathcal{E}}_\ell(\lambda):=\mathrm{eff}([(h, \lambda h, ..., \lambda h)^T]_{P^{\oplus\ell}})^{2 \ell} \;  \qty[\mathcal{E}_\ell(\lambda)]_{S} \in \mathcal{C}_\mathcal{S}$, which is obviously continuous around $\lambda=0$. This concludes the proof of Theorem~\ref{theo_faster_approx}.
\end{proof}

We finally turn to the proof of Theorem~\ref{theo_faster_approx_fields}.

\begin{proof}[Proof of Theorem~\ref{theo_faster_approx_fields}]

Continuing in the notation from above let us define $\vec{g}_\lambda$ as the projection of $\theta_{\ell}(\lambda)\mqty(0 \\ \vec{h}_\lambda)$ to the last $\ell$ components and let us set $B_\lambda := \varphi_\mathcal{P}^{ \otimes \ell}(\vec{h}_\lambda) - \sigma(\varphi_\mathcal{P}^{ \otimes \ell}(\vec{g}_\lambda))\openone$ for some probe preparation state $\sigma$. Then it is easy to see that the collection of measurement schemes
\begin{equation}
    (\mathcal{F}_\mathcal{P}^{\otimes \ell}, \varepsilon_{\sigma}^{\varphi,\lambda R},{ B_\lambda})
\end{equation}
is an asymptotic measurement scheme for $\varphi_\mathcal{S}(f)$. Note that
\begin{equation}
    \varepsilon_{\sigma}^{\varphi,\lambda R} \qty({ B_\lambda}) - \varphi_\mathcal{S}(f) = \lambda^{2 \ell} \varphi_\mathcal{S}(\mathcal{E}_\ell(\lambda)).
\end{equation}
{ Let us now choose a seminorm $\mathrm{eff}$ such that $\mathrm{eff}\qty({ B_\lambda}) \neq 0$. (In particular, in the case when $\sigma$ is a product of quasi-free states with vanishing one-point function, then $\mathrm{eff}_\sigma$ as defined in Eq.~\eqref{eq_eff_sigma} is such a seminorm.)} We then have that
\begin{equation}
    \mathrm{eff}\qty({ B_\lambda}) = \lambda^{-1} \mathrm{eff}\qty(\varphi_\mathcal{P}^{ \otimes \ell}(\lambda \vec{h}_\lambda) - \sigma(\varphi_\mathcal{P}^{ \otimes \ell}(\lambda \vec{g}_\lambda))\openone),
\end{equation}
where $\lambda \mapsto \mathrm{eff}\qty(\varphi_\mathcal{P}^{ \otimes \ell}(\lambda \vec{h}_\lambda) - \sigma(\varphi_\mathcal{P}^{ \otimes \ell}(\lambda \vec{g}_\lambda))\openone) \in \mathbb{R}^+$ is continuous, since $\vec{g}_\lambda$ equals $\vec{h}_\lambda$ plus a continuous term, see Eq.~\eqref{eq_def_theta_j}. In summary 
\begin{equation}
    \varepsilon_{\sigma}^{\varphi,\lambda R} \qty({ B_\lambda}) - \varphi_\mathcal{S}(f) =  \mathrm{eff}\qty({ B_\lambda})^{-2 \ell} \tilde{\mathcal{E}}_\ell(\lambda),
\end{equation}
where $\lambda \mapsto \tilde{\mathcal{E}}_\ell(\lambda) := \mathrm{eff}\qty(\varphi_\mathcal{P}^{ \otimes \ell}(\lambda \vec{h}_\lambda) - \sigma(\varphi_\mathcal{P}^{ \otimes \ell}(\lambda \vec{g}_\lambda))\openone)^{2 \ell} \varphi_\mathcal{S}(\mathcal{E}_\ell(\lambda)) \in \mathcal{F}_\mathcal{S}$ is continuous on a neighbourhood of $\lambda=0$. Thus we have shown the theorem.
\end{proof}

\section{Abstract combination of asymptotic measurement schemes}\label{sec_appendix_abstract comb}

Let us consider the task of combining asymptotic measurement schemes at an abstract level. For $j=1,2$, let 
\begin{equation}
    H^j_\alpha:=\qty(\mathcal{P}^j_\alpha,\varepsilon_{\alpha,\sigma^j_\alpha}^j, B^j_\alpha)
\end{equation}
be asymptotic measurement schemes for $A_j \in \mathcal{S}$. Then
\begin{equation}
    H_\alpha:=\qty(\mathcal{P}^1_\alpha \otimes \mathcal{P}^2_\alpha,\varepsilon_{\alpha, \sigma^1_\alpha \otimes \sigma_\alpha^2}, c_1 B^1_\alpha \otimes \openone + c_2 \openone \otimes B^2_\alpha),
\end{equation}
is an asymptotic measurement scheme for $c_1 A_1 + c_2 A_2$,
if 
\begin{equation}
\begin{aligned}
    \varepsilon_{\alpha, \sigma^1_\alpha \otimes \sigma_\alpha^2}(B^1_\alpha \otimes \openone) &\to A_1,\\
    \varepsilon_{\alpha, \sigma^1_\alpha \otimes \sigma_\alpha^2}(\openone \otimes B^2_\alpha) &\to A_2,
\end{aligned}
\end{equation}
and if the topology on $\mathcal{S}$ respects addition. 

This situation occurs under some quite natural conditions provided that, for each $\alpha$, the 
coupling zones for the two probes $\mathcal{P}^1_\alpha$ and $\mathcal{P}^2_\alpha$ are \emph{causally orderable}, i.e., may be separated by a Cauchy surface, and that the probes may be
regarded as a single `super-probe' $\mathcal{P}^1_\alpha\otimes\mathcal{P}^2_\alpha$ that respects \emph{bipartite causal factorization}. The latter condition requires that the scattering map for the system coupled to the `super-probe' factorizes into the composition of the scattering maps for the two individual probes with the `later' probe first, e.g., $ \Theta_\alpha = \hat{\Theta}_\alpha^1 \circ \hat{\Theta}_\alpha^2$ if the second coupling region lies to 
the future of a Cauchy surface separating it from the first. (Here the
hat on $\Theta_\alpha^j$ indicates the trivial extension from
$\mathcal{S}\otimes\mathcal{P}^j_\alpha$ to $\mathcal{S}\otimes\mathcal{P}^1_\alpha\otimes\mathcal{P}^2_\alpha$.)
Crucially, the factorization must hold for all admissible causal ordering of the two coupling regions. See~\cite{bostelmann2020impossible} for a full discussion and the extension to multiple probes.

In particular, if the coupling zones for $\mathcal{P}^1_\alpha$ and $\mathcal{P}^2_\alpha$ are spacelike separated, then their individual scattering maps must commute on $\mathcal{S} \otimes \mathcal{P}^1_\alpha\otimes\mathcal{P}^2_\alpha$, because they admit both possible causal orderings. It follows that the resulting induced observable maps $\varepsilon_{\alpha, \sigma^1_\alpha \otimes \sigma_\alpha^2}$ fulfill
\begin{equation}
    \varepsilon_{\alpha, \sigma^1_\alpha \otimes \sigma_\alpha^2}(C^1_\alpha \otimes C^2_\alpha) = \varepsilon_{\alpha,\sigma^1_\alpha}^1(C^1_\alpha) \varepsilon_{\alpha,\sigma^2_\alpha}^2(C^2_\alpha),
\end{equation}
for every $C^j_\alpha \in \mathcal{P}^j_\alpha$, see Eqs.~(41)-(42) in~\cite{bostelmann2020impossible}. In particular,
\begin{equation}
     \varepsilon_{\alpha, \sigma^1_\alpha \otimes \sigma_\alpha^2}(B^1_\alpha \otimes \openone) = \varepsilon_{\alpha,\sigma^1_\alpha}^1(B^1_\alpha)
     \to A_1, \qquad \varepsilon_{\alpha, \sigma^1_\alpha \otimes \sigma_\alpha^2}(\openone \otimes B^2_\alpha) = \varepsilon_{\alpha,\sigma^2_\alpha}^2(B^2_\alpha)
     \to A_2,
\end{equation}
as required. This is an abstract solution to the problem of combining two asymptotic measurement schemes with spacelike separated coupling zones for observables $A_j$ to one asymptotic measurement scheme for an observable $c_1 A_1 + c_2 A_2$.

More generally, suppose that the coupling zones may be ordered so that the second coupling region lies to the future of a Cauchy surface separating it from the first (but not necessarily vice versa). Then the combined scattering maps factorize as
\begin{equation}
    \Theta_\alpha = \hat{\Theta}_\alpha^1 \circ \hat{\Theta}_\alpha^2,
\end{equation}
and the resulting map $\varepsilon_{\alpha, \sigma^1_\alpha \otimes \sigma_\alpha^2}$ fulfills for every $C^j_\alpha \in \mathcal{P}^j_\alpha$
\begin{equation}
     \begin{aligned}
         \varepsilon_{\alpha, \sigma^1_\alpha \otimes \sigma_\alpha^2}(C^1_\alpha \otimes \openone) &=\varepsilon_{\alpha,\sigma^1_\alpha}^1(C^1_\alpha),\\
         \varepsilon_{\alpha, \sigma^1_\alpha \otimes \sigma_\alpha^2}(\openone \otimes C^2_\alpha) &=\eta_{\sigma_\alpha^1} \qty(\Theta^1_\alpha \varepsilon_{\alpha,\sigma^2_\alpha}^2(C^2_\alpha) \otimes \openone).
     \end{aligned}
\end{equation}
Let us now assume that the maps $A \mapsto \eta_{\sigma_\alpha^1} \qty(\Theta^1_\alpha A \otimes \openone)$ on $\mathcal S$ converge pointwise to the identity\footnote{This is motivated by the specific asymptotic measurement schemes for $\mathcal{A}_\mathcal{S}$ above, in which the coupling becomes weaker along the net.} $\mathrm{id}$ and are equicontinuous\footnote{This holds in particular if the requirements for the uniform boundedness principle are met, see for instance Theorem~33.1 in~\cite{Treves:TVS}.} at $0$ so that $\eta_{\sigma_\alpha^1} \qty(\Theta^1_\alpha A \otimes \openone)\to A$ for all $A$ and $\eta_{\sigma_\alpha^1} \qty(\Theta^1_\alpha A_\alpha \otimes \openone)\to \lim_\alpha A_\alpha$ for all convergent $(A_\alpha)_\alpha$.
Then
\begin{equation}
     \begin{aligned}
         \lim\limits_\alpha \varepsilon_{\alpha, \sigma^1_\alpha \otimes \sigma_\alpha^2}(\openone \otimes B^2_\alpha) &=  \lim\limits_\alpha \eta_{\sigma_\alpha^1} \qty(\Theta^1_\alpha \varepsilon_{\alpha,\sigma^2_\alpha}^2(B^2_\alpha) \otimes \openone) \\
         & =  \lim\limits_\alpha \qty( \eta_{\sigma_\alpha^1} \qty(\Theta^1_\alpha A_2 \otimes \openone) + \eta_{\sigma_\alpha^1} \qty(\Theta^1_\alpha ( \varepsilon_{\alpha,\sigma^2_\alpha}^2(B^2_\alpha) -A_2) \otimes \openone))\\
         &= A_2 + \lim\limits_\alpha \eta_{\sigma_\alpha^1} \qty(\Theta^1_\alpha ( \varepsilon_{\alpha,\sigma^2_\alpha}^2(B^2_\alpha) -A_2) \otimes \openone)\\
         & =A_2,
     \end{aligned}
\end{equation}
where the last step follows by equicontinuity of the maps $A \mapsto \eta_{\sigma_\alpha^1} \qty(\Theta^1_\alpha A \otimes \openone)$
and because $\varepsilon_{\alpha,\sigma^2_\alpha}^2(B^2_\alpha) \to A_2$. Hence, under the stated assumptions, $(H_\alpha)_\alpha$ is an asymptotic measurement scheme for $c_1 A_1 + c_2 A_2$.\\

When the coupling zones are \emph{not} causally orderable we face the general problem of combining two theories, each describing a probe coupled to the system. While it is unclear how to do this for general theories,  it may be accomplished easily in a Lagrangian formulation by taking
the sum of the two coupled Lagrangians minus the system Lagrangian. This is precisely what we have done in \Sect\ref{sec_comb_measurement_schemes}.

\section{States, GNS representation and field operators}
\label{appendix_conv_field_ops}

Recall that a state $\omega:\mathcal{A} \to \mathbb{C}$ is a linear map such that $\omega(\openone)=1$, and $\forall A \in \mathcal{A}: \omega(A^* A) \geq 0$.

A state on the $C^*$-algebra of a scalar field $\mathcal{A}$ is called \emph{analytic}, if for every $f \in C_c^\infty(M;\mathbb{R}^k)$ the function $\mathbb{R}\ni t \mapsto \omega(W(tf)) \in \mathbb{C}$ is analytic. 

A useful class of physically reasonable analytic states on $\mathcal{A}$ is that of \emph{quasi-free} states. A state $\omega: \mathcal{A} \to \mathbb{C}$ is called quasi-free (or Gaussian, with vanishing one-point function), if
\begin{equation}
    \omega(W(f)) = e^{- \frac{1}{4} \beta(f,f)},
\end{equation}
for $\beta: C_c^\infty(M; \mathbb{R}^k) \times C_c^\infty(M; \mathbb{R}^k) \to \mathbb{R}$ a symmetric $\mathbb{R}$-bilinear form such that there exists a positive semidefinite $\mathbb{R}$-bilinear form $\tilde{\beta}: C_c^\infty(M; \mathbb{R}^k)/ P C_c^\infty(M; \mathbb{R}^k) \times C_c^\infty(M; \mathbb{R}^k)/P C_c^\infty(M; \mathbb{R}^k)$ with $\beta(f,g) = \tilde{\beta}([f]_P,[g]_P)$. The positivity of $\omega$ is equivalent to
\begin{equation}
    \begin{aligned}
    |E(f,g)|^2 \leq \beta(f,f) \beta(g,g).
    \label{eq_positivity_condition}
    \end{aligned}
\end{equation}
We see in particular that $\omega(W(f)) \neq 0$. As examples, we mention that the QFT of a linear scalar field on a stationary globally hyperbolic spacetime, subject to a positive stationary potential, admits quasi-free ground and KMS states with distributional $\beta$, see~\cite{sanders2013thermal}.

For a given (not-necessarily quasi-free) state $\omega$, the famous GNS construction allows us to represent the algebra $\mathcal{A}$ as bounded operators on a complex (not necessarily separable) Hilbert space $\qty(\mathcal{H}_\omega,\braket{\cdot |\cdot})$, i.e., there is a $C^*$-homomorphism $\pi_\omega: \mathcal{A} \to BL(\mathcal{H}_\omega)$ called the \emph{GNS representation of $\omega$}. Furthermore, there exists $\Omega_\omega \in \mathcal{H}_\omega$ such that $\omega(A) = \braket{\Omega_\omega|\pi_\omega(A) \Omega_\omega}$ and which is cyclic for $\pi_\omega \qty[\mathcal{A}]$, i.e., $\{A\Omega_\omega | A\in \pi_\omega\qty[\mathcal{A}]\}$ is dense in $\mathcal{H}_\omega$. Since $\pi_\omega$ is continuous, it even holds that for every dense sub-algebra $\tilde{\mathcal{A}}$ of $\mathcal{A}$, $\pi_\omega\qty[\tilde{\mathcal{A}}]\Omega_\omega$ is dense in $\mathcal{H}_\omega$.

It is worth emphasising that $\omega$ is represented by the vector state $\Omega_\omega \in \mathcal{H}_\omega$ in the GNS representation whether or not $\omega$ is a pure state (purity holds if and only if the representation $\pi$ is irreducible).\\

Next,{ recall that $\mathfrak{S}_c$ is the set of all states $\omega$, such that $\omega \circ W$ is continuous. While the map $f \mapsto W(f)$ is not continuous in the norm topology of $\mathcal{A}$, as the norm-distance between two Weyl generators indexed by any two inequivalent functions is two, we will now show that the map $\pi_\omega \circ W$ is continuous with respect to the strong$^*$ operator topology on $BL(\mathcal{H}_\omega)$, for $\omega\in\mathfrak{S}_c$. 

\begin{lem}
For every $\omega \in \mathfrak{S}_c$ with GNS representation $\pi_\omega: \mathcal{A} \to BL(\mathcal{H}_\omega)$, the map
\begin{equation}
    \pi_\omega \circ W : C_c^\infty(M;\mathbb{R}^k) \to BL(\mathcal{H}_\omega) 
\end{equation}
is continuous with respect to the strong$^*$ operator topology on $BL(\mathcal{H}_\omega)$.
\label{lem_strong_convergence_id}
\end{lem}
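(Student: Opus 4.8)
The plan is to first reduce strong$^*$ continuity to ordinary strong operator continuity, and then to establish the latter in three standard steps: a direct computation on the cyclic vector $\Omega_\omega$, extension to a dense subspace, and an $\epsilon/3$ argument exploiting unitarity. For the reduction, I would note that $\pi_\omega(W(f))^* = \pi_\omega(W(-f))$, so that the strong$^*$ seminorm of $\pi_\omega(W(f_\alpha)) - \pi_\omega(W(f))$ at a vector $x$ combines the strong seminorm evaluated along the net $f_\alpha \to f$ with the one evaluated along $-f_\alpha \to -f$. Since $f \mapsto -f$ is a homeomorphism of $C_c^\infty(M;\mathbb{R}^k)$, it therefore suffices to show that $f \mapsto \pi_\omega(W(f))$ is continuous into $(BL(\mathcal{H}_\omega),\tau_{st})$; that is, $\pi_\omega(W(f_\alpha))x \to \pi_\omega(W(f))x$ in norm for every net $f_\alpha \to f$ and every $x \in \mathcal{H}_\omega$.

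The heart of the argument is the case $x = \Omega_\omega$. Expanding $\|\pi_\omega(W(f_\alpha))\Omega_\omega - \pi_\omega(W(f))\Omega_\omega\|^2$ into four inner products and using the Weyl relations together with $W(f)^* = W(-f)$, each term takes the form $\braket{W(g)\Omega_\omega | W(h)\Omega_\omega} = e^{\frac{\mathrm{i}}{2}E_P(g,h)}\,\omega(W(h-g))$. Antisymmetry of $E_P$ makes the two diagonal terms equal to $1$, while the cross terms carry the phases $e^{\pm\frac{\mathrm{i}}{2}E_P(f_\alpha,f)}$ and the values $\omega(W(\mp(f_\alpha - f)))$. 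As $f_\alpha - f \to 0$, the defining continuity of $\omega \circ W$ for $\omega \in \mathfrak{S}_c$ forces $\omega(W(\mp(f_\alpha-f))) \to \omega(\openone) = 1$, while separate continuity of $E_P$ sends the phases to $1$; hence the norm squared tends to $2 - 1 - 1 = 0$. This is the single place where membership in $\mathfrak{S}_c$ enters, and it is the crux of the proof.

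To pass to a dense set I would use $\pi_\omega(W(f_\alpha))\pi_\omega(W(g))\Omega_\omega = e^{-\frac{\mathrm{i}}{2}E_P(f_\alpha,g)}\pi_\omega(W(f_\alpha+g))\Omega_\omega$; since $f_\alpha + g \to f + g$, the previous step applied with limit $f+g$, together with continuity of the scalar prefactor, yields convergence to $\pi_\omega(W(f))\pi_\omega(W(g))\Omega_\omega$. By linearity this gives strong convergence on the span of the vectors $\pi_\omega(W(g))\Omega_\omega$, which is dense because the Weyl generators span a dense subalgebra of $\mathcal{A}$ and the image of a dense subalgebra under $\pi_\omega$, applied to $\Omega_\omega$, is dense. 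Since every $\pi_\omega(W(f_\alpha))$ is unitary and hence of norm $1$, a routine $\epsilon/3$ estimate promotes strong convergence on this dense set to all of $\mathcal{H}_\omega$, completing the proof via the reduction above. The only point to watch is that $C_c^\infty(M;\mathbb{R}^k)$ is not metrizable, so every convergence must be phrased for nets rather than sequences; both the $\epsilon/3$ argument and the continuity estimates go through verbatim for nets, so this causes no real difficulty. The main obstacle is thus concentrated in the computation on $\Omega_\omega$, with everything else being standard functional analysis.
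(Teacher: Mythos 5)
Your proof is correct and follows essentially the same route as the paper's: Weyl relations plus the continuity of $\omega\circ W$ and the distributional nature of $E_P$ yield strong convergence on the dense span of the vectors $\pi_\omega(W(g))\Omega_\omega$, unitarity (uniform boundedness) extends this to all of $\mathcal{H}_\omega$, and the substitution $f\mapsto -f$ upgrades strong to strong$^*$ convergence. The only difference is organizational: the paper computes $\|(W^\pi(f_n)-W^\pi(f))W^\pi(g)\Omega_\omega\|$ in a single step via one Weyl-relation identity, whereas you first treat the cyclic vector and then shift by $W^\pi(g)$ — the same computation arranged in two stages.
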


\paragraph*{Remark:} The assumptions of Lemma~\ref{lem_strong_convergence_id} are fulfilled in particular for quasi-free states with distributional $\beta$.}

\begin{proof}
Let us set $W^\pi(f):= \pi_\omega(W_\mathcal{S}(f))$ and let $\qty(f_n)_n$ be a net in $C_c^\infty(M;\mathbb{R}^k)$ that converges to $f$. Using the Weyl relations it is easy to verify that
\begin{equation}
    \qty(W(f_n) - W(f))W(g) = e^{- \frac{\mathrm{i}}{2} E(f_n-f, f+2g)} W(f) W(g) \qty( W(f_n - f)- e^{ \frac{\mathrm{i}}{2} E(f_n-f, f+2g)} \openone),
\end{equation}
and hence, by unitarity of the Weyl generators,
\begin{equation}
\begin{aligned}
  \|(W^\pi(f_n)-W^\pi(f))W^\pi(g) \Omega_\omega\|_\omega &= \|W(f_n - f)\Omega_\omega- e^{ \frac{\mathrm{i}}{2} E(f_n-f, f+2g)}\Omega_\omega \|_\omega \\
  & \leq \| (W(f_n - f) - \openone) \Omega_\omega\|_\omega + |e^{ \frac{\mathrm{i}}{2} E(f_n-f, f+2g)} -1|\\
  &\longrightarrow 0,
\end{aligned}
\end{equation}
as $f_n\to f$, on noting that $\| (W^\pi(f_n-f)-\openone)\Omega_\omega\|_\omega^2= 2-2\text{Re}\, \omega(W(f_n-f))$, and using continuity of $\omega\circ W$ and $\omega(W(0))=1$, together with the distributional nature of $E$. Taking linear combinations, we have shown that $W^\pi(f_n)\phi\to W^\pi(f)\phi$ for all $\phi$ in the span of $\{W^\pi(g)\Omega: g\in C_c^\infty(M)\}$, which is dense in $\mathcal{H}_\omega$, due to cyclicity and the Weyl relations. This statement extends to all $\phi\in \mathcal{H}_\omega$ because the $W^\pi(f_n)$ are unitary and therefore uniformly bounded.

Finally, if $f_n \to f$, then also $-f_n \to -f$ and the above argument shows that for every $\phi \in \mathcal{H}_\omega$ it holds that
\begin{equation}
\begin{aligned}
    &\|\qty(W^\pi(f_n) - W^\pi(f)) \phi\|_\omega  + \|\qty(W^\pi(f_n) - W^\pi(f))^* \phi\|_\omega\\
    &=  \|\qty(W^\pi(f_n) - W^\pi(f)) \phi\|_\omega  + \|\qty(W^\pi(-f_n) - W^\pi(-f)) \phi\|_\omega \longrightarrow 0.
\end{aligned}
\end{equation}
\end{proof}

Let us now turn to a discussion of field operators, starting with some prerequisites based on Chapter 3 of~\cite{petz1990invitation}. Let $\pi$ be the GNS representation of an analytic state $\omega$ of the CCR-$C^*$-algebra $\mathcal{A}$, with GNS vector $\Omega$ and Hilbert space $\mathcal{H}$. Then for every $f \in C_c^\infty(M;\mathbb{R})$ the map $t \mapsto \pi(W(tf))$ is a strongly continuous one-parameter group which allows us to define field operators $\varphi^\pi(f)$ via Stone's theorem. For every $f \in C_c^\infty(M;\mathbb{R})$, $\varphi^\pi(f)$ is a self-adjoint operator with dense domain $D(\varphi^\pi(f))$ such that $\pi(W(f)) =e^{\mathrm{i}\varphi^\pi(f)}$. 
The functions $(f_1, ..., f_n) \mapsto \braket{\Omega| \varphi^\pi(f_1)...\varphi^\pi(f_n)\Omega}$ ($n\in\mathbb{N}$) are called $n$-point functions; in particular, the \emph{two-point function} takes the form $\braket{\Omega|\varphi^\pi(f) \varphi^\pi(g) \Omega}=\frac{1}{2} \beta(f,g) + \frac{\mathrm{i}}{2}E(f,g)$ 
for quasi-free $\omega$ with bilinear form $\beta$.

It follows indeed from the proof of Lemma~\ref{lem_strong_convergence_id} that the subspace $D^\omega_W := \mathrm{span} \{ \pi(W(f)) \Omega | f \in C_c^\infty(M;\mathbb{R})\}$ is contained in $D(\varphi^\pi(f))$ for every $f$, and is dense in the GNS Hilbert space $\mathcal{H}$. Since $D^\omega_W$ is trivially invariant under the one-parameter unitary group $t \mapsto e^{\mathrm{i} t \varphi^\pi(f)}$ for every $f$, we get from Nelson's invariant domain theorem, see Theorem VIII.10 in~\cite{ReedSimonI1980}, that $\varphi^\pi(f) \restriction D^\omega_W$ is essentially self-adjoint. In particular, $D^\omega_W$ is a common core for all $\varphi^\pi(f)$.\\

Furthermore, for every \emph{quasi-free} state $\omega \in \mathfrak{S}_c^{an}$ the space $D^\omega_\varphi:= \mathrm{span} \{ \varphi^\pi(g_1) ... \varphi^\pi(g_n) \Omega| n \in \mathbb{N}; g_1, ..., g_n \in C_c^\infty(M;\mathbb{R)}\}$ is dense in the GNS Hilbert space $\mathcal{H}$ and consists of \emph{analytic vectors} for every $\varphi^\pi(f)$, so is in particular a common core, see Corollary 4.10 in~\cite{petz1990invitation}.

In fact, the complex $*$-algebra spanned by operators $\varphi^\pi(f)$ on $D_\varphi^\omega$ for $f \in C_c^\infty(M;\mathbb{R})$ is a representation of the field-$*$-algebra of \Sect\ref{sec:field_algebra}. The property that (for quasi-free $\omega$) $D_\varphi^\omega$ is dense in $\mathcal{H}$ is equivalent to the fact that $\Omega$ is a cyclic vector for the field-$*$-algebra. In other words, the GNS representation of a quasi-free state on the $C^*$-algebra $\mathcal{A}$ carries a representation of the field-$*$-algebra that is equivalent to the GNS representation of the field-$*$-algebra in the quasi-free state with the same two-point function.

\end{document}